\documentclass[12pt]{article}
\usepackage[utf8]{inputenc}
\usepackage[T1]{fontenc}
\usepackage[sc]{mathpazo}
\usepackage[margin=1in]{geometry}
\usepackage{amsmath, amssymb, amsfonts, mathrsfs, bbm, xfrac,amsthm}

\usepackage[table]{xcolor}
\definecolor{lightshade}{gray}{0.95}
\definecolor{textgray}{gray}{0.4}
\definecolor{textdarkgray}{gray}{0.3}

\usepackage{graphicx, epstopdf, pdfpages}
\usepackage{float}
\usepackage[position=top]{subfig}
\usepackage{lscape}
\usepackage[onehalfspacing]{setspace}

\usepackage{tabularx, booktabs, threeparttable, tablefootnote, siunitx}
\usepackage{algorithm, algorithmic, chngcntr, comment, footmisc, natbib}

\usepackage[labelfont=bf]{caption}
\newtheorem{theorem}{Theorem}

\newtheorem{proposition}[theorem]{Proposition}

\newtheorem*{remark*}{Remark}

\usepackage[colorlinks=true, urlcolor=blue, linkcolor=blue, citecolor=blue]{hyperref}

\makeatletter
\def\proceduretype{}
\newcommand\PROCEDURE[3][default]{%
  \def\proceduretype{#1}%
  \ALC@it
  \textbf{#1}\ \textsc{#2} {#3}%
  \begin{ALC@prc}%
}
\newcommand\ENDPROCEDURE{%
  \end{ALC@prc}%
  \ifthenelse{\boolean{ALC@noend}}{}{%
    \ALC@it\algorithmicend\ \textbf{\proceduretype}%
  }%
}
\newenvironment{ALC@prc}{\begin{ALC@g}}{\end{ALC@g}}
\makeatother

\begin{document}

\title{\textbf{Fiscal Limits to Protectionism: \\The 2025 U.S.\ Tariff Laffer Curve}\thanks{We would like to thank Jevan Cherniwchan, Oliver Loertscher, Adam Lavecchia, and Tim Kehoe for insightful feedback on earlier results of this paper.  Pujolas thanks SSHRC for Insight Grants 435-2021-0006, 435-2024-0339, and for the Partnership to Study Productivity, Firms and Incomes. Pujolas: pujolasp@mcmaster.ca; Rossbach: Jack.Rossbach@georgetown.edu.} }
\author{  \textbf{Pau S. Pujolas}  \\ \small{McMaster University} \and \textbf{Jack Rossbach} \\ \small{Georgetown University in Qatar}}

\date{
\begin{tabular}{ccc}
&&\\
&February 2026&
\end{tabular}}
\maketitle
\thispagestyle{empty}

\begin{abstract}
\noindent
We quantify the U.S. tariff Laffer curve using a multi-sector Ricardian model calibrated to the 2025 U.S. trade war. We find revenue-maximizing tariffs of 20--30 percent and welfare-maximizing rates of 0--10 percent. We introduce the Marginal Fiscal Efficiency Index to partition tariffs into welfare-improving, trade-off, and revenue-decreasing regions. Expanding the trade war to more partners raises peak tariff revenue regardless of retaliation, whereas coordinated retaliation sharply erodes welfare. By January 2026, 20 percent of U.S. tariff rates exceed their Laffer peaks. Inverse-optimum estimation reveals diminished U.S. concern for foreign welfare, punitive treatment of China, and rising revenue motives.
\bigskip
\flushleft
\textbf{Keywords}: Laffer Curve, Tariffs, Applied General Equilibrium, International Trade
\newline
\textbf{JEL\ Codes}: F11, F13, F14, F17
\end{abstract}

\newpage
 \pagenumbering{arabic}

\section{Introduction}
\label{sec:introduction}

In 2025, the United States launched a global trade war, raising import tariffs on most major trading partners. The costs of such policies are well understood. Tariffs distort relative prices (\citealt{FajgelbaumGoldberg2024}), protect inefficient domestic producers (\citealt{Trefler2004}), disrupt global value chains (\citealt{GrossmanHelpman2024}), and increase the cost of imported products (\citealt{CavalloGopinath2021}). Yet tariffs also raise government revenue (\citealt{santacreu2023tariffs}; \citealt{Lashkaripour2021}), with the cost of revenue generation partially borne by foreign countries (\citealt{johnsonOptimumTariffsRetaliation1953}; \citealt{pujorossbach2024}). For a policymaker focused on fiscal objectives, the central question is how far tariff increases can raise revenues before the contraction of the import base begins to erode them (\citealt{amitiImpact2018Tariffs2019}).

 The relationship between tariff revenues and tariff rates can be displayed intuitively through the tariff Laffer curve.  The popularization of the Laffer curve dates back to \citet{lafferwaniski}, and the concept has been studied widely within public finance \citep{harberger1964waste,trabandtuhlig2011}. Despite its intuitive appeal, historically it was rarely applied to trade policy, with recent exceptions including \citet{alessandriamixfriends} and \citet{HEAD2024103911}.  Classical trade theory emphasizes distinct motives behind tariffs outside of revenue, including improving the terms of trade \citep{johnsonOptimumTariffsRetaliation1953}, protection for sale \citep{grossmanhelpman}, and political effectiveness \citep{Fajgelbaum2019}. 

We complement this work by structurally quantifying tariff Laffer curves and their associated welfare trade-offs using the multi-sector Ricardian general equilibrium model developed by \citet{caliendoEstimatesTradeWelfare2015}.  We calibrate the model to study the 2025 U.S.\ trade war and compute tariff Laffer curves for the United States across trading partners and retaliation scenarios and formalize the relationship between marginal changes in welfare and revenue through an index we call the Marginal Fiscal Efficiency Index (MFEI). The MFEI varies from $-1$ to $1$ and partitions the tariff schedule into qualitatively distinct regions corresponding to below, between, and beyond the welfare and Laffer peaks.

Our analysis connects to a rapidly growing literature evaluating the macroeconomic consequences of recent global trade wars. While \citet{Fajgelbaum2019}, \citet{CavalloGopinath2021}, and \citet{amitiImpact2018Tariffs2019} document the price and welfare effects of the 2018--2019 U.S.-China trade war, a newer wave of research examines the sweeping 2025 escalation. For instance, \citet{NBERw33792} evaluate the dynamic impacts of the 2025 trade war across U.S.\ states, focusing on long-run capital accumulation and sectoral reallocation, while \citet{alessandriamixfriends} explore the role of tariffs as a revenue-generating tax cut. We complement this work by systematically mapping the static fiscal limits of protectionism via the tariff Laffer curve.

Quantitative studies have measured the magnitude of optimal tariffs and their welfare effects \citep{broda2008optimal,costinot2015comparative,pujorossbach2024}. A central benchmark in this literature is \citet{Ossa2014}, who finds optimal tariffs greater than 60 percent.  We find bilateral revenue-maximizing tariff rates for the United States between 20 and 30 percent, and welfare-maximizing rates between 0 and 10 percent, varying by trading partner and the partner's response.  Our lower rates are partly due to differences in trade elasticities, but mostly due to our full incorporation of general equilibrium and third-country effects.\footnote{If we consider bilateral changes in isolation, we reproduce both significantly higher peak rates and non-responsiveness of these rates to retaliation. See Supplemental Appendix \ref{app:robustness_elasticities}.} At peak rates, tariff revenue increases monotonically as the U.S.\ targets additional partners, even with retaliation.  Conversely, U.S.\ welfare increases only without retaliation; with retaliation, the U.S.\ experiences escalating welfare losses as the number of trading partners increases.  

We apply these findings to the tariff changes observed over the course of the 2025 trade war, using a new dataset of bilateral product-level tariff actions from January 2025 through January 2026 compiled from the WTO-IMF Tariff Tracker. The average U.S.\ import (export) tariff rose from 2.0 (6.0) percent in 2016 to 19.5 (9.0) percent by January 1, 2026, implying an additional \$364.0 billion in U.S.\ tariff revenue, \$110.0 billion in U.S.\ welfare losses, and \$394.1 billion in foreign welfare losses. 

We use an inverse-optimum (revealed-preference) approach to rationalize the observed tariff schedule. Similar approaches were used in the context of tariffs to estimate the relative weights on welfare versus political lobbying contributions \citep{goldbergmaggi1999,gawande2000} and in public finance to infer government preferences over redistributive taxation \citep{Saez2001,lockwood2017}. Complementing \citet{NBERw31798} and \citet{NBERw34658}, we use the variation in 2025 tariffs across partner-sector pairs to separately identify the marginal value placed on tariff revenue and the weights placed on partner welfare. Empirically, we find an implied weight on Chinese welfare turns sharply negative during 2025, consistent with punitive tariff-setting not explained by domestic welfare or revenue considerations alone.

\section{Model}
\label{sec:model}

We quantify the welfare and tariff-revenue effects of changes in tariff rates using the multi-country, multi-sector Ricardian trade model with input-output linkages from \citet{caliendoEstimatesTradeWelfare2015}. 

All markets are perfectly competitive, and households have Cobb-Douglas preferences over composite sectoral final goods, which aggregate a continuum of varieties within each sector via a constant elasticity of substitution (CES) technology. Each variety is produced using a Cobb-Douglas technology combining labor and composite intermediate inputs from all sectors, where composite intermediate inputs also aggregate varieties via the same CES technology. Each destination sources each variety from the lowest-cost supplier, where unit costs depend on wages, productivities, bilateral iceberg trade costs, and bilateral ad-valorem tariff rates. 

Productivities for each variety follow a Fr\'echet distribution as in \citet{eatonTechnologyGeographyTrade2002}, with country-sector-specific location parameters governing absolute advantage and sector-specific shape parameters equal to the trade elasticity \citep{SIMONOVSKA201434}. The combination of CES aggregation and Fr\'echet productivities implies that bilateral expenditure shares absorb the underlying productivity parameters, so that counterfactual analysis requires only observed trade shares and elasticities. We work in exact hat algebra \citep{dekle2008global}, which expresses counterfactual equilibria in proportional changes and avoids estimating unobserved productivities or iceberg trade costs.

There are $N$ countries indexed by $i,j$ and $S$ sectors indexed by $s,k$. Households in country $i$ have Cobb-Douglas preferences over sectoral final goods with expenditure shares $\{a_i^s\}_{s=1}^S$ satisfying $\sum_s a_i^s=1$. Production in sector $s$ and country $i$ combines labor and composite intermediate inputs with a Cobb-Douglas technology, where $\gamma_i^s$ is the value-added share and $\gamma_i^{k,s}$ is the intermediate input share from sector $k$ used in sector $s$, with $\gamma_i^s+\sum_{k=1}^{S}\gamma_i^{k,s}=1$. Shipping from exporter $j$ to importer $i$ in sector $s$ incurs an iceberg cost $\delta_{ij}^s$ and an ad-valorem tariff factor $\tau_{ij}^s\ge 1$, where $\tau_{ij}^s=1.25$ corresponds to a 25 percent tariff rate.

Let $X_{ij}^s$ denote importer $i$'s expenditure on sector-$s$ goods from origin $j$ at tariff-inclusive purchaser prices, and define total sectoral absorption $X_i^s\equiv \sum_{j=1}^N X_{ij}^s$. Bilateral expenditure shares are
\begin{equation}
\pi_{ij}^s \equiv \frac{X_{ij}^s}{X_i^s}.
\label{eq:shares}
\end{equation}

To compute counterfactual equilibria, we work in exact proportional changes. For any variable $x$, let $\hat{x}\equiv x'/x$ denote the ratio of its counterfactual value $x'$ to its baseline value $x$. Counterfactuals are indexed by tariff changes $\{\hat{\tau}_{ij}^s\}$, so that $\tau_{ij}^{s\prime}=\hat{\tau}_{ij}^s \tau_{ij}^s$. We hold iceberg costs fixed ($\hat{\delta}_{ij}^s=1$) and hold fundamental productivities fixed. The changes in unit costs and sectoral price indices satisfy
\begin{equation}
\hat{c}_i^s = \hat{w}_i^{\gamma_i^s} \prod_{k=1}^{S} (\hat{P}_i^k)^{\gamma_i^{k,s}}, \qquad \hat{P}_i^s = \left[ \sum_{j=1}^{N} \pi_{ij}^s \bigl( \hat{c}_j^s \hat{\tau}_{ij}^s \bigr)^{-\theta^s} \right]^{-1/\theta^s},
\label{eq:hat_prices}
\end{equation}
where $\theta^s$ is the sectoral trade elasticity. Counterfactual trade shares follow directly:
\begin{equation}
\pi_{ij}^{s\prime} = \frac{\pi_{ij}^s \bigl( \hat{c}_j^s \hat{\tau}_{ij}^s \bigr)^{-\theta^s}}{\sum_{m=1}^{N} \pi_{im}^s \bigl( \hat{c}_m^s \hat{\tau}_{im}^s \bigr)^{-\theta^s}}.
\label{eq:hat_trade}
\end{equation}
Note that baseline shares $\{\pi_{ij}^s\}$ absorb technology levels and other cost shifters that are not observed in the data.

General equilibrium imposes goods-market clearing, input-output accounting, factor-market clearing, and a country budget constraint. Let $X_i^{s\prime}$ denote counterfactual sector-$s$ absorption in country $i$ at purchaser prices, and let $Y_j^{s\prime}$ denote counterfactual sector-$s$ gross output in country $j$. Goods-market clearing requires
\begin{equation}
Y_j^{s\prime} = \sum_{i=1}^{N} \frac{\pi_{ij}^{s\prime} X_i^{s\prime}}{\tau_{ij}^{s\prime}},
\label{eq:goods_mkt}
\end{equation}
where dividing by $\tau_{ij}^{s\prime}$ converts tariff-inclusive expenditure into tariff-exclusive revenue received by the exporter. Sectoral absorption decomposes into final and intermediate demand,
\begin{equation}
X_i^{k\prime} = a_i^k I_i' + \sum_{s=1}^{S} \gamma_i^{k,s} Y_i^{s\prime},
\label{eq:absorption}
\end{equation}
and labor-market clearing implies
\begin{equation}
w_i' L_i = \sum_{s=1}^{S} \gamma_i^s Y_i^{s\prime}.
\label{eq:factor_mkt}
\end{equation}
Tariff revenue rebated to households in country $i$ is
\begin{equation}
T_i' = \sum_{s=1}^{S} \sum_{j=1}^{N} \left(\frac{\tau_{ij}^{s\prime}-1}{\tau_{ij}^{s\prime}}\right) \pi_{ij}^{s\prime} X_i^{s\prime},
\label{eq:tariff_revenue}
\end{equation}
and total household income is
\begin{equation}
I_i' = w_i' L_i + T_i' + D_i,
\label{eq:income}
\end{equation}
where $D_i$ is an exogenous nominal transfer that pins down the aggregate trade imbalance, held fixed in units of the num\'eraire.

We solve for wage changes $\{\hat{w}_i\}$ such that \eqref{eq:hat_prices}--\eqref{eq:income} hold jointly, normalizing the U.S.\ wage as the num\'eraire. We solve the system by fixed-point iteration on counterfactual wage changes until labor markets clear in all countries. Welfare is measured as the change in real income,
\begin{equation}
\hat{W}_i = \frac{\hat{I}_i}{\hat{P}_i}\;, \; \text{where} \; \hat{P}_i \equiv \prod_{s=1}^{S} (\hat{P}_i^s)^{a_i^s}.
\label{eq:welfare}
\end{equation}

\section{Data and Calibration}
\label{sec:calibration}

We assemble a novel dataset of bilateral applied tariff rates for the 2025 U.S.\ trade war by combining three sources. Our primary source is the WTO-IMF Tariff Tracker, which records implemented tariff actions and baseline (January 1, 2025) duties at the bilateral HS 6-digit product-code level.\footnote{\url{https://ttd.wto.org/en/analysis/tariff-actions}.} The tracker provides import tariff rates for all partners of each reporting country, but does not include the tariffs imposed on the exports of reporting countries by non-reporting partner countries. The database includes implemented tariff actions that have been confirmed by the affected importer and exporter, and does not include threatened or announced tariff actions that were never implemented.  We take baseline pre-trade-war tariff rates for non-reporting pairs from the 2019 revision of the CEPII MacMap-HS6 database \citep{MAcMap}. The MacMap database does not include contingent or retaliatory measures from the 2018 U.S.\ trade war; we therefore add the tariff rate increases for U.S.\ imports and exports assembled by \citet{Fajgelbaum2019}. Combining these sources yields bilateral applied ad-valorem tariff rates for each HS6 product on January 1, 2025 and a sequence of implemented changes through January 2026.\footnote{In Supplemental Appendix \ref{app-sec:GTD}, we verify that our results are robust to using the alternative Global Tariff Database (GTD) used by \citet{NBERw33792}, which provides hand-coded tariff rate increases for a subset of the 2025 timeline.}

We calibrate the model to the pre-trade-war baseline using 2023 data from the Eora multi-region input-output tables \citep{Lenzen2012,Lenzen2013}.\footnote{\url{https://worldmrio.com/eora26/}.} We aggregate the Eora data to obtain bilateral trade flows and sectoral input-output linkages across ten traded goods-producing sectors and five non-traded service sectors for 16 regions plus a Rest-of-World composite.\footnote{The regions are: United States, Canada, Mexico, China, European Union (EU-27), United Kingdom, Japan, South Korea, Australia, Turkey, Switzerland, Norway, Brazil, India, Indonesia, and Vietnam.}\textsuperscript{,}\footnote{Traded goods sectors are: Agriculture and Fishing; Mining; Food and Beverages; Textiles; Wood and Paper; Petroleum and Chemicals; Metals; Electrical and Machinery; Transport Equipment; and Other Manufacturing including Recycling. Service sectors are: Utilities and Construction; Trade and Hospitality; Transport and Communication; Finance and Business Services; and Public and Other Services.} For each policy date $d$ between January 1, 2025 and January 1, 2026, we aggregate bilateral HS6 applied tariff factors to Eora sectors using simple averages to obtain $\tau_{ij}^s(d)$, and define the tariff shock as $\hat{\tau}_{ij}^s(d) \equiv \tau_{ij}^s(d) / \tau_{ij}^s(\text{Jan 1, 2025})$.  

We calibrate preferences and production shares directly from the input-output accounts. Let $F_i^s$ denote final absorption of sector $s$ in country $i$, let $M_{ki}^s$ denote intermediate purchases from sector $k$ used in sector $s$ in country $i$, and let $VA_i^s$ denote sector-$s$ value added. Defining gross output as $Y_i^s \equiv VA_i^s + \sum_{k=1}^{S} M_{ki}^s$, we set
\begin{equation}
a_i^s = \frac{F_i^s}{\sum_{r=1}^{S} F_i^r}, \qquad \gamma_i^s = \frac{VA_i^s}{Y_i^s}, \qquad \gamma_i^{k,s} = \frac{M_{ki}^s}{Y_i^s}.
\label{eq:calib_shares}
\end{equation}
We calibrate $w_i L_i$ to match total value added for each region. We take traded-goods trade elasticities $\{\theta^s\}$ from \citet{Chen2023Fragmentation} and set $\theta^s = 8.35$ for non-traded services.

Eora reports flows at basic prices (net of tariffs). We construct purchaser-price expenditures by applying baseline tariff factors to the observed basic-price flows $\tilde{X}_{ij}^s$:
\begin{equation}
X_{ij}^s = 
\begin{cases}
\tau_{ij}^s \, \tilde{X}_{ij}^s & \text{if } j \neq i, \\
\tilde{X}_{ii}^s & \text{if } j = i,
\end{cases}
\qquad
\pi_{ij}^s = \frac{X_{ij}^s}{\sum_{m=1}^{N} X_{im}^s}.
\label{eq:purchaser_price}
\end{equation}
We compute each country's aggregate trade imbalance as net imports valued at tariff-exclusive prices,
\begin{equation}
D_i = \sum_{s=1}^{S} \left( \sum_{j=1}^{N} \frac{\pi_{ij}^s X_i^s}{\tau_{ij}^s} - \sum_{j=1}^{N} \frac{\pi_{ji}^s X_j^s}{\tau_{ji}^s} \right),
\label{eq:deficit}
\end{equation}
and hold the aggregate trade balance $D_i$ fixed in nominal units of the num\'eraire country (the U.S.) wage in all counterfactuals.

\section{Quantitative U.S.\ Tariff Laffer Curves}
\label{sec:laffer}

We compute tariff Laffer curves and associated welfare curves for the United States, first in bilateral disputes and then as the scope of the trade war expands to multiple partners.  For all exercises, we interpret each tariff schedule as a long-run counterfactual held fixed.  Dollar amounts are annual changes in real income or tariff revenue, measured in 2025 USD.\footnote{The Eora database is in billions of 2023 USD.  We scale all numbers to 2025 USD by multiplying by the mean change in U.S.\ current price GDP between 2023 and 2025, \url{https://fred.stlouisfed.org/series/GDP}.}  We report the ad-valorem tariff rate, $t$, for each corresponding tariff factor,  $\tau \equiv 1 + t$.

\subsection{U.S.\ Bilateral Tariff Laffer Curves}
\label{sec:bilateral_laffer}
We calibrate the model to the January 1, 2025 tariff and trade-flow baseline (Section~\ref{sec:calibration}). For each partner $j$, we impose a uniform U.S.\ ad-valorem tariff rate $\tau$ on imports from $j$ in each of the ten traded-goods sectors, varying $t$ from 0 to 50 percent in one-percentage-point increments. All other bilateral tariffs are held fixed at their January 1, 2025 values.

To isolate the effect of the tariff level from heterogeneity in baseline bilateral duties, we report changes in U.S.\ tariff revenue and welfare relative to a partner-specific reference equilibrium with bilateral free trade between the United States and $j$ (i.e., $\tau_{US,j}^s=\tau_{j,US}^s=1$ for traded-goods sectors), holding all other tariff wedges at the January 1, 2025 baseline. Revenue and welfare changes are measured in billions of U.S.\ dollars, and include third-country effects.

For each U.S.\ tariff rate, $t$, we consider four partner responses: no response; equivalent response (the partner matches the U.S.\ tariff); revenue-maximizing response; and welfare-maximizing response. For the revenue- and welfare-maximizing responses, we compute the partner’s optimal uniform tariff on U.S.\ exports by searching over $\{0,0.1,\dots,50\}$ percent. As we step through U.S.\ tariffs, we initialize the search at the previous optimum and stop once the partner objective declines for both a $+0.1$ and a $-0.1$ deviation. This yields a fast, accurate approximation to the partner best-response correspondence along the policy grid.

Figure~\ref{fig:usa_laffer_eu_chn} shows the tariff Laffer curves (Panels a and c) and welfare curves (Panels b and d) for the European Union and China, respectively. For both partners, the U.S.\ revenue-maximizing tariff is in the mid-twenties and is only modestly affected by how the partner responds. Welfare peaks at much lower tariff rates and declines rapidly beyond the peak. A comparison across the panels reveals that retaliation reduces peak welfare more than it does peak revenue, with equivalent retaliation eliminating nearly all welfare gains while reducing peak U.S.\ revenue only modestly. The curves for China are qualitatively similar, but quantitatively distinctive in that the potential welfare changes are substantially greater for a U.S.\ bilateral trade war with China.

The figures also show that if the European Union and China were to implement bilateral welfare-maximizing or revenue-maximizing tariffs on imports from the United States, this would lead to welfare losses for the U.S.\ relative to the baseline bilateral free-trade equilibrium. These losses are partially offset if the U.S.\ raises tariffs in response.\footnote{Supplemental Appendix \ref{app:additional_laffers_baselines} shows the U.S.\ experiences small welfare gains relative to January 1, 2025 baseline tariff rates; consistent with \citet{pujorossbach2024}.}  

\begin{figure}[htbp]
    \centering
    \caption{U.S.\ Laffer and Welfare Curves: European Union and China \\ \emph{\small{Changes Relative to U.S.\ Bilateral Free Trade Baseline}}}
    \label{fig:usa_laffer_eu_chn}
    \vspace{0.5cm}
    \includegraphics[width=\textwidth]{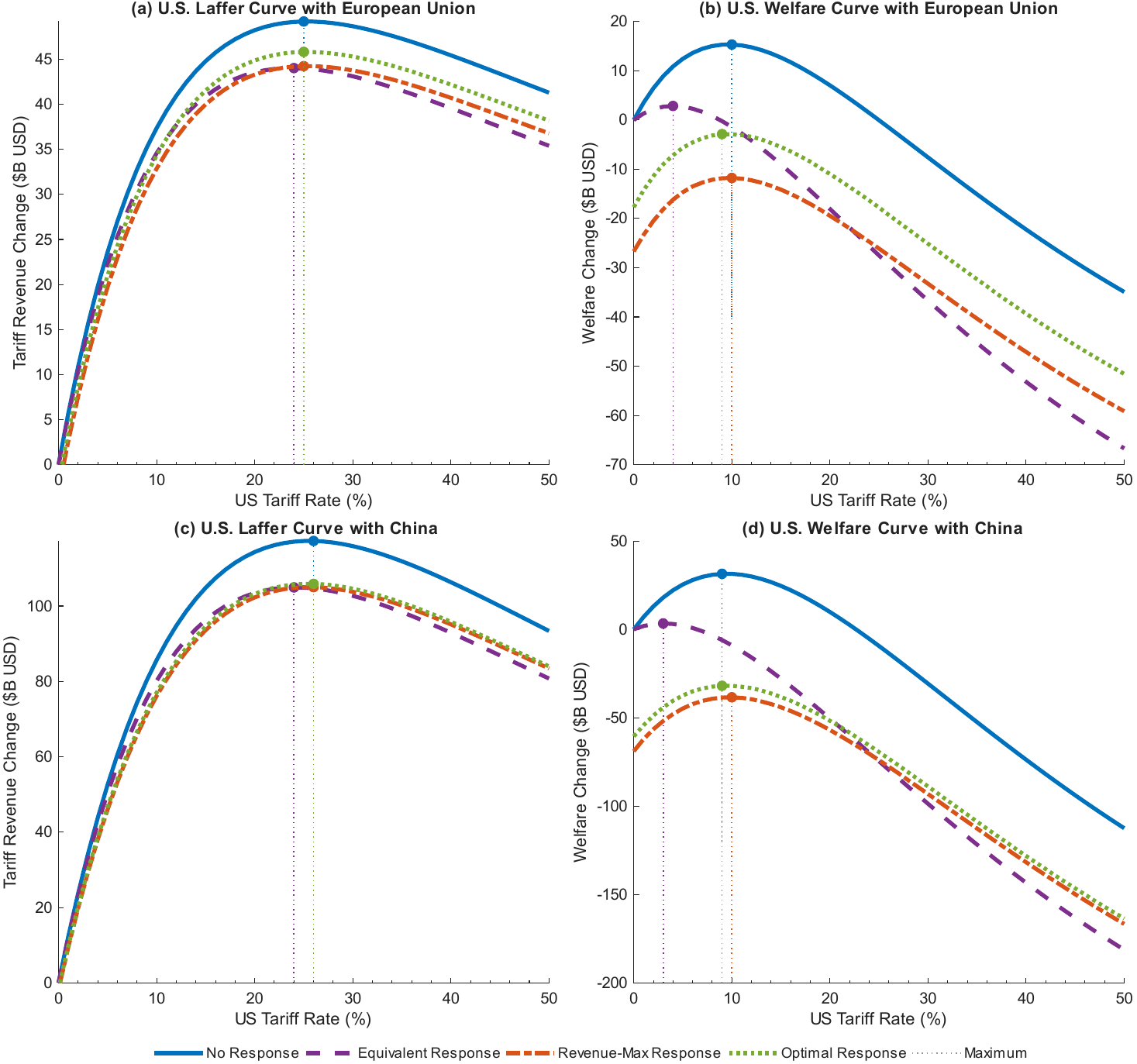}
\end{figure}

At the bilateral revenue peaks, welfare is already negative for both partners. Under no response, the E.U. Laffer peak occurs at a 25 percent tariff and yields \$51.7 billion in revenue and a near-zero change in U.S.\ welfare; the China Laffer peak occurs at 26 percent and yields \$123.2 billion in revenue but \textminus\$14.3 billion in U.S.\ welfare. Welfare-maximizing tariffs are much lower: 9 percent for China and 10 percent for the E.U., delivering gains of \$33.0 billion and \$16.1 billion, at their respective rates. Under equivalent retaliation, peak revenue falls but remains substantial (e.g., \$46.3 billion for the E.U. and \$110.3 billion for China), while welfare is near zero at tariffs under 5 percent and turns sharply negative as rates increase.

Across the 15 largest U.S.\ trading partners, bilateral welfare-maximizing tariffs cluster in the high single digits under no response, averaging 9.5 percent and ranging from 9 to 11 percent. Bilateral revenue-maximizing tariffs are higher, ranging from 21 to 31 percent and averaging 25.5 percent. At the bilateral Laffer peaks, partner retaliation is quantitatively more important for welfare than for revenue: under equivalent retaliation, peak U.S.\ tariff revenue falls from \$292.2 billion to \$257.8 billion, but U.S.\ welfare at those peaks becomes sharply negative for every partner.

\subsection{Marginal Excess Burden and the Marginal Fiscal Efficiency Index}
\label{sec:MEB}

The disparity between the welfare peak and the Laffer peak has a natural interpretation in the language of public finance. The marginal excess burden (MEB) measures the welfare cost per dollar of revenue raised by a marginal increase in a tax instrument \citep{harberger1964waste, ballard1985marginal, browning1987marginal}. We compute the tariff-policy analog:
\begin{equation}
\text{MEB}(\tau) \equiv -\frac{dW/d\tau}{dR/d\tau},
\label{eq:MEB}
\end{equation}
where $W$ denotes U.S.\ welfare (real income, including the lump-sum rebate of tariff revenue), $R$ denotes U.S.\ tariff revenue, and $\tau$ denotes the ad-valorem tariff rate. 

In the standard domestic public finance setting, the MEB is typically positive: raising an additional dollar of revenue destroys more than zero dollars in welfare because of deadweight loss.\footnote{The possibility of a negative MEB has precedent in the domestic tax literature where tax instruments raise revenue while correcting a preexisting distortion.  For example, \citet{fullerton1991reconciling} considers an investment tax credit that corrects interasset distortions and \citet{Goulder1995} survey the ``double dividend'' hypothesis under which revenue-neutral environmental taxes incur gross benefits.} Tariffs in an international setting differ in an important respect. The welfare change $dW/d\tau$ reflects not only deadweight loss from domestic distortions but also terms-of-trade effects: a tariff that depresses the world price of imports transfers welfare from foreign exporters to domestic consumers \citep{BagwellGATT}. At low tariff rates, the terms-of-trade gain can exceed the deadweight loss, so the MEB is negative with domestic welfare rising alongside revenue \citep{johnsonOptimumTariffsRetaliation1953, pujorossbach2024}.

To evaluate whether a tariff is fiscally attractive, we compare $\text{MEB}_{\text{tariff}}(\tau)$ to the marginal excess burden of raising a dollar through the domestic tax system, denoted $\alpha$. A tariff is fiscally efficient (from a domestic standpoint) when $\text{MEB}_{\text{tariff}}(\tau)=\alpha$; increasing tariffs is fiscally desirable when $\text{MEB}_{\text{tariff}}(\tau)<\alpha$ and fiscally undesirable when $\text{MEB}_{\text{tariff}}(\tau)>\alpha$ \citep{Dahlby2008}. 

We summarize this comparison using the Marginal Fiscal Efficiency Index (MFEI):
\begin{equation}
\text{MFEI}(\tau) \equiv \frac{dW/d\tau + \alpha \cdot dR/d\tau}{\,|dW/d\tau| + \alpha \cdot |dR/d\tau|\,}.
\label{eq:MFEI}
\end{equation}
The MFEI partitions tariff rates into three regions. When $dW/d\tau>0$ and $dR/d\tau>0$, $\text{MFEI}=1$ (a ``Free-Lunch'' region). When $dW/d\tau<0$ and $dR/d\tau<0$, $\text{MFEI}=-1$ (a ``Beyond-Laffer'' region). In between, $dW/d\tau<0<dR/d\tau$, $\text{MFEI}\in\left(-1,1\right)$ (a ``Trade-Off'' region).\footnote{In Supplemental Appendix \ref{app-sec:MFEI} we further decompose the ``Trade-Off'' region into fiscally-efficient and fiscally-inefficient trade-offs depending on $\alpha$.}

We set $\alpha=0.25$, following OMB Circular A-94 \citep{omb2023a94}. Figure~\ref{fig:MEB_no_response} plots the implied MEB and MFEI across partners. Under no response, tariffs begin in a Free-Lunch zone, enter a Trade-Off zone as welfare turns down, and eventually move beyond the Laffer peak. This three-region structure is qualitatively robust across response scenarios. Under equivalent retaliation, the Free-Lunch zone shrinks sharply, while the Laffer peak shifts only modestly lower, widening the range of tariff rates over which the U.S.\ government faces a genuine trade-off between revenue and welfare. The MFEI therefore provides a compact diagnostic of whether a given tariff lies in a welfare-improving, trade-off, or revenue-decreasing region.

\begin{figure}[htbp]
    \centering
    \caption{Marginal Welfare Cost Measures of U.S.\ Tariff Revenue \\ \emph{\small{Changes Relative to U.S.\ Bilateral Free Trade Baseline}}}
    \label{fig:MEB_no_response}
    \includegraphics[width=\textwidth]{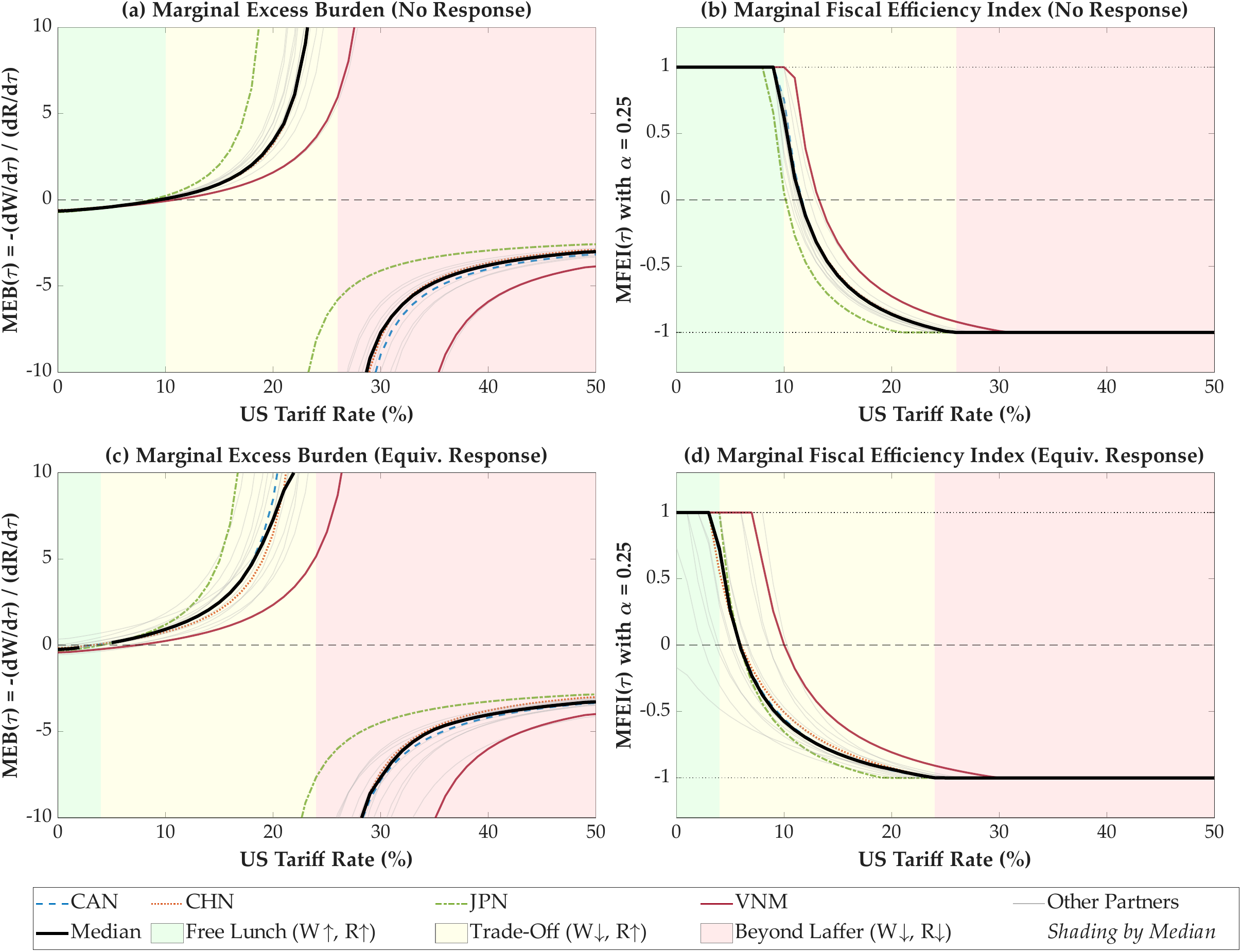}
\end{figure}

\subsection{Laffer Curves for a Trade War with Multiple Partners}
\label{sec:cumulative}

We next vary a uniform U.S.\ tariff against an expanding set of partners. We start with China and sequentially add partners (Canada, Mexico, then others ordered by their bilateral MEB at $\tau=0$; see the figure notes for the ordering).\footnote{Supplemental Appendix \ref{sec-app:cumulative_opt_response} shows the results are qualitatively similar with optimal partner responses, which we approximate via a single round of sequential optimization.}

Figure~\ref{fig:cum_Laffer_4panel} displays cumulative Laffer and welfare curves under two response scenarios: no response (panels a--b) and equivalent response (panels c--d) with each partner matching the U.S.\ tariff rate on its imports from the U.S.

\begin{figure}[htbp]
    \centering
    \caption{U.S.\ Laffer and Welfare Curves: Multiple Trading Partners \\ \emph{\small{Changes Relative to U.S.\ Free Trade Baseline with All Partners}}}
    \label{fig:cum_Laffer_4panel}
    \includegraphics[width=\textwidth]{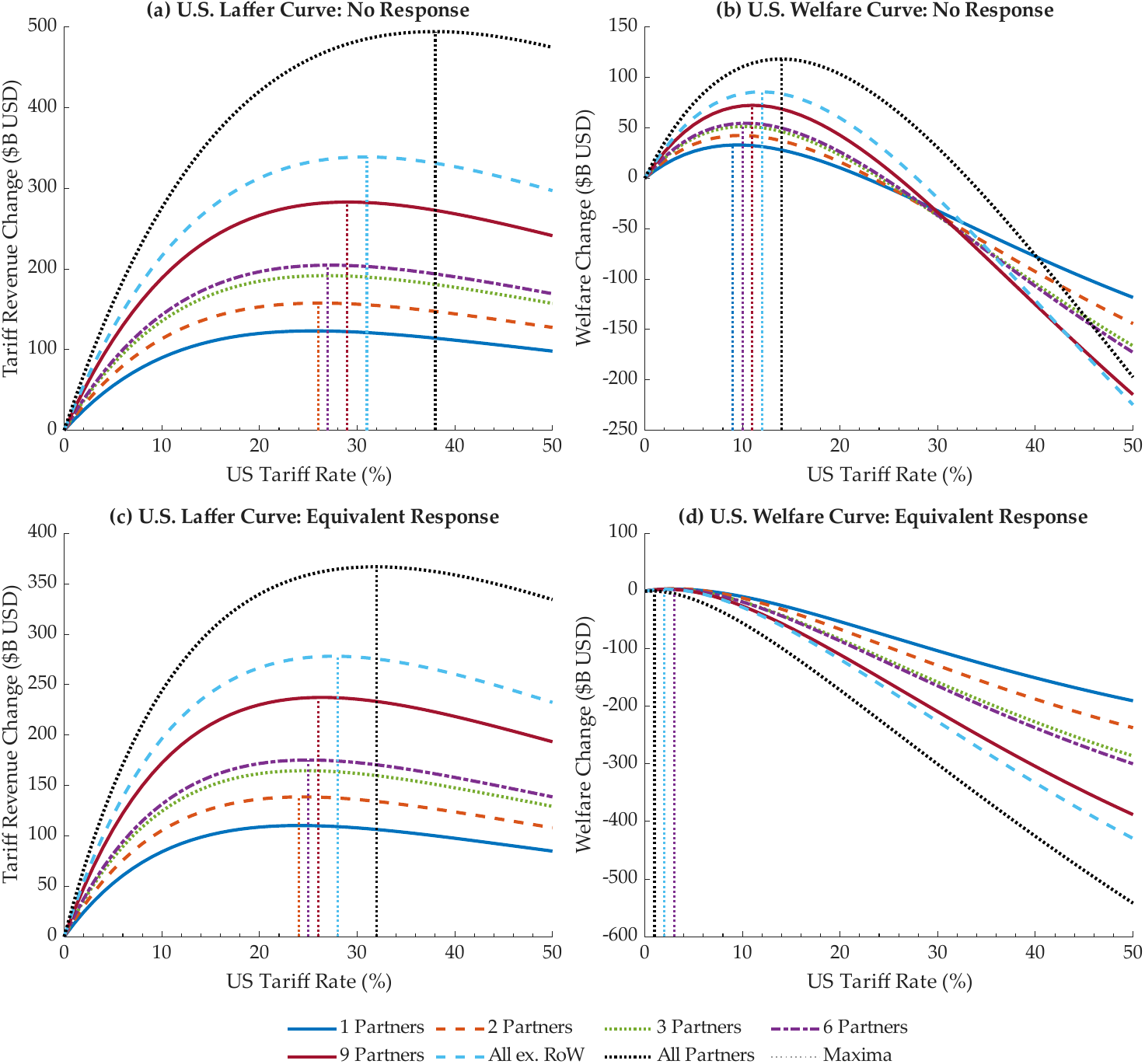}
    \vspace{0.5em}
    \begin{minipage}{\textwidth}
        \begin{singlespacing}
            \fontsize{8.5}{10}\selectfont
            \noindent\textit{Notes:} ``\# Partners'' indicates the number of partners on which the U.S.\ imposes tariffs and from which it potentially faces retaliation. Partner order: (1) CHN (2) CAN (3) MEX (4) VNM (5) TUR (6) KOR (7) IND (8) IDN (9) EUU (10) BRA (11) NOR (12) CHE (13) JPN (14) GBR (15) AUS (16) Rest of World.   Results robust to alternative orderings.
        \end{singlespacing}
    \end{minipage}
\end{figure}

Two patterns stand out from comparing the revenue and welfare panels. First, as shown in Panels (a) and (c), peak U.S.\ tariff revenue rises monotonically with the number of partners. In the no-response scenario (Panel a), the uniform-tariff Laffer peak shifts up as the U.S.\ expands the set of targeted partners, and peak revenue rises from \$123.2 billion in a bilateral dispute with China to \$494.4 billion in a global trade war. 

Second, Panels (b) and (d) demonstrate that welfare is far more sensitive to retaliation than revenue. Without retaliation (Panel b), a global trade war can generate positive U.S.\ welfare at low tariff rates through terms-of-trade gains, but welfare turns negative well before the revenue peak. Under equivalent retaliation (Panels c and d), the revenue implications are modestly attenuated, falling from \$494.4 billion to \$367.0 billion.  The welfare gains, however, vanish entirely with retaliation, with U.S.\ welfare losses for essentially all positive tariff rates (Panel d) and losses scaling with the number of retaliating trading partners. In a global trade war under equivalent retaliation, the welfare-maximizing tariff remains close to 0 and U.S.\ welfare losses exceed \$250 billion at the Laffer peak.

This asymmetry between the revenue and welfare effects of retaliation has a direct policy implication. A government that places substantial weight on tariff revenue faces strong incentives to escalate a trade war to additional partners regardless of whether those partners retaliate, because retaliation reduces peak revenue only modestly. A government that places weight primarily on welfare faces the opposite incentive: retaliation eliminates all welfare gains and generates losses that scale with the number of partners involved.

\section{The 2025 U.S.\ Trade War}
\label{sec:tradewar}

We now apply the framework to the sequence of tariff changes observed over the course of the 2025 U.S.\ trade war. Because tariffs changed frequently, and many measures were temporary, the model should be interpreted as describing the long-run impact if a given tariff schedule were held in place.

To facilitate comparison with the 2018 U.S.-China trade war and other benchmarks, we report counterfactual changes relative to 2016 applied tariffs from the MacMap database. We calibrate the model to January 1, 2025 quantities and compute the equilibrium implied by moving from January 1, 2025 tariffs to 2016 tariffs; we take this equilibrium as the 2016 baseline. For each policy date $d$, we then solve for the equilibrium under tariffs $\tau_{ij}^s(d)$ and report changes relative to 2016.

Table~\ref{tbl:us_tariff_timeline} reports average U.S.\ tariffs on imports and exports at each major policy date, together with model-implied changes in U.S.\ tariff revenue, U.S.\ welfare, and rest-of-world welfare. For each date, we report the effects of U.S.\ tariffs alone and the incremental effect of observed retaliation.

\begin{table}[h!]
\centering
\caption{Impact of U.S. Tariffs and Retaliation, Changes from Pre-Trade War Baseline}
\label{tbl:us_tariff_timeline}
\begin{tabular}{lrrrrrrrr}
\toprule
& \multicolumn{2}{c}{\textbf{Avg. USA}} & \multicolumn{2}{c}{\textbf{USA Tariff}} & \multicolumn{2}{c}{\textbf{USA }} & \multicolumn{2}{c}{\textbf{Rest of World}} \\
& \multicolumn{2}{c}{\textbf{Tariff Rate (\%)}} & \multicolumn{2}{c}{\textbf{Revenue ($\Delta$\$B)}} & \multicolumn{2}{c}{\textbf{Welfare ($\Delta$\$B)}} & \multicolumn{2}{c}{\textbf{Welfare ($\Delta$\$B)}} \\
\cmidrule(lr){2-3} \cmidrule(lr){4-5} \cmidrule(lr){6-7} \cmidrule(lr){8-9}
\textbf{Date} & \textbf{Imports} & \textbf{Exports} & \shortstack[c]{\textbf{U.S.} \\ \textbf{Only}} & \shortstack[c]{\textcolor{textgray}{\textbf{+/- w/}} \\ \textcolor{textgray}{\textbf{Retal.}}} & \shortstack[c]{\textbf{U.S.} \\ \textbf{Only}} & \shortstack[c]{\textcolor{textgray}{\textbf{+/- w/}} \\ \textcolor{textgray}{\textbf{Retal.}}} & \shortstack[c]{\textbf{U.S.} \\ \textbf{Only}} & \shortstack[c]{\textcolor{textgray}{\textbf{+/- w/}} \\ \textcolor{textgray}{\textbf{Retal.}}} \\
\cmidrule(lr){1-1} \cmidrule(lr){2-3} \cmidrule(lr){4-5} \cmidrule(lr){6-7} \cmidrule(lr){8-9}
\multicolumn{9}{l}{\rlap{\textit{2016 Baseline Tariffs (MacMAP) and 2019 Trade War Tariffs (MacMAP + \citealt{Fajgelbaum2019})}}} \\
2016 & 2.0 & 6.0 & 0.0 & \textcolor{textgray}{---} & 0.0 & \textcolor{textgray}{---} & 0.0 & \textcolor{textgray}{---} \\
2019 (w/Trade War) & 4.6 & 7.2 & 91.9 & \textcolor{textgray}{+1.6} & -28.3 & \textcolor{textgray}{+23.7} & -55.2 & \textcolor{textgray}{-1.8} \\
\addlinespace[0.5ex]
\midrule
\multicolumn{9}{l}{\textit{2025 Tariff Timeline (WTO-IMF  Tariff Tracker Database)}} \\
Jan 1 & 5.5 & 8.5 & 109.5 & \textcolor{textgray}{+0.0} & -46.9 & \textcolor{textgray}{+0.0} & -94.6 & \textcolor{textgray}{+0.0} \\
Mar 4 & 10.4 & 9.0 & 198.9 & \textcolor{textgray}{-1.2} & -129.7 & \textcolor{textgray}{-3.3} & -229.9 & \textcolor{textgray}{+2.8} \\
Apr 4 & 8.5 & 9.0 & 164.2 & \textcolor{textgray}{-1.2} & -111.6 & \textcolor{textgray}{-6.1} & -198.1 & \textcolor{textgray}{+4.3} \\
Apr 12 & 11.4 & 7.9 & 191.7 & \textcolor{textgray}{-8.0} & -286.4 & \textcolor{textgray}{-27.8} & -348.1 & \textcolor{textgray}{-61.9} \\
May 14 & 14.7 & 9.4 & 293.0 & \textcolor{textgray}{-6.0} & -116.3 & \textcolor{textgray}{-18.4} & -317.1 & \textcolor{textgray}{-4.9} \\
Aug 18 & 20.1 & 9.5 & 357.8 & \textcolor{textgray}{-7.3} & -136.6 & \textcolor{textgray}{-18.3} & -407.2 & \textcolor{textgray}{-4.4} \\
Jan 1 '26 & 19.5 & 9.0 & 369.1 & \textcolor{textgray}{-5.1} & -96.9 & \textcolor{textgray}{-13.1} & -383.2 & \textcolor{textgray}{-10.9} \\
\addlinespace[0.5ex]
\multicolumn{9}{l}{\textit{Free Trade Scenarios (Retaliation = Reciprocation)}} \\
USA & 0.0 & 0.0 & -88.6 & \textcolor{textgray}{+0.0} & -70.5 & \textcolor{textgray}{+158.0} & 86.2 & \textcolor{textgray}{-160.0} \\
Global & 0.0 & 0.0 & --- & \textcolor{textgray}{+0.0} & --- & \textcolor{textgray}{+139.1} & --- & \textcolor{textgray}{+117.4} \\
\addlinespace[0.5ex]
\multicolumn{9}{l}{\textit{Equivalent Response Scenarios: Increase Tariffs to Match Partner Rates}} \\
MacMAP 2016 & 5.6 & 6.0 & 125.9 & \textcolor{textgray}{+5.0} & -19.8 & \textcolor{textgray}{+44.4} & -68.7 & \textcolor{textgray}{-29.4} \\
Jan 1 & 8.8 & 8.6 & 216.9 & \textcolor{textgray}{-13.6} & 13.2 & \textcolor{textgray}{-50.8} & -195.2 & \textcolor{textgray}{+20.4} \\
\addlinespace[0.5ex]
\multicolumn{9}{l}{\textit{U.S. Welfare-Maximizing and Revenue-Maximizing Uniform Tariff Rates}} \\
Welfare, No Retal. & 15.7 & 8.8 & 374.0 & \textcolor{textgray}{---} & 20.7 & \textcolor{textgray}{---} & -268.4 & \textcolor{textgray}{---} \\
Welfare w/ Retal. & 5.6 & 8.7 & 109.5 & \textcolor{textgray}{-0.2} & -46.9 & \textcolor{textgray}{-2.1} & -94.6 & \textcolor{textgray}{+2.7} \\
Revenue, No Retal. & 38.0 & 8.9 & 516.2 & \textcolor{textgray}{---} & -152.9 & \textcolor{textgray}{---} & -561.3 & \textcolor{textgray}{---} \\
Revenue w/ Retal. & 32.0 & 32.0 & 511.5 & \textcolor{textgray}{-94.4} & -89.5 & \textcolor{textgray}{-190.8} & -496.9 & \textcolor{textgray}{+35.4} \\
\midrule\midrule
\multicolumn{9}{c}{\textbf{Welfare and Revenue Effects of January 1, 2026 Bilateral Tariffs vs 2016 Baseline}} \\
\midrule
\multicolumn{3}{l}{\textbf{Partner Name}} & \textbf{CHN} & \textbf{EUU} & \textbf{MEX} & \textbf{CAN} & \textbf{JPN} & \textbf{All} \\
\midrule
\multicolumn{3}{l}{US Tariff on Imports (\%)} & 42.2 & 15.8 & 8.6 & 7.3 & 18.6 & 19.5 \\
\multicolumn{3}{l}{$\Delta$ US Tariff Revenue (\$B)} & 92.4 & 146.2 & 136.7 & 135.0 & 119.7 & 364.0 \\
\multicolumn{3}{l}{$\Delta$ US Welfare (\$B)} & -159.0 & -46.4 & -40.7 & -42.1 & -48.2 & -110.0 \\
\multicolumn{3}{l}{$\Delta$ Partner Welfare (\$B)} & -262.3 & -29.1 & -6.1 & -14.5 & -8.1 & -394.2 \\
\addlinespace[0.5ex]
\midrule
\multicolumn{3}{l}{\shortstack[l]{US Welfare Cost \\\\per \$ of US Tariff Revenue}} & \$1.72 & \$0.32 & \$0.30 & \$0.31 & \$0.40 & \$0.30 \\
\midrule
\multicolumn{3}{l}{\shortstack[l]{Global Welfare Cost \\\\ per \$ of US Tariff Revenue}} & \$3.58 & \$1.18 & \$1.14 & \$1.11 & \$1.28 & \$1.39 \\
\bottomrule
\end{tabular}
\end{table}

By January 2026, the average U.S.\ import tariff rose from 2.0 percent (2016 baseline) to 19.5 percent and U.S.\ tariff revenue rose to \$364.0 billion ($= 369.1 - 5.1$). U.S.\ welfare is negative at every 2025 date, reaching \textminus\$96.9 billion (\textminus\$110.0 billion including observed retaliation). Retaliation has a modest effect on U.S.\ revenue but a larger effect on welfare, consistent with the asymmetry documented in Section~\ref{sec:cumulative}.

The table also reports uniform-tariff benchmarks. Under no retaliation, the welfare-maximizing uniform tariff is 15.7 percent and the revenue-maximizing uniform tariff is 38.0 percent.  The observed average tariff therefore lies between these benchmarks, but welfare is negative because the schedule is highly non-uniform across partners. The bilateral decomposition highlights China as an outlier: China’s average tariff is 42.2 percent---well above its bilateral Laffer peak---so reducing tariffs on China would raise both U.S.\ revenue and U.S.\ welfare. At the January 2026 schedule, each dollar of U.S.\ tariff revenue from China costs \$1.72 in U.S.\ welfare, compared to roughly \$0.30--\$0.40 for other major partners.  The disparity is even starker from a global perspective. The global welfare cost per dollar of U.S.\ tariff revenue from China is exceptionally high at \$3.58, whereas for other major partners it ranges from \$1.11 (Canada) to \$1.28 (Japan). Across all partners combined, each dollar of U.S.\ tariff revenue raised costs \$0.30 in U.S.\ welfare and \$1.39 in global welfare.

With equivalent retaliation, the welfare-maximizing uniform tariff falls from 15.7 to 5.6 percent (with welfare losses compared to 2016 tariffs) and the revenue-maximizing tariff remains elevated at 32 percent.  Notably, even with equivalent retaliation, revenue-maximizing tariffs produce \$417.1 billion in tariff revenue for the U.S.

As additional context, the free-trade benchmarks in the middle panel show that global free trade generates gains for both the United States and the rest of the world in the model, while unilateral U.S.\ free trade (holding foreign tariffs fixed) reduces U.S.\ welfare by eliminating the terms-of-trade component of existing tariff wedges. Average export tariffs are trade-weighted averages computed using the counterfactual trade shares implied by each scenario, so composition effects can shift the reported average even when statutory rates fall for some partners.

\section{Inferring the U.S.\ Government Objective Function}
\label{sec:inverse_optimum}

The results in Table~\ref{tbl:us_tariff_timeline} raise a natural question: what objective function rationalizes the observed tariff schedule?  Under the assumption that observed tariffs satisfy the first-order conditions of the government's optimization problem, we can recover the implied policy weights from the government. This strategy, referred to as an inverse-optimum approach, has been used in international trade to estimate the relative weight on welfare versus political lobbying contributions \citep{goldbergmaggi1999, gawande2000}, and in public finance to infer government preferences over redistributive taxation \citep{Saez2001, lockwood2017}, with recent examples including \citet{NBERw31798} and \citet{NBERw34658}.

\subsection{Government Objective Function and the Shadow Wedge}

We assume the U.S.\ government sets tariffs $\boldsymbol{\tau}_{US} \equiv \{\tau_{US,j}^s\}_{s \in S, j \neq US}$ to maximize
\begin{equation} \label{eq:gof}
G_{US}(\boldsymbol{\tau}_{US}) = W_{US}(\boldsymbol{\tau}_{US}) + \alpha \cdot T_{US}(\boldsymbol{\tau}_{US}) + \Omega_{US}(\boldsymbol{\tau}_{US}),
\end{equation}
where $W_{US}$ is U.S.\ welfare (consumption-equivalent variation in USD, including the lump-sum rebate of tariff revenue), $T_{US}$ is tariff revenue, and $\Omega_{US}(\cdot)$ is a shadow wedge that captures all other considerations. Because $W_{US}$ includes tariff revenue rebates to consumers, $\alpha$ should be interpreted as an additional marginal value of revenue beyond its contribution to domestic welfare, e.g., from using tariff revenue to relax a distortionary-tax financing constraint.

The first-order conditions imply
\begin{equation} \label{eq:foc}
\frac{\partial W_{US}}{\partial \tau_{US,j}^s} + \alpha \cdot \frac{\partial T_{US}}{\partial \tau_{US,j}^s} + \frac{\partial \Omega_{US}}{\partial \tau_{US,j}^s} = 0,
\end{equation}
and we define the marginal shadow wedge as
\begin{equation} \label{eq:omega_measured}
\omega_j^s \equiv \frac{\partial \Omega_{US}}{\partial \tau_{US,j}^s} = -\left( \frac{\partial W_{US}}{\partial \tau_{US,j}^s} + \alpha \cdot \frac{\partial T_{US}}{\partial \tau_{US,j}^s} \right).
\end{equation}
Under concavity, $\partial(W_{US}+\alpha T_{US})/\partial\tau>0$ implies tariffs are below the domestic economic optimum and $\partial(W_{US}+\alpha T_{US})/\partial\tau<0$ implies tariffs are above it. Because $\omega_j^s$ is the negative of this domestic marginal value, $\omega_j^s<0$ corresponds to restraint (forces pushing tariffs down relative to the domestic optimum) and $\omega_j^s>0$ corresponds to aggression (forces pushing tariffs up).

We compute $\omega_j^s$ by perturbing each tariff factor by $\Delta(\tau_{US,j}^s) = 0.001$ and solving for the full general-equilibrium response:
\begin{equation}
    \omega_j^s \approx  \frac{-\Delta\left(W_{US}\right) - \alpha \cdot \Delta\left(T_{US} \right)}{\Delta\left(\tau_{US,j}^s\right)}.
\end{equation}

To compute the shadow wedge, we must take a stand on $\alpha$. Estimating $\alpha$ jointly with the shadow wedge parameters is problematic when a large share of observed tariff-partner-sector combinations lie in either the Free-Lunch zone (where $\partial W_{US}/\partial \tau > 0$ and $\partial T_{US}/\partial \tau > 0$) or the Beyond-Laffer zone (where both derivatives are negative). An unconstrained estimation would attempt to rationalize tariffs in these zones by assigning a negative weight on tariff revenue, which is economically implausible. Attempting to bound or restrict the sample to the Trade-Off zone, where $\alpha$ is well identified, introduces selection effects across time as the share of observations in each zone shifts over the course of the trade war. 

We therefore adopt a straightforward approach and set $\alpha = 0.25$ based on OMB guidance \citep{omb2023a94}, consistent with Section~\ref{sec:MEB}. Fixing $\alpha$ avoids these unstable estimates and evaluates tariffs using the domestic economic component as a transparent benchmark, placing equivalent value on tariff revenue as revenue from any other federal tax instrument.

\subsection{Decomposing the Shadow Wedge}

We decompose the shadow wedge by assuming the U.S.\ government objective internalizes the effects of U.S.\ tariffs on foreign welfare.\footnote{For example: $\Omega_{US}(\boldsymbol{\tau}_{US}) = \gamma^W \cdot W_{-US}(\boldsymbol{\tau}_{US}) + \sum_{k \neq US} \gamma^{g(k)} \cdot W_k(\boldsymbol{\tau}_{US})$.} To estimate the extent of this internalization, we run the following regression:
\begin{equation}
\omega_{j,d}^s \;=\; \gamma_{r(d)}^W \left( \frac{\Delta W_{-US}}{\Delta \tau_{US,j}^s} \right) + \psi^{g(j)}_{r(d)} \left( \frac{\Delta W_j}{\Delta \tau_{US,j}^s} \right) + \kappa^{s}_{r(d)} + e_{j,d}^s,
\end{equation}
where each observation is a tariff perturbation indexed by sector $s$, partner $j$, and date $d$ during regime-period $r(d)$. $W_{-US}=\sum_{k\neq US} W_k$ is total non-U.S.\ welfare, and $g(j) \in \{\text{China, USMCA, Aligned, Not-Aligned}\}$ assigns partners to geopolitical groups using alignment scores from \citet{Baileyetal2017}.

The coefficient $\gamma^W_{r(d)}$ is the uniform weight placed on aggregate foreign welfare and $\psi^{g(j)}_{r(d)}$ is the targeted-partner's group-specific weight.  The double-counting is intentional and captures to what extent the U.S.\ internalizes the effects of tariffs on a targeted partner, beyond its effects on foreign welfare.  The sum $\gamma^W_{r(d)} + \psi^{g(j)}_{r(d)}$ represents the total effective weight the U.S.\ places on targeted-partner $j$'s welfare.   The $\kappa^{s}_{r(d)}$ are sector-by-regime fixed effects for $s \in \{\text{Metals, Transport}\}$, which receive differential tariff treatment in 2025; our results are robust to excluding these terms. Inference relies on 90\% confidence intervals from a wild cluster bootstrap (country-date-level clusters) using \citet{webb2023} weights.

Lastly, we estimate the average MEB, $\tilde{\alpha}_{r(d)}$, for tariffs in the Trade-Off zone, i.e., where $\frac{\Delta W_{US}}{\Delta \tau_{US,j}^s} < 0 < \frac{\Delta T_{US}}{\Delta \tau_{US,j}^s}$ as
\begin{equation}
\frac{-\Delta W_{US}}{\Delta \tau_{US,j}^s} = \tilde{\alpha}_{r(d)} \cdot \frac{\Delta T_{US}}{\Delta \tau_{US,j}^s} + u_{j,d}^s,
\end{equation}
where $u_{j,d}^s$ is an error term.  We do not use this estimate as a substitute for $\alpha=0.25$ as doing so would introduce time-varying selection-effects.\footnote{Supplemental appendix \ref{app-sec:inv_opt_alpha} shows our findings are robust to alternative values of $\alpha$.}

\subsection{Results}

Figure~\ref{fig:invopt_results} summarizes the inverse-optimum estimates across four tariff regime-periods, $r(d)\in\{\text{2010--2016; 2019; April--May 2025; August 2025--January 2026}\}$, systematically mapping the components of the government's implied objective function over time. 

\begin{figure}[htbp]
    \centering
    \caption{U.S.\ Inverse-Optimum Policy Weights}
    \label{fig:invopt_results}
    \includegraphics[width=\textwidth]{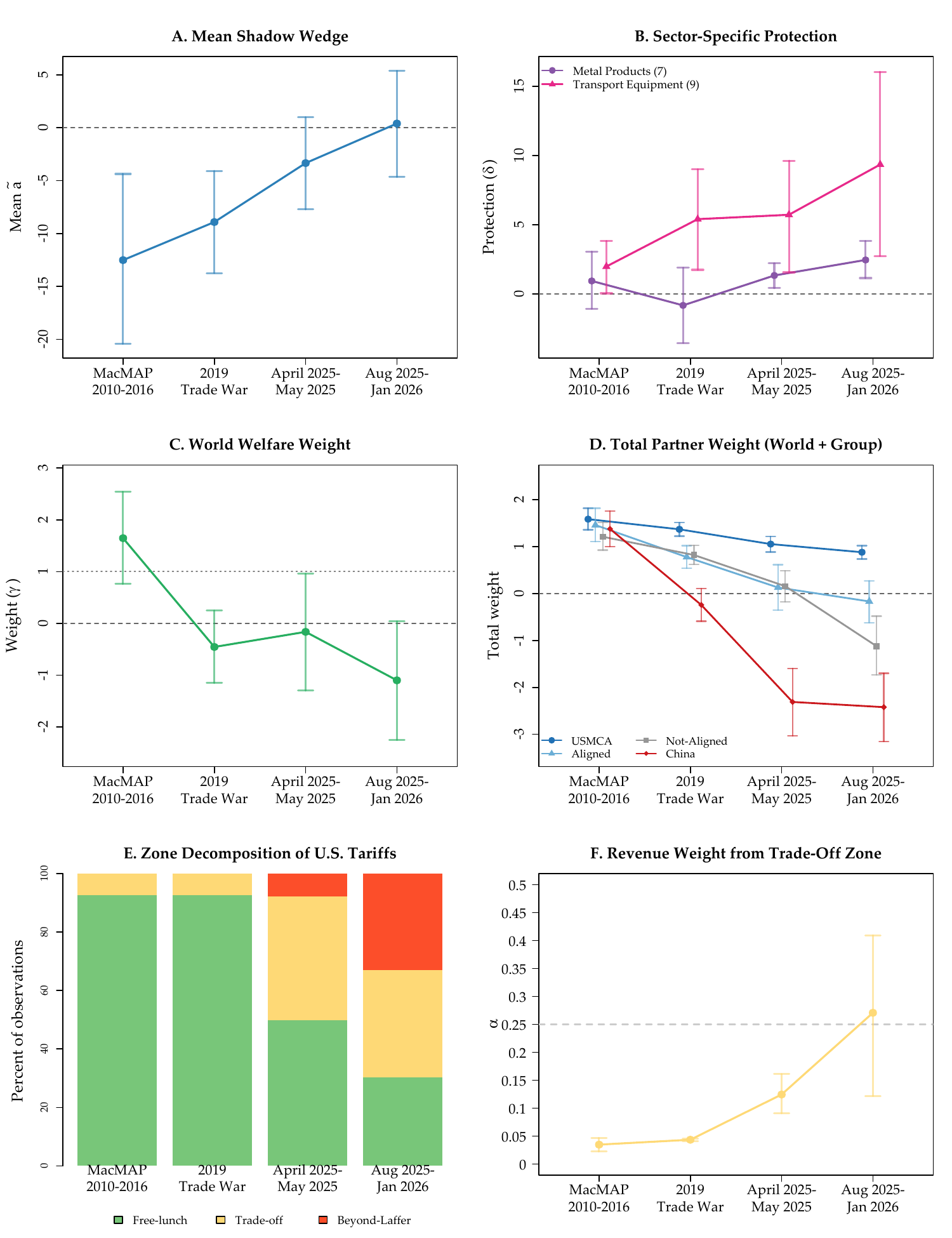}
\end{figure}

Panel~A reports the mean marginal shadow wedge ($\bar{\omega}$). This mean wedge shifts steadily toward aggression over time: tariffs move from being far below the domestically efficient levels in the pre-2019 baseline toward levels more consistent with, or exceeding, domestic welfare-revenue motives by late 2025. Panel~B isolates sector-specific protection wedges ($\kappa^s$) for strategic industries such as Metal Products and Transport Equipment, showing marked increases in protectionist motives beyond standard economic considerations.

Panel~C tracks the revealed baseline weight on foreign welfare ($\gamma^W$), which exceeded unity prior to 2019 (indicating the U.S.\ fully internalized the impact of tariffs on foreign welfare) but declines sharply during the 2025 trade war, falling below zero by late 2025. Panel~D decomposes the total effective weight on targeted-partner welfare ($\gamma^W + \psi^{g(j)}$), highlighting that while USMCA and aligned partners broadly track the global baseline, China receives a sharply negative group-specific weight. At its most negative, the U.S.\ was willing to sacrifice its own domestic objectives to reduce Chinese welfare, revealing a strong preference for punitive tariffs.

Panel~E connects these shifts back to our MFEI framework, decomposing observed tariffs into the three fiscal zones. It illustrates the escalating severity of the trade policy: while nearly all pre-2019 tariffs were in the Free-Lunch zone, the share of partner-sector tariffs residing in the Beyond-Laffer zone grew to more than 20 percent by late 2025. Finally, Panel~F presents the diagnostic estimates of the implied revenue weight ($\tilde{\alpha}_{r(d)}$) for tariffs residing strictly in the Trade-Off zone.\footnote{We exclude China as an outlier for 2019 Trade War tariffs, see Supplemental Appendix \ref{app-sec:inv_opt_timelines}.} Prior to 2025, the revealed weight on revenue was effectively zero. By late 2025, these estimates rise significantly, converging toward the OMB benchmark of 0.25, indicating an increased role for revenue motives in late-2025 tariff-setting.

These estimates should be interpreted as reduced-form wedges. A negative weight on a targeted partner's welfare may reflect punitive intent, but it may also absorb misspecification, adjustment costs, or omitted political-economy channels. Nonetheless, the time-series pattern is clear: by late 2025, the observed U.S.\ tariff schedule is difficult to rationalize without (i) diminished concern for foreign welfare, (ii) differential treatment of China, and (iii) a significantly higher marginal value placed on tariff revenue.

\section{Concluding Remarks}
\label{sec:conclusion}

This paper quantifies the fiscal and welfare limits of protectionism by computing the tariff Laffer curve for the United States during the 2025 trade war. We find that tariff revenues peak at rates between 20 and 30 percent, while welfare peaks between 0 and 10 percent. The gap between these two peaks defines the Trade-Off zone where the United States can raise revenue only at the cost of welfare. 

A central finding is the asymmetry between the revenue and welfare effects of retaliation. U.S.\ tariff revenue increases monotonically with the number of trading partners targeted and is only modestly reduced by retaliation. U.S.\ welfare, in contrast, is sharply sensitive to retaliation: if all trading partners match U.S.\ tariff rate increases, U.S.\ welfare gains vanish entirely and turn into large welfare losses above single-digit tariff rates. 

Our inverse-optimum analysis reveals a marked shift in U.S.\ trade policy objectives. In our data, many 2025 tariffs lie in the welfare-revenue Trade-Off zone, and China's tariffs lie well beyond its Laffer peak. Prior to 2025, tariffs were consistent with the government placing positive weight on foreign welfare and near-zero independent weight on tariff revenue. By late 2025, the baseline concern for foreign welfare had diminished, the revealed weight on tariff revenue had risen toward public-finance benchmarks, and the total weight on Chinese welfare had turned sharply negative. The polarization of total partner weights suggests a shift in the revealed objectives of U.S.\ trade policy that is consistent with tariff policy being wielded as an instrument of geoeconomic power \citep{ClaytonMaggioriSchreger2025macro,ClaytonMaggioriSchreger2026}.

Several extensions would strengthen these findings. Our model is static and abstracts from the dynamic adjustment of trade flows and reallocation \citep{NBERw33792, alessandriamixfriends}. The model also treats tariffs as uniform within each sector-partner pair, whereas the actual tariff schedule exhibits substantial product-level variation. Finally, decomposing our reduced-form shadow wedges into structural and strategic channels may provide a deeper account of the forces that shifted U.S.\ trade policy in 2025.

\clearpage

\linespread{1.0}
\bibliographystyle{style1}
\bibliography{bib_master}

\newpage
\thispagestyle{empty}
\vspace*{\fill}
\begin{center}
\Large\textbf{Supplemental Appendix:}
\end{center}
\vspace*{\fill}
\clearpage

\newpage
\appendix

\setcounter{table}{0}
\setcounter{figure}{0}
\renewcommand{\thetable}{\Alph{section}.\arabic{table}}
\renewcommand{\thefigure}{\Alph{section}.\arabic{figure}}
\renewcommand{\thesection}{\Alph{section}}
\renewcommand{\thesubsection}{\Alph{section}.\arabic{subsection}}

\newpage

\section{Marginal Excess Burden, Marginal Cost of Public Funds, and the MFEI}
\label{appendix:MCF_connection}

In this appendix, we formally connect our tariff-revenue metric, the Marginal Fiscal Efficiency Index (MFEI), to the standard public finance concepts of the marginal excess burden (MEB) and the marginal cost of public funds (MCF), following the taxonomy in \citet{Dahlby2008}. To theoretically ground our quantitative results, we derive explicit analytical expressions that link the boundaries of fiscal efficiency directly to the structural parameters of our general equilibrium model, specifically the trade elasticity. Finally, we formally establish the properties of the MFEI and show how it resolves the asymptotic discontinuity of the MEB at the Laffer peak.

\subsection{Definitions and Accounting Identities}

Consider a government that imposes an ad-valorem tariff $t$, which we express as a gross factor $\tau = 1 + t$. Let $V(\tau)$ denote aggregate private welfare (real income before any lump-sum rebate of tariff revenue) and $R(\tau)$ denote government tariff revenue. The Marginal Cost of Public Funds (MCF) is defined as the gross private welfare cost per additional dollar of revenue raised:
\begin{equation}
\text{MCF}(\tau) \equiv \frac{-dV/d\tau}{dR/d\tau}.
\label{eq:MCF_appendix}
\end{equation}
Of each dollar raised, one dollar is transferred mechanically from the private sector to the government. The Marginal Excess Burden (MEB) isolates the deadweight cost \emph{in excess} of this transfer:
\begin{equation}
\text{MEB}(\tau) \equiv \text{MCF}(\tau) - 1 = \frac{-(dV/d\tau + dR/d\tau)}{dR/d\tau}.
\label{eq:MEB_MCF_relation}
\end{equation}

In our general equilibrium model, total domestic U.S.\ welfare $W(\tau)$ (measured as real income) includes the lump-sum rebate of tariff revenue to households, such that the accounting identity $W(\tau) = V(\tau) + R(\tau)$ holds. Differentiating with respect to $\tau$ and substituting into the MEB definition yields a direct identity linking the MEB to the marginal change in total aggregate welfare:
\begin{equation}
\text{MEB}(\tau) = \frac{-dW/d\tau}{dR/d\tau}.
\label{eq:MEB_W}
\end{equation}

In standard domestic taxation, $-dW/d\tau$ strictly captures the deadweight loss (Harberger triangles) from the marginal tax increase, implying that $\text{MEB} > 0$. However, in an international setting, $-dW/d\tau$ captures both the domestic deadweight loss \emph{and} terms-of-trade transfers between countries. A tariff that depresses the world price of imports transfers welfare from foreign exporters to domestic consumers. At low tariff rates, this terms-of-trade gain can exceed the domestic deadweight loss, generating a positive marginal effect of the tariff on total welfare ($dW/d\tau > 0$). When revenue is also increasing ($dR/d\tau > 0$), Equation (\ref{eq:MEB_W}) implies that $\text{MEB}(\tau) < 0$. This negative excess burden is the tariff-policy equivalent of the classic optimal tariff argument \citep{johnsonOptimumTariffsRetaliation1953} and explains why our MEB curves take negative values in the ``win-win'', or ``Free-Lunch'' zone, a feature absent from standard domestic tax analysis.

\subsection{The Marginal Fiscal Efficiency Index (MFEI)}
\label{app-sec:MFEI}
Let $\alpha \ge 0$ denote the MEB of the existing domestic tax system---the welfare cost of the marginal dollar raised through broad-based distortionary instruments like income or payroll taxes. A tariff is fiscally neutral when its MEB aligns with the broader tax system, i.e., when
\begin{equation}
\text{MEB}_{\text{tariff}}(\tau^{FE}) = \alpha.
\label{eq:efficient_tariff}
\end{equation}
We calibrate $\alpha = 0.25$ following OMB Circular A-94 \citep{omb2023a94}. 

The government's unconstrained fiscal objective is formalized by the Lagrangian $\mathcal{L}(\tau) = W(\tau) + \alpha \cdot R(\tau)$. The first-order condition $d\mathcal{L}/d\tau = 0$ is exactly equivalent to the equalization of the MCF across instruments, $\text{MEB}_{\text{tariff}} = \alpha$. To verify this, observe that:
\begin{equation}
\label{eq:MEB_fiscal_eff}
\frac{dW}{d\tau} + \alpha \frac{dR}{d\tau} = 0 \quad \Longleftrightarrow \quad -\frac{dW/d\tau}{dR/d\tau} = \alpha \quad \Longleftrightarrow \quad \text{MEB}(\tau) = \alpha.
\end{equation}

A practical mathematical difficulty with plotting and analyzing the standard MEB across a wide grid of tariffs is that it exhibits an asymptotic discontinuity exactly at the Laffer peak (where $dR/d\tau \to 0$, causing $\text{MEB} \to \pm\infty$). To continuously map the entire global tariff schedule into a bounded metric, we construct the MFEI by normalizing the first-order condition using the $L^1$ norm of its components:
\begin{equation}
\text{MFEI}(\tau) \equiv \frac{dW/d\tau + \alpha \cdot dR/d\tau}{|dW/d\tau| + \alpha \cdot |dR/d\tau|}.
\end{equation}
We handle the edge case where welfare and revenue have the same optimum by defining $\text{MFEI}(\tau) = 0$ when the denominator is zero, i.e.,
\begin{equation}
\text{MFEI}(\tau) \equiv 0, \qquad \text{if} \qquad |dW/d\tau| + \alpha \cdot |dR/d\tau| = 0,
\end{equation}
indicating that the marginal effect of the tariff is completely fiscally neutral.

\begin{proposition}[Fiscal Regions of the MFEI]\label{prop:MFEI_Zones}
The MFEI maps the unbounded marginal excess burden into a bounded index $\text{MFEI}(\tau) \in [-1, 1]$ that cleanly partitions the tariff schedule into five distinct fiscal regions:
\begin{enumerate}
    \item \textbf{\text{Free-Lunch Zone,} $\text{MFEI} = 1$:} Occurs if and only if $dW/d\tau \ge 0$ and $dR/d\tau > 0$. Both aggregate welfare and tariff revenue are non-decreasing.
    \item \textbf{\text{Fiscally Efficient Trade-Off,} $\text{MFEI} \in (0, 1)$:} Occurs if and only if $dW/d\tau < 0$, $dR/d\tau > 0$, and $\text{MEB}(\tau) < \alpha$. The tariff imposes a private welfare cost, but its cost per dollar of revenue raised is strictly lower than that of the existing domestic tax system. The government is strictly better off raising marginal funds via the tariff.
    \item \textbf{\text{Fiscally Neutral Tariff} $\text{MFEI} = 0$:} Occurs if and only if (i) $dW/d\tau < 0$, $dR/d\tau > 0$, and $\text{MEB}(\tau) = \alpha$, i.e., the marginal welfare cost per dollar of tariff revenue perfectly matches the marginal excess burden of the domestic tax system; or (ii) $|dW/d\tau| + \alpha \cdot |dR/d\tau| = 0$.
    \item \textbf{\text{Fiscally Inefficient Trade-Off,} $\text{MFEI} \in (-1, 0)$:} Occurs if and only if $dW/d\tau < 0$, $dR/d\tau > 0$, and $\text{MEB}(\tau) > \alpha$. The tariff generates additional revenue, but its marginal welfare cost strictly exceeds that of the existing tax system. The government would be strictly better off raising these funds through domestic taxation.
    \item \textbf{\text{Beyond-Laffer Zone,} $\text{MFEI} = -1$:} Occurs if and only if $dW/d\tau < 0$ and $dR/d\tau \le 0$. The tariff strictly exceeds the revenue-maximizing rate; further tariff increases destroy both welfare and revenue.
\end{enumerate}
\end{proposition}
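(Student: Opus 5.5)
Throughout, write $a \equiv dW/d\tau$ and $b \equiv dR/d\tau$, so that $\text{MFEI} = N/D$ with $N \equiv a + \alpha b$ and $D \equiv |a| + \alpha |b|$, and $\text{MFEI}=0$ by convention when $D=0$. The proof is elementary sign bookkeeping. I will work with $\alpha>0$, which the calibration $\alpha=0.25$ satisfies. At $\alpha=0$ the claims can fail: for example $a=0<b$ gives $D=0$ and hence $\text{MFEI}=0$, not $1$. So the first step is to record that the statement should be read with $\alpha>0$.

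\emph{Step 1 (boundedness and the extremes).} The triangle inequality gives $|N| \le |a| + \alpha|b| = D$, so $\text{MFEI}\in[-1,1]$ whenever $D>0$. The value $0$ covers $D=0$. Equality $N=D$ holds iff $a-|a| + \alpha(b-|b|)=0$. Since both summands are nonpositive and $\alpha>0$, this happens iff $a\ge 0$ and $b\ge 0$, together with $D>0$ (i.e.\ not both zero). Symmetrically, $N=-D$ iff $a\le 0$, $b\le 0$ and $D>0$.

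\emph{Step 2 (the interior).} When $a<0<b$ we have $D>0$, and
$$N = b\bigl(\alpha - \text{MEB}(\tau)\bigr), \qquad \text{MEB}(\tau) = -a/b > 0,$$
using the identity \eqref{eq:MEB_W}. Because $b>0$, the sign of $\text{MFEI}$ is the sign of $\alpha-\text{MEB}$. Step 1 rules out $\pm1$ in this quadrant, since $a$ and $b$ have strictly opposite signs. This delivers items 2, 3(i) and 4, and it is exactly the equivalence \eqref{eq:MEB_fiscal_eff}. Item 3(ii) is the $D=0$ convention. I would also check that $N=0$ with $D>0$ forces $a$ and $b$ to have opposite signs, so for $\alpha>0$ it can only arise in the trade-off quadrant.

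\emph{Step 3 (matching the stated regions; the main obstacle).} Steps 1--2 show that the ``if'' direction of each item holds, and so do the ``only if'' directions of items 2--4. The obstacle is the exact ``iff'' form of items 1 and 5, together with exhaustiveness. Step 1 gives $\text{MFEI}=1$ iff $a\ge0,\ b\ge0$, not both zero. This includes the case $a>0=b$, which lies at the Laffer peak rather than in item 1's condition $b>0$. Likewise $\text{MFEI}=-1$ includes $a=0>b$, whereas item 5 requires $a<0$. More seriously, the quadrant $a>0>b$ (welfare rising, revenue falling) is in none of the five regions, and there $\text{MFEI} = (a-\alpha|b|)/(a+\alpha|b|)$ can take any value in $(-1,1)$.

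My plan is to state these boundary and off-quadrant cases explicitly rather than hide them. I would add one of two hypotheses. The first is to restrict to tariffs at or below the Laffer peak with $b\neq 0$, or beyond the Laffer peak with $a<0$. The second is to invoke the ordering used in the main text, that the welfare peak lies below the Laffer peak, which excludes $a>0>b$ and $a=0>b$. Under either hypothesis the five regions are mutually exclusive and exhaustive, and each ``iff'' follows from Steps 1--2.
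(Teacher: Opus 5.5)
Your core computation matches the paper's proof: the triangle-inequality bound, direct evaluation of the index on each listed sign pattern, and division by $dR/d\tau>0$ in the trade-off quadrant. The paper writes this as $\text{MFEI}=(\alpha-\text{MEB})/(\alpha+\text{MEB})$; you write it as $N=b(\alpha-\text{MEB})$.

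The difference is in what actually gets proved. The paper only checks the ``if'' half of each item. The ``only if'' halves would then follow for free, because the value sets $\{1\}$, $(0,1)$, $\{0\}$, $(-1,0)$, $\{-1\}$ are disjoint. But that requires the five sign conditions to exhaust all pairs $(a,b)$, and the paper takes this for granted. Your Step 1, which identifies the exact preimages of $\pm 1$, shows they do not:
\begin{itemize}
\item $a>0=b$ gives $1$;
\item $a=0>b$ gives $-1$;
\item $a>0>b$ belongs to no item.
\end{itemize}
You are also right that $\alpha>0$ is needed. The paper's claim that the denominator is positive ``because $\alpha\ge 0$'' fails at $\alpha=0$ with $a=0\neq b$, and its own trade-off step invokes $\alpha>0$. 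So your route buys a genuinely valid ``iff''/partition statement, at the price of an explicit hypothesis that the paper's proof silently relies on.

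There is one slip in your bookkeeping. The last sentence of Step 2 is false without that hypothesis, and so is the claim in Step 3 that Steps 1--2 already deliver the ``only if'' directions of items 2--4.
\begin{itemize}
\item $N=0$ with $D>0$ means $a=-\alpha b$. This holds at $a=\alpha|b|>0>b$, not only in the trade-off quadrant.
\item Your own remark that the index sweeps all of $(-1,1)$ on $\{a>0>b\}$ shows that the converses of items 2, 3 and 4 fail there, not just those of items 1 and 5.
\end{itemize}
So impose the hypothesis before claiming any converse. It is cleanest to state it directly on signs: $(a,b)\notin\{(a,b): a\ge 0,\ b\le 0\}\setminus\{(0,0)\}$. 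This also excludes $a>0=b$, which your list of what the peak ordering rules out omits. Note that ``welfare peak below Laffer peak'' delivers this condition only if $dW/d\tau$ and $dR/d\tau$ also have strict signs on either side of their respective peaks.
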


\begin{proof}
Because $\alpha \ge 0$, the denominator is strictly positive as long as either welfare or revenue is responsive to the tariff. By the triangle inequality, the index is strictly bounded in $[-1, 1]$. We proceed by evaluating the index algebraically across the regions:
\begin{itemize}
    \item \textbf{Zone 1:} If $dW/d\tau \ge 0$ and $dR/d\tau > 0$, both terms in the numerator are non-negative, meaning the absolute values in the denominator evaluate to the exact base values in the numerator. The ratio evaluates to exactly $1$.
    \item \textbf{Zones 2, 3, and 4 (The Trade-Off Regions):} Suppose $dW/d\tau < 0$ and $dR/d\tau > 0$. The absolute values evaluate to $|dW/d\tau| = -dW/d\tau$ and $|dR/d\tau| = dR/d\tau$. Dividing the numerator and denominator by $dR/d\tau > 0$ maps the relationship directly to the MEB:
    \begin{equation*}
    \text{MFEI}(\tau) = \frac{(dW/d\tau)/(dR/d\tau) + \alpha}{-(dW/d\tau)/(dR/d\tau) + \alpha} = \frac{\alpha - \text{MEB}(\tau)}{\alpha + \text{MEB}(\tau)}.
    \end{equation*}
    Because $dW/d\tau < 0$ and $dR/d\tau > 0$, we know $\text{MEB}(\tau) > 0$. Since $\alpha > 0$, the denominator $\alpha + \text{MEB}(\tau)$ is strictly positive. The sign and magnitude of the MFEI therefore depend entirely on the numerator:
    \begin{itemize}
        \item If $\text{MEB}(\tau) < \alpha$, then $0 < \alpha - \text{MEB}(\tau) < \alpha + \text{MEB}(\tau)$. Thus, $\text{MFEI}(\tau) \in (0, 1)$, establishing Zone 2.
        \item If $\text{MEB}(\tau) = \alpha$, then $\alpha - \text{MEB}(\tau) = 0$. Thus, $\text{MFEI}(\tau) = 0$. This perfectly recovers the first-order condition in equation \eqref{eq:MEB_fiscal_eff}, establishing Zone 3.  In the edge case where $|dW/d\tau| + \alpha \cdot |dR/d\tau| = 0$, then $\text{MFEI}(\tau) = 0$ by definition.
        \item If $\text{MEB}(\tau) > \alpha$, then $\alpha - \text{MEB}(\tau) < 0$. The ratio is negative, but since $|\alpha - \text{MEB}(\tau)| = \text{MEB}(\tau) - \alpha < \text{MEB}(\tau) + \alpha$, it is strictly greater than $-1$. Thus, $\text{MFEI}(\tau) \in (-1, 0)$, establishing Zone 4.
    \end{itemize}
    \item \textbf{Zone 5:} If $dW/d\tau < 0$ and $dR/d\tau \le 0$, both terms in the numerator are non-positive (and at least one is strictly negative). The absolute values in the denominator evaluate to the exact negative of the numerator. The ratio evaluates to exactly $-1$. Note that this limit is well-defined and continuous across the Laffer peak where $dR/d\tau \to 0$. \qedhere
\end{itemize}
\end{proof}

\begin{remark*}[Extension to Externality-Correcting Taxes and Subsidies]
The MFEI normalizes the government's fiscal objective by the $L^1$ norm of its components; therefore, the index symmetrically accommodates policies that expend revenue to generate welfare ($dW/dx > 0$ and $dR/dx < 0$), such as Pigouvian subsidies or externality-correcting taxes operating beyond their own Laffer peaks. 

No additional definition is required:
\begin{equation*}
\text{MFEI}(x) = \frac{dW/dx + \alpha \cdot dR/dx}{dW/dx - \alpha \cdot dR/dx} = \frac{\frac{dW/dx}{-dR/dx} - \alpha}{\frac{dW/dx}{-dR/dx} + \alpha},
\end{equation*}
where the second equality follows from dividing the numerator and 
denominator by $-dR/dx > 0$. This immediately mirrors the Trade-Off zone expressions in Proposition~\ref{prop:MFEI_Zones}.  Therefore, a revenue-decreasing policy is fiscally efficient $\left(\text{MFEI} \in (0,1)\right)$ if and only if its marginal welfare benefit per dollar of forgone revenue strictly exceeds the marginal excess burden of the domestic tax system required to replace those funds $\left(\frac{dW/dx}{-dR/dx} > \alpha\right)$.
\end{remark*}

\begin{proposition}[Limiting Behavior as the Revenue Weight Varies]\label{prop:MFEI_limits}
As the marginal excess burden of the domestic tax system ($\alpha$) varies, the MFEI smoothly encompasses the polar objectives of pure welfare maximization and pure revenue maximization:
\begin{enumerate}
    \item \textbf{Pure Welfare Maximization ($\alpha \to 0$):} $\lim_{\alpha \to 0} \text{MFEI}(\tau) = \text{sgn}\left(dW/d\tau\right)$ (for $dW/d\tau \neq 0$). The index evaluates the tariff solely based on its marginal welfare effect, placing zero weight on revenue. The Fiscally Efficient Tariff ($\text{MFEI}=0$) converges to the classical optimal tariff.
    \item \textbf{Pure Revenue Maximization ($\alpha \to \infty$):} $\lim_{\alpha \to \infty} \text{MFEI}(\tau) = \text{sgn}\left(dR/d\tau\right)$ (for $dR/d\tau \neq 0$). The index evaluates the tariff solely based on its marginal revenue effect, placing zero weight on private welfare costs. The Fiscally Efficient Tariff ($\text{MFEI}=0$) converges to the Laffer peak.
\end{enumerate}
\end{proposition}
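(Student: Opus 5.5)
The proof is a direct limit computation on the closed form of the index. Fix $\tau$ and write $w \equiv dW/d\tau$ and $r \equiv dR/d\tau$. Both are independent of $\alpha$, because $\alpha$ enters only the evaluation metric, not the equilibrium. Then $\text{MFEI}(\tau;\alpha) = (w+\alpha r)/(|w|+\alpha|r|)$ is a ratio of two affine functions of $\alpha$, and each part of the proposition reduces to reading off which term dominates.

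For part 1, assume $w\neq 0$. The denominator then tends to $|w|>0$ as $\alpha\to 0$, so it stays strictly positive near zero and the default value $\text{MFEI}\equiv 0$ never applies. Continuity of the quotient gives $\lim_{\alpha\to0}\text{MFEI} = w/|w| = \text{sgn}(w)$. For the claim about the fiscally efficient tariff, I would use the equivalence in \eqref{eq:MEB_fiscal_eff}: $\text{MFEI}(\tau)=0$ away from the degenerate case exactly when $w+\alpha r = 0$, i.e.\ when $\tau$ solves the first-order condition of $\mathcal{L}(\tau)=W+\alpha R$. As $\alpha\to0$ this condition converges to $dW/d\tau=0$, which defines the classical optimal tariff.

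For part 2, assume $r\neq0$. I would divide numerator and denominator by $\alpha>0$ to get $(w/\alpha + r)/(|w|/\alpha + |r|)$. This tends to $r/|r|=\text{sgn}(r)$, and the denominator is again bounded away from zero for large $\alpha$. Equivalently, the root condition $w/\alpha + r = 0$ tends to $dR/d\tau=0$, which is the Laffer peak.

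The main obstacle is the statement about convergence of the zero of the index, as opposed to its pointwise value. Pointwise convergence of the first-order condition does not by itself guarantee that the roots converge. I would handle this with a standard implicit-function argument on $F(\tau,\alpha)\equiv w(\tau)+\alpha r(\tau)$, which requires two assumptions:
\begin{enumerate}
\item the limiting problem is regular, meaning $d^2W/d\tau^2<0$ at the optimal tariff (respectively $d^2R/d\tau^2<0$ at the Laffer peak);
\item the roots $\tau^{FE}(\alpha)$ stay in a compact tariff interval, such as the grid range used in the paper.
\end{enumerate}
Near $\alpha=0$, $\partial F/\partial\tau\neq0$ at the optimal tariff, so the implicit function theorem gives a continuous root branch $\tau^{FE}(\alpha)$ through that point. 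For $\alpha\to\infty$, I would instead work with $\tilde F(\tau,\beta)\equiv\beta w+r$ with $\beta=1/\alpha\to0$. This has the same roots for $\beta>0$ and is smooth at $\beta=0$, so the same argument applies at the Laffer peak.
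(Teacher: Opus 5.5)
Your two limit computations are the paper's proof: it also drops the $\alpha$-terms as $\alpha\to0$ and divides through by $\alpha$ as $\alpha\to\infty$.

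The difference is in the claims about the zero of the index. The paper supports these only by noting that the limiting sign function jumps at the welfare peak and at the Laffer peak. You correctly observe that pointwise convergence of the index does not by itself locate its zero. You then supply an implicit-function argument, and the $\beta=1/\alpha$ reparametrization for the revenue limit is the right move. This gives an actual proof of the two ``converges to'' sentences. The cost is assuming $C^2$ objectives with nonzero curvature at the peaks.

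You can drop the curvature condition. On a compact interval $K$, $dR/d\tau$ is bounded, so $dW/d\tau+\alpha\,dR/d\tau\to dW/d\tau$ uniformly on $K$. A sign change of $dW/d\tau$ at $\tau^*_{\text{Welfare}}$ plus the intermediate value theorem then puts a root in any neighborhood of it for small $\alpha$. A positive lower bound on $|dW/d\tau|$ away from that neighborhood excludes other roots.

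That exclusion step requires $\tau^*_{\text{Welfare}}$ (resp.\ $\tau^*_{\text{Laffer}}$) to be the unique zero of $dW/d\tau$ (resp.\ $dR/d\tau$) on $K$. Your compactness assumption implicitly needs the same thing, and you should state it: compactness alone only sends limit points of roots to some zero.

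Finally, in the paper's constant-elasticity environment the closed form of Proposition~\ref{prop:MFEI_tau_monotonicity} confirms both limits directly: $\tau^{FE}\to1+1/\varsigma=\tau^*_{\text{Welfare}}$ as $\alpha\to0$, and $\tau^{FE}\to\varepsilon(1+\varsigma)/\bigl(\varsigma(\varepsilon-1)\bigr)=\tau^*_{\text{Laffer}}$ as $\alpha\to\infty$.
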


\begin{proof}
For $\alpha \to 0$: Assuming the tariff has a non-zero marginal effect on welfare ($dW/d\tau \neq 0$), taking the limit as $\alpha \to 0$ eliminates the revenue terms from both the numerator and denominator:
\begin{equation*}
\lim_{\alpha \to 0} \text{MFEI}(\tau) = \frac{dW/d\tau}{|dW/d\tau|} = \text{sgn}\left(\frac{dW}{d\tau}\right).
\end{equation*}
This evaluates exactly to the sign function of marginal welfare. It evaluates to $1$ when welfare is increasing and $-1$ when decreasing, jumping precisely at the welfare peak.

For $\alpha \to \infty$: Assuming marginal revenue is non-zero ($dR/d\tau \neq 0$), we factor out $\alpha$ by dividing the numerator and denominator by it:
\begin{equation*}
\lim_{\alpha \to \infty} \text{MFEI}(\tau) = \lim_{\alpha \to \infty} \frac{\frac{1}{\alpha} \frac{dW}{d\tau} + \frac{dR}{d\tau}}{\frac{1}{\alpha} \left| \frac{dW}{d\tau} \right| + \left| \frac{dR}{d\tau} \right|} = \frac{dR/d\tau}{|dR/d\tau|} = \text{sgn}\left(\frac{dR}{d\tau}\right).
\end{equation*}
This evaluates exactly to the sign function of marginal revenue: $1$ on the upward-sloping side of the Laffer curve and $-1$ on the downward-sloping side, transitioning sharply at the revenue-maximizing tariff. \qedhere
\end{proof}

\begin{proposition}[Monotonicity Properties]\label{prop:MFEI_monotonicity}
Fix $\alpha>0$. Within the trade-off zones ($dW/d\tau < 0$ and $dR/d\tau > 0$), the MFEI satisfies the following monotonicity properties:
\begin{enumerate}
    \item $\partial \text{MFEI}/\partial \text{MEB} < 0$ (strictly decreasing in the tariff's marginal excess burden).
    \item $\partial \text{MFEI}/\partial \alpha > 0$ (strictly increasing in the revenue-weight).
\end{enumerate}
\end{proposition}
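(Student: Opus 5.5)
The plan is to reuse the closed-form expression for the MFEI inside the trade-off region that was derived in the proof of Proposition~\ref{prop:MFEI_Zones}. When $dW/d\tau<0<dR/d\tau$, dividing the numerator and denominator of the MFEI by $dR/d\tau>0$ gives
\begin{equation*}
\text{MFEI} = f(m,\alpha) \equiv \frac{\alpha - m}{\alpha + m}, \qquad m \equiv \text{MEB}(\tau) = -\frac{dW/d\tau}{dR/d\tau} > 0 .
\end{equation*}
Within this region the index is therefore a smooth function of the two scalars $(m,\alpha)$ on the open quadrant $m>0$, $\alpha>0$. Both monotonicity claims then reduce to signing the partial derivatives of $f$. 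Treating $m$ and $\alpha$ as independent arguments is legitimate because $\alpha$ is a fixed parameter, while the MEB is determined by the model's derivatives and does not depend on $\alpha$.

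The key step is the quotient rule. I expect
\begin{equation*}
\frac{\partial f}{\partial m} = \frac{-(\alpha+m) - (\alpha-m)}{(\alpha+m)^2} = \frac{-2\alpha}{(\alpha+m)^2},
\qquad
\frac{\partial f}{\partial \alpha} = \frac{(\alpha+m) - (\alpha-m)}{(\alpha+m)^2} = \frac{2m}{(\alpha+m)^2}.
\end{equation*}
The denominator $(\alpha+m)^2$ is strictly positive throughout the region, since $\alpha>0$ and $m>0$. The sign of the first derivative is then that of $-2\alpha$, which is strictly negative because $\alpha>0$ is fixed; this proves item~1. The sign of the second derivative is that of $2m$, which is strictly positive because the MEB is strictly positive whenever $dW/d\tau<0<dR/d\tau$ (as established in the proof of Proposition~\ref{prop:MFEI_Zones}); this proves item~2.

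The main obstacle is interpretive rather than computational. One must justify that, within the region, the MFEI depends on $(dW/d\tau, dR/d\tau)$ only through their ratio. That is exactly what the reduction above delivers, because it is homogeneous of degree zero in the two derivatives. Hence "$\partial\text{MFEI}/\partial\text{MEB}$" is well defined.

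I would also note that strictness fails at the boundaries. If $\alpha=0$, then $f\equiv-1$; if $m=0$, then $f\equiv 1$. These cases explain why the statement fixes $\alpha>0$ and is restricted to the open trade-off zones.
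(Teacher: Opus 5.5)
Your proposal is correct and follows the paper's proof essentially step for step. Both rewrite the index in the Trade-Off zone as $(\alpha-\text{MEB})/(\alpha+\text{MEB})$ via Proposition~\ref{prop:MFEI_Zones}, apply the quotient rule to get $-2\alpha/(\alpha+\text{MEB})^2$ and $2\,\text{MEB}/(\alpha+\text{MEB})^2$, and sign these using $\alpha>0$ and $\text{MEB}>0$.
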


\begin{proof}
By Proposition \ref{prop:MFEI_Zones}, the MFEI in the Trade-Off zone is given by $\text{MFEI}(\tau) = \frac{\alpha - \text{MEB}(\tau)}{\alpha + \text{MEB}(\tau)}$. 
Taking the partial derivative with respect to $\text{MEB}$ yields:
\begin{equation*}
\frac{\partial \text{MFEI}}{\partial \text{MEB}} = \frac{-1 \cdot (\alpha + \text{MEB}) - 1 \cdot (\alpha - \text{MEB})}{(\alpha + \text{MEB})^2} = \frac{-2\alpha}{(\alpha + \text{MEB})^2}.
\end{equation*}
Since $\alpha > 0$, this derivative is strictly negative, confirming that higher marginal excess burdens map to monotonically lower fiscal efficiency indices.

Taking the partial derivative with respect to $\alpha$ yields:
\begin{equation*}
\frac{\partial \text{MFEI}}{\partial \alpha} = \frac{1 \cdot (\alpha + \text{MEB}) - 1 \cdot (\alpha - \text{MEB})}{(\alpha + \text{MEB})^2} = \frac{2\text{MEB}}{(\alpha + \text{MEB})^2}.
\end{equation*}
Since $\text{MEB} > 0$ in the Trade-Off zone, this derivative is strictly positive. A higher relative cost of domestic public funds mechanically makes the tariff fiscally efficient. Note that outside of the Trade-Off zone, $\text{MFEI}$ is constant. \qedhere
\end{proof}

\subsection{Analytical Connection to the Trade Elasticity}

To theoretically characterize how the MEB behaves across the tariff schedule, we evaluate it analytically as a function of the underlying model elasticities. %

\begin{proposition}[Welfare- and Revenue-Maximizing Tariffs]\label{prop:MEB_analytical}
Assume a domestic import demand elasticity $\varepsilon > 1$ and a foreign export supply elasticity $\varsigma > 0$. The Marginal Excess Burden of a tariff factor $\tau \ge 1$ is given by:
\begin{equation}
\text{MEB}(\tau) = \frac{ \varepsilon \big[ (\tau - 1) \varsigma - 1 \big] }{ \varepsilon + \tau \varsigma - (\tau - 1) \varepsilon \varsigma }.
\label{eq:MEB_analytic}
\end{equation}
Furthermore, the domestic welfare-maximizing tariff ($\tau^*_\text{Welfare}$) and the revenue-maximizing tariff ($\tau^*_{\text{Laffer}}$) are respectively given by:
\begin{equation}
\tau^*_\text{Welfare} - 1 = \frac{1}{\varsigma}, \quad \quad \tau^*_{\text{Laffer}} - 1 = \frac{\varsigma + \varepsilon}{\varsigma(\varepsilon - 1)}.
\label{eq:optimal_laffer}
\end{equation}
\end{proposition}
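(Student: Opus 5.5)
I would prove Proposition~\ref{prop:MEB_analytical} in a stripped-down partial-equilibrium version of the tariff problem rather than in the full model of Section~\ref{sec:model}. The setup is a single imported good. Let $p^*$ be the tariff-exclusive world price, $p=\tau p^*$ the domestic price, and $M$ the import volume. Home import demand $M^D(p)$ has constant elasticity $\varepsilon$, so $d\ln M=-\varepsilon\, d\ln p$. Foreign export supply $M^S(p^*)$ has constant elasticity $\varsigma$, so $d\ln M=\varsigma\, d\ln p^*$. Revenue is $R=(\tau-1)p^*M$. Home welfare $W$ (consumer surplus on imports plus the rebated revenue) then satisfies, to first order,
\[
dW=-M\,dp+dR .
\]
Expanding $dp=p^*d\tau+\tau\,dp^*$ and $dR=(\tau-1)(p^*dM+M\,dp^*)+p^*M\,d\tau$, the $p^*M\,d\tau$ terms cancel. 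This gives
\[
dW=-M\,dp^*+(\tau-1)p^*\,dM .
\]
The two terms are the terms-of-trade gain and the Harberger distortion on the tariff wedge, respectively.

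Next I would solve the market-clearing condition $-\varepsilon(d\ln\tau+d\ln p^*)=\varsigma\,d\ln p^*$. It yields the standard incidence formulas
\[
\frac{d\ln p^*}{d\ln\tau}=-\frac{\varepsilon}{\varepsilon+\varsigma},\qquad \frac{d\ln M}{d\ln\tau}=-\frac{\varepsilon\varsigma}{\varepsilon+\varsigma}.
\]
Substituting these into the two derivatives gives
\[
\frac{dW}{d\ln\tau}=\frac{p^*M\,\varepsilon\,[1-(\tau-1)\varsigma]}{\varepsilon+\varsigma},
\qquad
\frac{dR}{d\ln\tau}=\frac{p^*M\,[\tau(\varepsilon+\varsigma)-(\tau-1)\varepsilon(1+\varsigma)]}{\varepsilon+\varsigma}.
\]
The bracket in $dR/d\ln\tau$ simplifies to $\varepsilon+\tau\varsigma-(\tau-1)\varepsilon\varsigma$. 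Taking $\text{MEB}=-(dW/d\tau)/(dR/d\tau)$ as in equation~\eqref{eq:MEB_W}, the common factor $p^*M/(\varepsilon+\varsigma)$ and the Jacobian $d\ln\tau/d\tau$ both cancel, which delivers equation~\eqref{eq:MEB_analytic}.

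Setting $dW/d\tau=0$ gives $\tau^*_{\text{Welfare}}-1=1/\varsigma$, the classical inverse-export-supply-elasticity rule. Setting $dR/d\tau=0$ gives $\tau(\varsigma-\varepsilon\varsigma)=-\varepsilon(1+\varsigma)$, that is, $\tau=\varepsilon(1+\varsigma)/[\varsigma(\varepsilon-1)]$. Subtracting one yields the stated $\tau^*_{\text{Laffer}}-1=(\varsigma+\varepsilon)/[\varsigma(\varepsilon-1)]$. The assumption $\varepsilon>1$ is exactly what makes this root positive and finite.

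To confirm these are maxima rather than arbitrary critical points, I would check signs. The bracket in $dW/d\tau$ is linear and decreasing in $\tau$, and the bracket in $dR/d\tau$ is linear with slope $\varsigma(1-\varepsilon)<0$. Each therefore crosses zero once, from positive to negative, so each derivative is single-crossing. This also shows that the welfare peak lies below the Laffer peak, consistent with the zone ordering in Proposition~\ref{prop:MFEI_Zones}.

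The main obstacle is modelling, not algebra. The proposition is stated with only the two elasticities, so the proof has to commit to this partial-equilibrium reduction and to the welfare accounting $dW=-M\,dp+dR$, in which revenue is rebated so that $W=V+R$ as in equation~\eqref{eq:MEB_W}. Once that is fixed, I expect the rest to be the routine substitution above.
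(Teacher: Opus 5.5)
Your proposal is correct and follows essentially the same route as the paper: split $dW/d\tau$ into a terms-of-trade term and a volume-of-trade term, get the incidence formulas by equating the import-demand and export-supply responses, then set the numerator and denominator of the MEB ratio to zero. Two of your steps go slightly beyond the paper, which does not spell them out: you derive that decomposition from $dW=-M\,dp+dR$, and you check single-crossing (using $\varepsilon>1$ for the negative slope of the revenue bracket) to confirm that both critical points are maxima.
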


\begin{proof}
Consider the welfare decomposition of a marginal tariff change. Let $p^F$ denote the world export price, so the domestic purchaser price is $p = \tau p^F$, and let $M$ denote the import volume. The marginal change in real welfare is given by the sum of the terms-of-trade effect and the volume-of-trade effect (deadweight loss):
\begin{equation}
\frac{dW}{d\tau} = - M \frac{dp^F}{d\tau} + (\tau - 1) p^F \frac{dM}{d\tau}.
\label{eq:dW_decomp}
\end{equation}
The differential change in revenue, $R = (\tau - 1) p^F M$, is:
\begin{equation}
\frac{dR}{d\tau} = p^F M + (\tau - 1) M \frac{dp^F}{d\tau} + (\tau - 1) p^F \frac{dM}{d\tau}.
\label{eq:dR_decomp}
\end{equation}

By definition, $p = \tau p^F$, which implies the logarithmic differential $d\ln p = d\tau/\tau + d\ln p^F$. Market clearing requires that the partial change in import demand ($-\varepsilon \equiv \partial \ln M / \partial \ln p$) equals the change in export supply ($\varsigma \equiv \partial \ln M / \partial \ln p^F$):
\begin{equation}
d\ln M = -\varepsilon \left(\frac{d\tau}{\tau} + d\ln p^F\right) = \varsigma d\ln p^F.
\end{equation}
Solving for the equilibrium price and quantity responses yields:
\begin{equation}
\frac{d\ln p^F}{d\tau} = -\frac{\varepsilon}{\tau(\varsigma + \varepsilon)}, \quad \quad \frac{d\ln M}{d\tau} = -\frac{\varepsilon \varsigma}{\tau(\varsigma + \varepsilon)}.
\label{eq:dp_dM}
\end{equation}
Using $dp^F/d\tau = p^F (d\ln p^F/d\tau)$ and $dM/d\tau = M (d\ln M/d\tau)$, we substitute (\ref{eq:dp_dM}) into the marginal welfare decomposition (\ref{eq:dW_decomp}):
\begin{equation}
\frac{dW}{d\tau} = \frac{p^F M}{\tau (\varsigma + \varepsilon)} \Big[ \varepsilon - (\tau - 1) \varepsilon \varsigma \Big].
\label{eq:dW_dtau_analytic}
\end{equation}
Similarly, substituting into the marginal revenue decomposition (\ref{eq:dR_decomp}) gives:
\begin{equation}
\frac{dR}{d\tau} = \frac{p^F M}{\tau (\varsigma + \varepsilon)} \Big[ \tau (\varsigma + \varepsilon) - (\tau - 1) \varepsilon (1 + \varsigma) \Big] = \frac{p^F M}{\tau (\varsigma + \varepsilon)} \Big[ \varepsilon + \tau \varsigma - (\tau - 1) \varepsilon \varsigma \Big].
\label{eq:dR_dtau_analytic}
\end{equation}
Taking the negative ratio $\text{MEB}(\tau) = -(dW/d\tau)/(dR/d\tau)$ exactly yields Equation (\ref{eq:MEB_analytic}).

The welfare-maximizing optimal tariff occurs where $dW/d\tau = 0$ (and thus $\text{MEB}(\tau) = 0$). Setting the numerator of (\ref{eq:MEB_analytic}) to zero recovers the classic inverse-elasticity rule $\tau^*_\text{Welfare} - 1 = 1/\varsigma$. The revenue-maximizing tariff (the Laffer peak) occurs where $dR/d\tau = 0$. Setting the denominator of (\ref{eq:MEB_analytic}) to zero and rearranging for $\tau-1$ yields the expression for $\tau^*_{\text{Laffer}}$ in (\ref{eq:optimal_laffer}). \qedhere
\end{proof}

Proposition \ref{prop:MEB_analytical} provides a structural interpretation for the shape of the tariff Laffer curve. In the context of a single-sector Ricardian model, the partial elasticity of import demand is roughly proportional to the trade elasticity $\theta$, taking the form $\varepsilon \approx \theta (1 - \pi) + 1$, where $\pi$ is the domestic expenditure share \citep{CALIENDO2024103820}. 

Equation (\ref{eq:optimal_laffer}) mathematically formalizes the numerical robustness exercises in Supplemental Appendix \ref{app:robustness_elasticities}. As the trade elasticity $\theta$ increases, products become more substitutable, which mechanically increases both $\varepsilon$ and $\varsigma$. A higher $\varsigma$ diminishes the terms-of-trade advantage, shifting the optimal tariff $\tau^*_\text{Welfare}$ inward toward free trade. Simultaneously, a higher $\varepsilon$ amplifies the deadweight loss and accelerates the erosion of the tax base, which strictly decreases $\tau^*_{\text{Laffer}}$. Consequently, a higher $\theta$ systematically compresses the entire Laffer curve.

\begin{proposition}[Strict Monotonicity and the Fiscally Optimal Tariff]\label{prop:MFEI_tau_monotonicity}
Maintain the constant-elasticity environment of Proposition \ref{prop:MEB_analytical} with $\varepsilon > 1$, $\varsigma > 0$, and fix $\alpha > 0$. Then:
\begin{enumerate}
    \item \textbf{Strict Monotonicity in $\tau$:} The marginal excess burden is strictly increasing in $\tau$ away from the Laffer peak. For all $\tau \neq \tau^*_{\text{Laffer}}$:
    \begin{equation}
    \frac{d\,\text{MEB}(\tau)}{d\tau} = \frac{\varepsilon\,\varsigma\,(1+\varsigma)}{\big(\varepsilon+\tau\varsigma-(\tau-1)\varepsilon\varsigma\big)^2} > 0.
    \end{equation}
    Consequently, on the interior trade-off interval $(\tau^*_\text{Welfare}, \tau^*_{\text{Laffer}})$, $\text{MFEI}(\tau)$ is strictly monotonically decreasing in $\tau$.
    \item \textbf{Closed-Form Fiscally Optimal Tariff:} There exists a unique fiscally optimal tariff $\tau^{FE} \in (\tau^*_\text{Welfare}, \tau^*_{\text{Laffer}})$ satisfying $\text{MEB}(\tau^{FE}) = \alpha$ (equivalently, $\text{MFEI}(\tau^{FE}) = 0$), given by the closed-form analytical expression:
    \begin{equation}
    \tau^{FE} = \frac{\varepsilon(1+\varsigma)(1+\alpha)}{\varsigma\big(\varepsilon+\alpha(\varepsilon-1)\big)}.
    \end{equation}
\end{enumerate}
\end{proposition}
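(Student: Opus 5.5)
The plan is to work directly with the closed form for $\text{MEB}(\tau)$ from Proposition~\ref{prop:MEB_analytical}. Both the numerator and the denominator there are affine in $\tau$, so everything reduces to a quotient-rule computation plus sign bookkeeping. I would write $N(\tau) \equiv \varepsilon\varsigma\tau - \varepsilon(1+\varsigma)$ and $D(\tau) \equiv \varepsilon(1+\varsigma) - \tau\varsigma(\varepsilon-1)$, so that $\text{MEB} = N/D$. Then $N' = \varepsilon\varsigma$ and $D' = -\varsigma(\varepsilon-1)$. In the quotient-rule numerator $N'D - ND'$, the $\tau$-terms $-\varepsilon\varsigma^2(\varepsilon-1)\tau$ and $+\varepsilon\varsigma^2(\varepsilon-1)\tau$ cancel. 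What remains is $\varepsilon\varsigma(1+\varsigma)\big[\varepsilon - (\varepsilon-1)\big] = \varepsilon\varsigma(1+\varsigma)$, a constant. This gives the stated derivative, which is strictly positive wherever $D(\tau)\neq 0$, i.e.\ for $\tau \neq \tau^*_{\text{Laffer}}$.

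For the MFEI claim, I would first pin down signs on $(\tau^*_\text{Welfare}, \tau^*_{\text{Laffer}})$. The interval is nonempty because $1/\varsigma < (\varsigma+\varepsilon)/(\varsigma(\varepsilon-1))$ is equivalent to $\varepsilon - 1 < \varsigma + \varepsilon$. The function $D$ is strictly decreasing (since $\varepsilon>1$) and vanishes exactly at $\tau^*_{\text{Laffer}}$, so $D>0$ to its left. The function $N$ is strictly increasing and vanishes at $\tau^*_\text{Welfare}$, so $N>0$ to its right. By (\ref{eq:dW_dtau_analytic}) and (\ref{eq:dR_dtau_analytic}), $dW/d\tau$ has the sign of $-N$ and $dR/d\tau$ has the sign of $D$, up to the positive factor $p^F M/(\tau(\varsigma+\varepsilon))$. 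Hence the interval lies in the Trade-Off region $dW/d\tau<0<dR/d\tau$. There, Proposition~\ref{prop:MFEI_Zones} gives $\text{MFEI} = (\alpha-\text{MEB})/(\alpha+\text{MEB})$. Combining Proposition~\ref{prop:MFEI_monotonicity}(1) with $d\,\text{MEB}/d\tau>0$ through the chain rule yields strict decrease of the MFEI in $\tau$.

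For part 2, I would use the intermediate value theorem. $\text{MEB}$ is continuous on $[\tau^*_\text{Welfare}, \tau^*_{\text{Laffer}})$ and equals $0$ at the left endpoint. As $\tau \uparrow \tau^*_{\text{Laffer}}$ it tends to $+\infty$, since $D\downarrow 0^+$ while $N(\tau^*_{\text{Laffer}})>0$ because $\tau^*_{\text{Laffer}} > \tau^*_\text{Welfare}$. So some $\tau^{FE}$ in the open interval has $\text{MEB}=\alpha>0$, and strict monotonicity makes it unique. To get the closed form, I would clear denominators in $N = \alpha D$, which is linear in $\tau$:
\begin{equation*}
\tau\,\varsigma\big(\varepsilon + \alpha(\varepsilon-1)\big) = \varepsilon(1+\varsigma)(1+\alpha).
\end{equation*}
Dividing through gives the stated $\tau^{FE}$. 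The zero of the MFEI then coincides with it by Proposition~\ref{prop:MFEI_Zones}(3).

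I do not expect a real obstacle. The only delicate points are getting the cancellation in the quotient rule right and being careful about the one-sided blow-up at the Laffer peak, namely checking that $N$ is strictly positive there so the limit is $+\infty$ rather than indeterminate.
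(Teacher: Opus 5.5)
Your proposal is correct and follows the paper's proof essentially step for step. The steps match: the quotient rule on the affine representation $\text{MEB}=N/D$ (your $N,D$ are the paper's, rewritten), the chain rule through Proposition~\ref{prop:MFEI_monotonicity}, continuity plus strict monotonicity of $\text{MEB}$ for existence and uniqueness, and solving the linear equation $N=\alpha D$ for the closed form. You are slightly more careful than the paper in explicitly checking that the interval is nonempty, that it lies in the Trade-Off region via the signs of $N$ and $D$, and that $N(\tau^*_{\text{Laffer}})>0$, so the blow-up is to $+\infty$.
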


\begin{proof}
For (1), write $\text{MEB}(\tau) = N(\tau)/D(\tau)$ using Equation (\ref{eq:MEB_analytic}), where
\begin{equation*}
N(\tau) = \varepsilon\big((\tau-1)\varsigma-1\big),
\qquad
D(\tau) = \varepsilon+\tau\varsigma-(\tau-1)\varepsilon\varsigma.
\end{equation*}
Applying the quotient rule,
\begin{equation*}
\frac{d}{d\tau}\left(\frac{N(\tau)}{D(\tau)}\right)
=
\frac{N'(\tau)D(\tau)-N(\tau)D'(\tau)}{D(\tau)^2}.
\end{equation*}
Compute the derivatives:
\begin{equation*}
N'(\tau)=\varepsilon\varsigma,
\qquad
D'(\tau)=\varsigma-\varepsilon\varsigma = -(\varepsilon-1)\varsigma.
\end{equation*}
Substituting,
\begin{align*}
\frac{d\,\text{MEB}(\tau)}{d\tau}
&=
\frac{\varepsilon\varsigma\,D(\tau)-\varepsilon\big((\tau-1)\varsigma-1\big)\big(-(\varepsilon-1)\varsigma\big)}{D(\tau)^2} \\
&=
\frac{\varepsilon\varsigma\left[D(\tau)+(\varepsilon-1)\big((\tau-1)\varsigma-1\big)\right]}{D(\tau)^2}.
\end{align*}
Finally, simplify the bracketed term:
\begin{align*}
D(\tau)+(\varepsilon-1)\big((\tau-1)\varsigma-1\big)
&=
\left[\varepsilon+\tau\varsigma-(\tau-1)\varepsilon\varsigma\right]
+(\varepsilon-1)(\tau-1)\varsigma-(\varepsilon-1) \\
&=
1+\tau\varsigma-(\tau-1)\varsigma
 \\
&=1+\varsigma.
\end{align*}
Therefore,
\begin{equation*}
\frac{d\,\text{MEB}(\tau)}{d\tau}
=
\frac{\varepsilon\varsigma(1+\varsigma)}{D(\tau)^2}
=
\frac{\varepsilon\,\varsigma\,(1+\varsigma)}{\big(\varepsilon+\tau\varsigma-(\tau-1)\varepsilon\varsigma\big)^2}
>0,
\end{equation*}
for all $\tau\neq\tau^*_{\text{Laffer}}$ where $D(\tau)\neq 0$.

Within the trade-off interval $(\tau^*_\text{Welfare}, \tau^*_{\text{Laffer}})$, we have $\text{MEB}(\tau) > 0$. By Proposition \ref{prop:MFEI_monotonicity}, $\partial \text{MFEI}/\partial \text{MEB} < 0$. By the chain rule,
$d\text{MFEI}/d\tau = (\partial \text{MFEI}/\partial \text{MEB}) \cdot (d\text{MEB}/d\tau) < 0$.

For (2), existence and uniqueness of $\tau^{FE}$ follow because $\text{MEB}(\tau)$ is continuous and strictly increasing on $(\tau^*_\text{Welfare}, \tau^*_{\text{Laffer}})$, mapping from $0$ (at $\tau^*_\text{Welfare}$) to $+\infty$ (as $\tau \to \tau^*_{\text{Laffer}}$). Setting Equation (\ref{eq:MEB_analytic}) equal to $\alpha$ yields:
\begin{equation*}
\varepsilon \big[ (\tau - 1) \varsigma - 1 \big] = \alpha \big[ \varepsilon + \tau \varsigma - (\tau - 1) \varepsilon \varsigma \big].
\end{equation*}
Substituting $\tau = (\tau - 1) + 1$ on the right side and collecting all terms containing $(\tau - 1)$ yields:
\begin{equation*}
(\tau - 1) \varsigma \big[ \varepsilon - \alpha(1 - \varepsilon) \big] = \varepsilon(1 + \alpha) + \alpha \varsigma.
\end{equation*}
Noting that $\varepsilon - \alpha(1 - \varepsilon) = \varepsilon + \alpha(\varepsilon - 1)$ and isolating $(\tau - 1)$ yields:
\begin{equation*}
\tau - 1 = \frac{\varepsilon(1+\alpha) + \alpha\varsigma}{\varsigma\big(\varepsilon+\alpha(\varepsilon-1)\big)}.
\end{equation*}
Adding $1 = \frac{\varsigma(\varepsilon+\alpha(\varepsilon-1))}{\varsigma(\varepsilon+\alpha(\varepsilon-1))}$ directly yields
$\tau^{FE} = \frac{\varepsilon(1+\varsigma)(1+\alpha)}{\varsigma(\varepsilon+\alpha(\varepsilon-1))}$. \qedhere
\end{proof}

\section{Robustness of Laffer Curve Peaks to Elasticities and Input--Output Linkages}
\label{app:robustness_elasticities}

Our results connect to a large quantitative literature that uses multi-country general-equilibrium models to compute Nash and cooperative tariffs and to quantify the welfare costs of trade wars. A central benchmark is \citet{Ossa2014}, who finds that optimal tariffs can be upwards of 60 percent in calibrated trade-war environments.  Methodologically, much of the optimal-tariff literature characterizes a single optimum schedule using first-order conditions, sometimes in sufficient-statistic form (\citealt{acr2012}; \citealt{costinotChapterTradeTheory2014}; \citealt{costinot2015comparative}). In contrast, we compute the full relationship between tariffs, revenue, and welfare over the entire policy-relevant range for the entire world. We do so by repeatedly solving the full general equilibrium model on a dense grid of U.S.\ tariffs and potential partner responses, providing an exhaustive grid search that involves solving the general equilibrium model millions of times in parallel.

Our baseline calibration uses sector-specific trade elasticities from \cite{Chen2023Fragmentation}, which range from approximately 2 to 15 across traded goods sectors, combined with the full input--output structure from the Eora tables. In Section~\ref{sec:introduction}, we noted that our welfare-maximizing tariff rates (0--10 percent) are substantially lower than the optimal tariffs found in earlier quantitative trade-war models, most notably \citet{Ossa2014}, who reports Nash optimal tariffs averaging approximately 60 percent.  We conduct several quantitative experiments to disentangle the source of this discrepancy.  

One natural hypothesis is that the lower optimal tariffs we find could be driven by differences in trade elasticities. We investigate the effects of trade elasticities by re-computing the bilateral Laffer and welfare curves under two alternative specifications.  Figure \ref{fig:Laffer_Sigma4} shows the equivalent of Figure \ref{fig:usa_laffer_eu_chn} when we calibrate the model using a uniform trade elasticity of $\theta^s = 4$ for all sectors, chosen to match the mean trade elasticity in \citet{SIMONOVSKA201434}. Figure \ref{fig:Laffer_Sigma24} shows the same results with a uniform trade elasticity of $\theta^s = 2.42$ for all sectors, chosen to match the mean trade elasticity in \citet{Ossa2014}.  These figures show that tariff revenue is highly responsive to the trade elasticity shifting the Laffer peak tariff rate markedly higher.  Conversely, an elasticity of $\theta^s = 4$ has a minimal impact on the welfare-maximizing tariff rate, whereas an elasticity $\theta^s = 2.42$ raises the optimal tariff rate from 9--10 percent up to 19--20 percent, a significant increase, but far lower than the 60+ percent \citet{Ossa2014} finds.

\begin{figure}[htbp]
    \centering
    \caption{U.S.\ Laffer Curve with E.U. and China ($\theta^s=4$)}
    \label{fig:Laffer_Sigma4}
    \includegraphics[width=\textwidth]{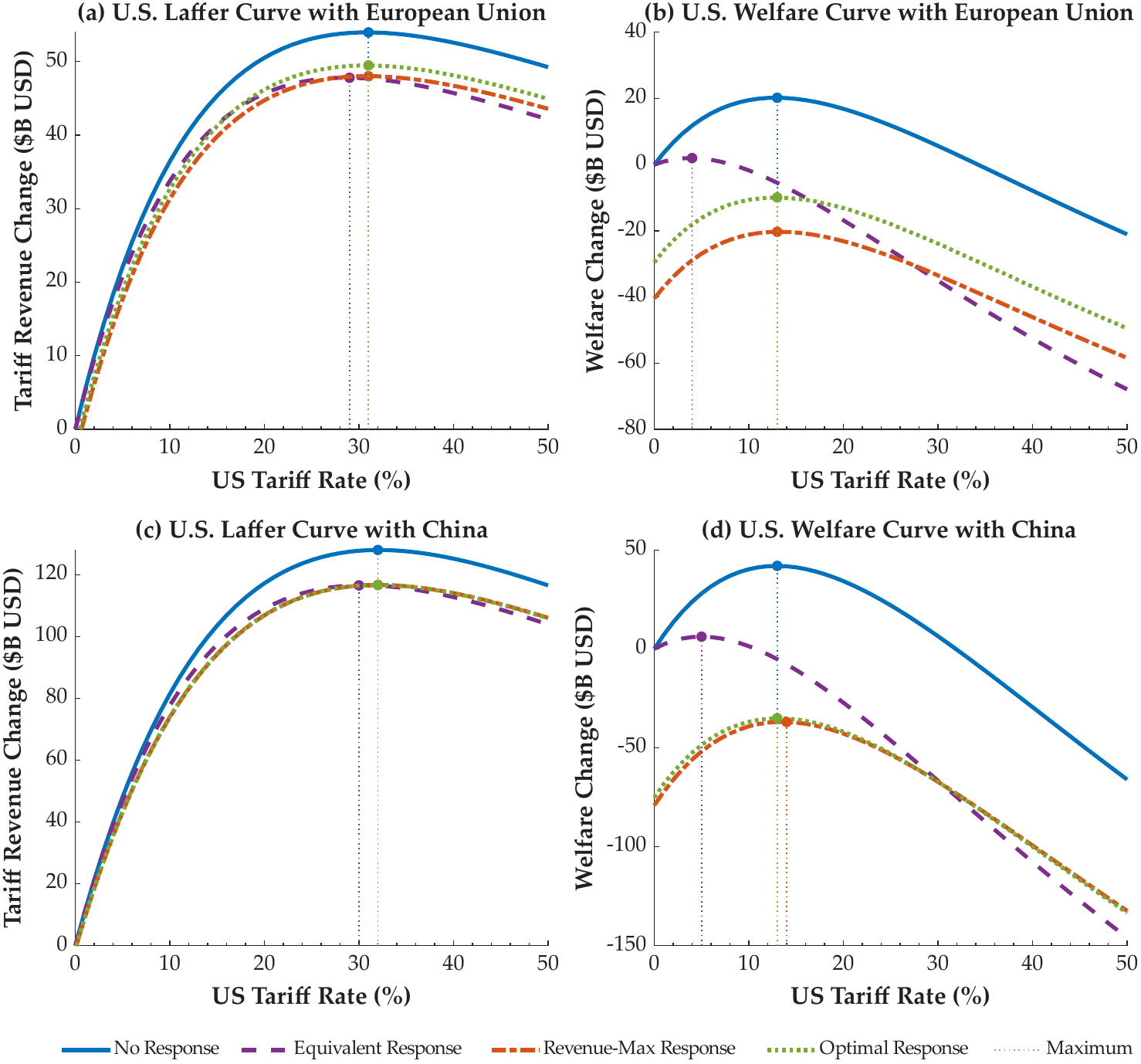}
\end{figure}

\begin{figure}[htbp]
    \centering
    \caption{U.S.\ Laffer Curve with E.U. and China ($\theta^s=2.42$)}
    \label{fig:Laffer_Sigma24}
    \includegraphics[width=\textwidth]{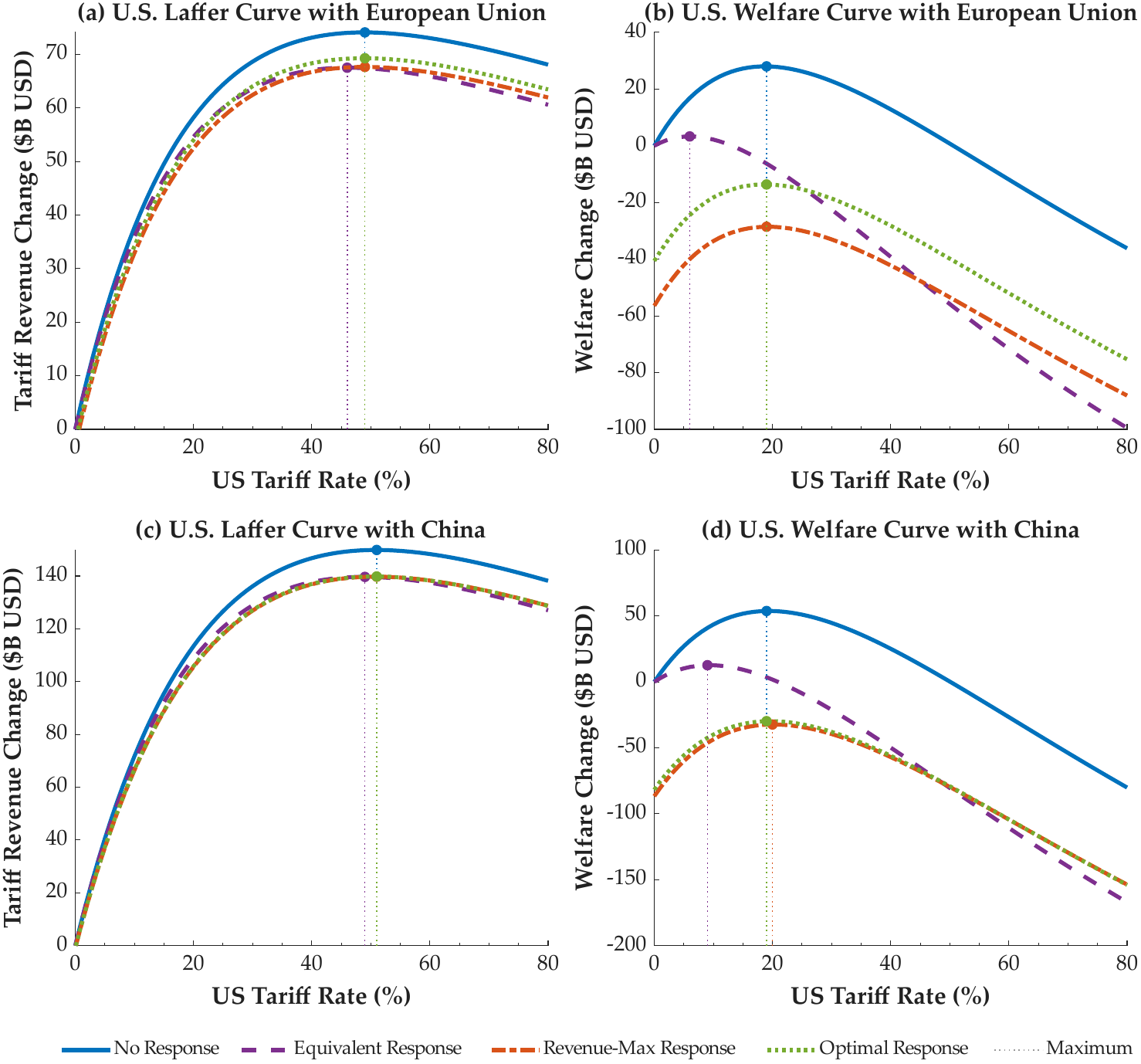}
\end{figure}

 The second hypothesis we test is whether the differences in welfare-maximizing tariff rates could be driven by the dampening effect of input--output linkages, which propagate tariff-induced cost increases through the production network.  We eliminate input-output linkages by calibrating our model using a hypothetical input-output flow matrix where all entries are zeros, i.e. $M_{ki}^s=0$ in equation \eqref{eq:calib_shares} and setting value added to equal gross output in each sector.  This implies that $\gamma_i^s=1$ and $\gamma_i^{k,s}=0$ in equation \eqref{eq:hat_prices}. Figure \ref{fig:Laffer_Sigma24_noIO} shows our results when $\theta^s=2.42$ for all sectors and we eliminate input-output linkages in the model.  The optimal tariff rates are virtually unchanged, except for a slightly lower welfare-maximizing tariff when partners respond equivalently.

\begin{figure}[htbp]
    \centering
    \caption{U.S.\ Laffer Curve with E.U. and China\\($\theta^s=2.42$, No Input-Output Linkages)}
    \label{fig:Laffer_Sigma24_noIO}
    \includegraphics[width=\textwidth]{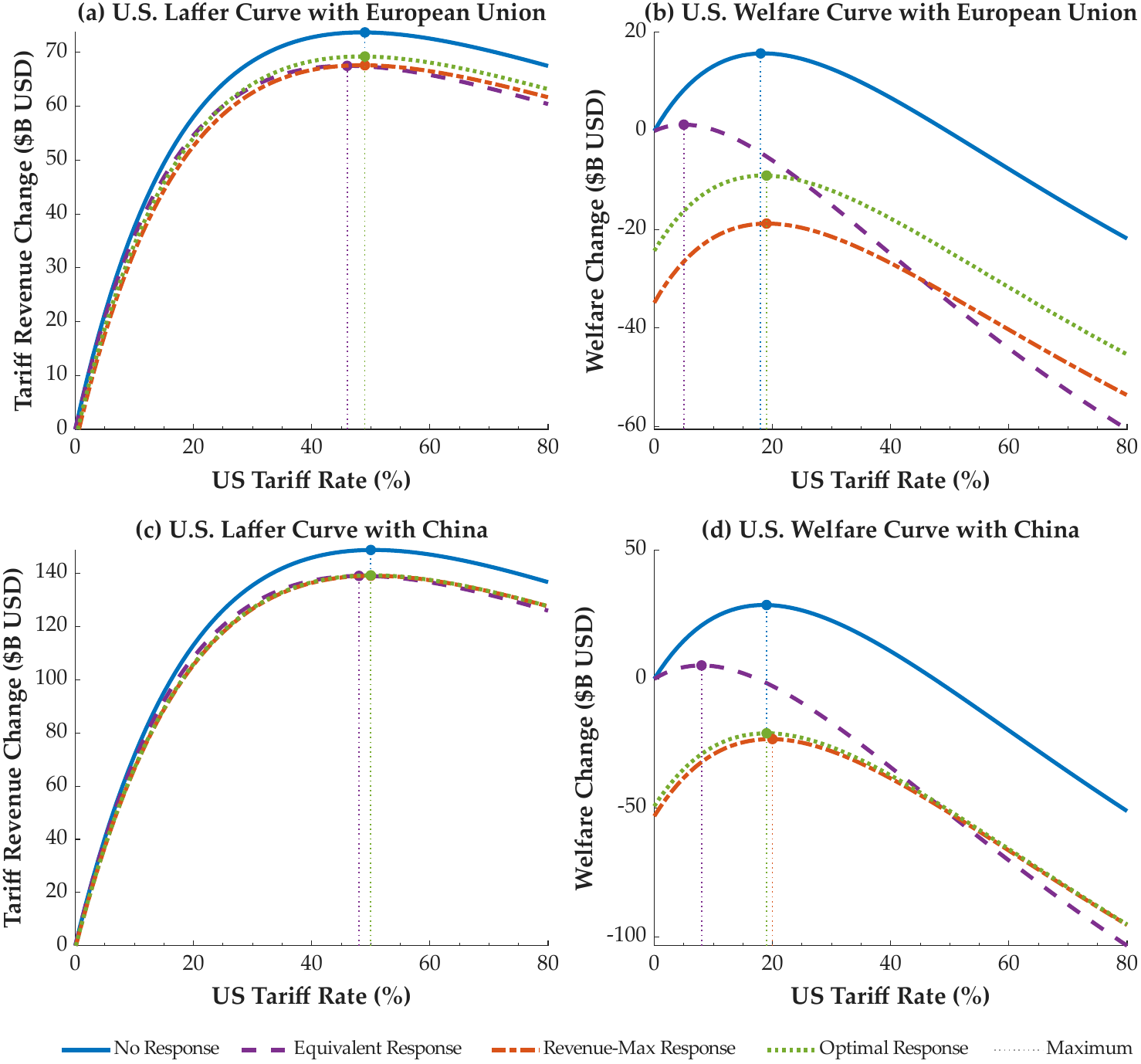}
\end{figure}

Our last experiment more closely mirrors the focus in other papers on bilateral optimality by computing changes in tariff revenue and welfare under the assumption that imports and tariff revenue from other countries are unchanged.  We do this by solving the counterfactual equilibrium with full general equilibrium and third-country effects to find counterfactual changes in imports from the targeted partner, and then we compute tariff revenue and welfare changes using the baseline levels of non-targeted partner trade flows, rather than actual counterfactual levels of these objects.  Figure \ref{fig:Laffer_Sigma24_noIO_bilateral} shows the results of this exercise, and here, the welfare-maximizing tariffs more than double to nearly 42 percent.  Even more notably, when we ignore third-country effects, the welfare-maximizing tariffs no longer respond significantly to equivalent response.  We believe this explains much of the difference between the high Nash Equilibrium tariff rates of around 60 percent found by \citet{Ossa2014} and the much lower tariff rates found in our other paper \citet{pujorossbach2024} which uses a similar framework that fully incorporates all general equilibrium and third-country effects and finds Nash Equilibrium tariff rates closer to ten percent.

\begin{figure}[htbp]
    \centering
    \caption{U.S.\ Laffer Curve with E.U. and China\\($\theta^s=2.42$, No I-O, Bilateral Changes Only)}
    \label{fig:Laffer_Sigma24_noIO_bilateral}
    \includegraphics[width=\textwidth]{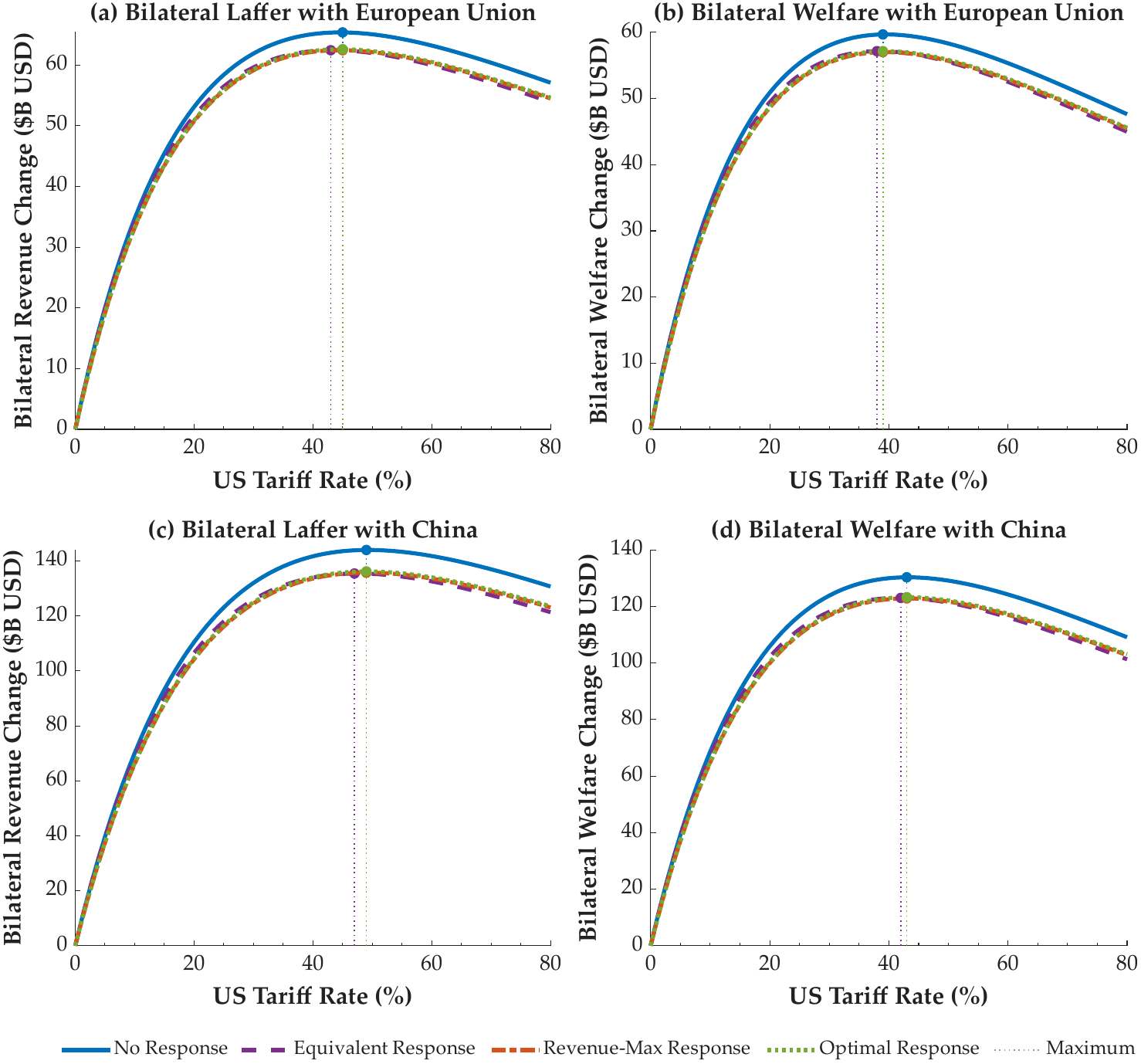}
\end{figure}

\clearpage

\section{Additional Figures and Results}
\label{app:more_results}
In this supplemental appendix, we provide additional results that may be of interest to some readers.  We provide minimal commentary since the results provided here are largely consistent with the results in the other figures and tables or are otherwise explained in the main text.  Table \ref{tbl:laffer_peaks} shows the variation in Laffer peaks and welfare peaks across each U.S.\ trading partner.

\subsection{Robustness of the Inverse-Optimum Exercise}
\label{app-sec:inv_opt_timelines}
Figure \ref{fig:invopt_timeseries} presents the inverse-optimum results separately for each date between 2001--2019 (including the U.S.-China trade war tariffs) and January 2025--2026, with standard bootstrapped confidence intervals due to the lower number of observations compared to pooling dates.  We find the results are largely similar, with the most notable difference being the steep decrease in the total partner weight for USMCA partners in early March 2025, when tariffs were implemented against Canada and Mexico with no exceptions for products that qualified for duty-free access under USMCA until a few days later.  When plotting the revenue weight from the Trade-Off Zone, we plot only points with at least 10 observations in the Trade-Off zone from countries other than USMCA and China.  This is mostly for display purposes, as otherwise there would be a large spike in March 2025 driven by tariff rates for these countries near the Laffer peak, and with an implied revenue weight far above 2.  This outlier spike, exclusive to China and USMCA at the start of the trade war, masks the broader trend of increasing revenue motives of the U.S.\ as applied to a large swath of trading partners.

\begin{figure}[htbp]
    \centering
    \caption{U.S.\ Inverse-Optimum Policy Weights by Date, 2001-January 1, 2026}
    \label{fig:invopt_timeseries}
    \includegraphics[width=\textwidth]{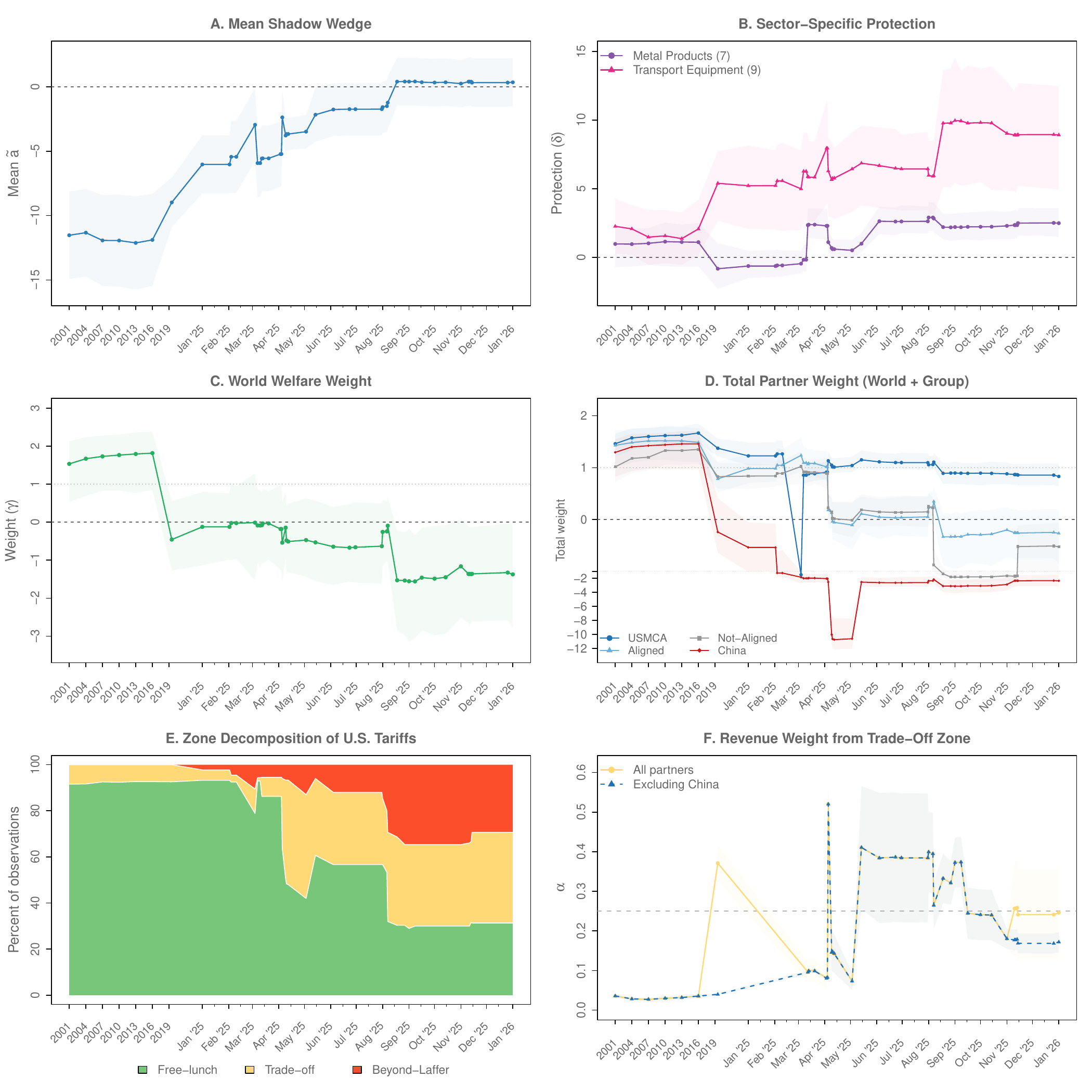}
\end{figure}

\subsubsection{Comparison with Alternative Tariff Data}
\label{app-sec:GTD}
An alternative to our tariff data is the Global Tariff Database (GTD) used by \citet{NBERw33792}, who offer hand-coded tariff rate increases between January and August 2025.  Figure \ref{fig:gtd_vs_wto_panel} displays the mean tariff rate on U.S.\ imports and U.S.\ exports across the GTD and our dataset, the WTO-IMF Tariff Tracker (WTO-TA) database.  Overall, the series largely move in parallel, with the main differences being the longer timeline, lower U.S.\ tariff rates on imports for USMCA in our data, and more variation in U.S.\ export tariff rates in the GTD.  Figure \ref{fig:invopt_timeseries_gtd} presents the global inverse-optimum exercise from section \ref{sec:inverse_optimum} for the full range of dates available in the GTD.  Although bootstrapped confidence intervals are significantly wider with this database, the overall pattern is largely similar.

\begin{figure}[htbp]
    \centering
    \caption{U.S.\ Tariff Rates in GTD vs WTO-TA}
    \label{fig:gtd_vs_wto_panel}
    \includegraphics[width=\textwidth]{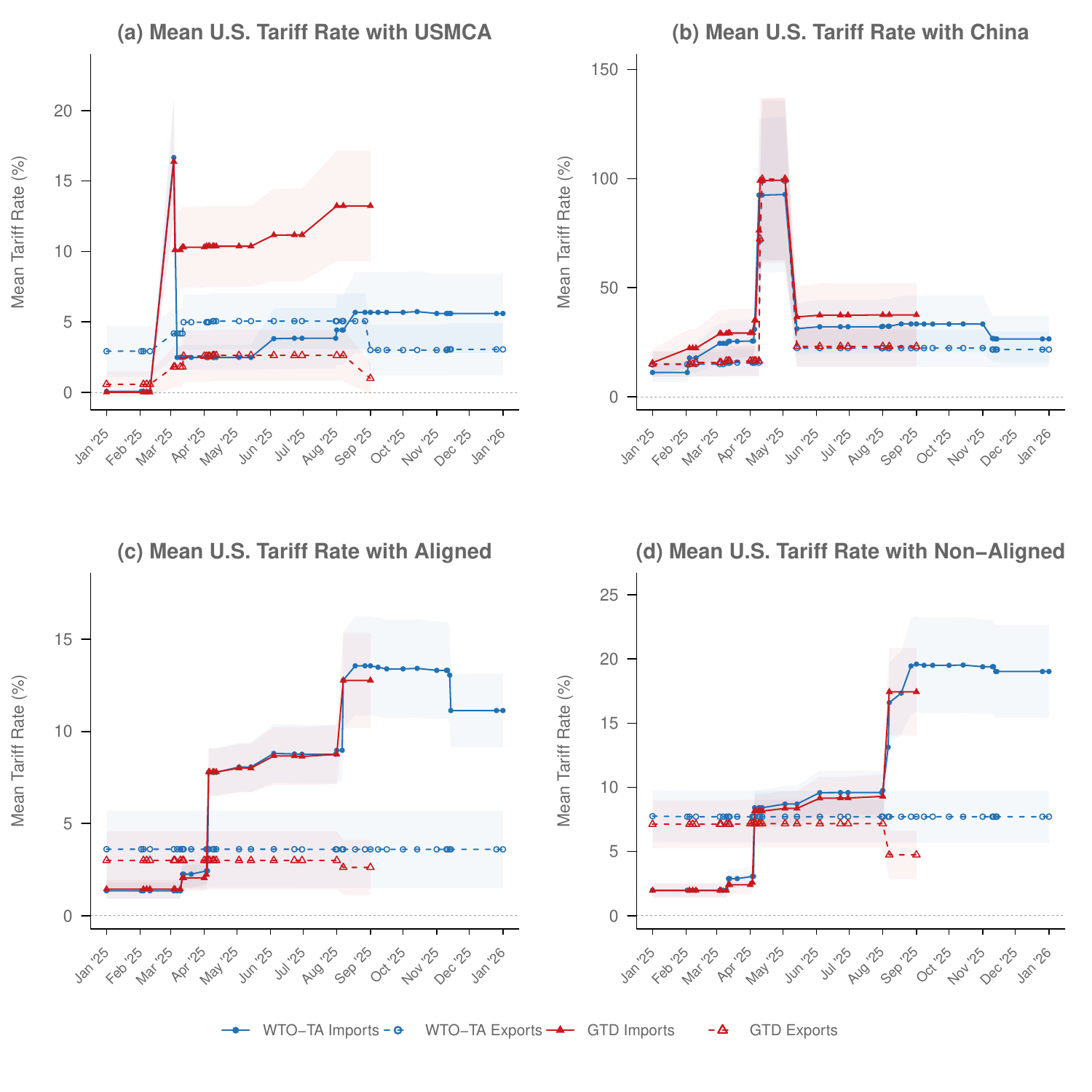}
\end{figure}

\begin{figure}[htbp]
    \centering
    \caption{U.S.\ Inverse-Optimum Policy Weights by Date with GTD}
    \label{fig:invopt_timeseries_gtd}
    \includegraphics[width=\textwidth]{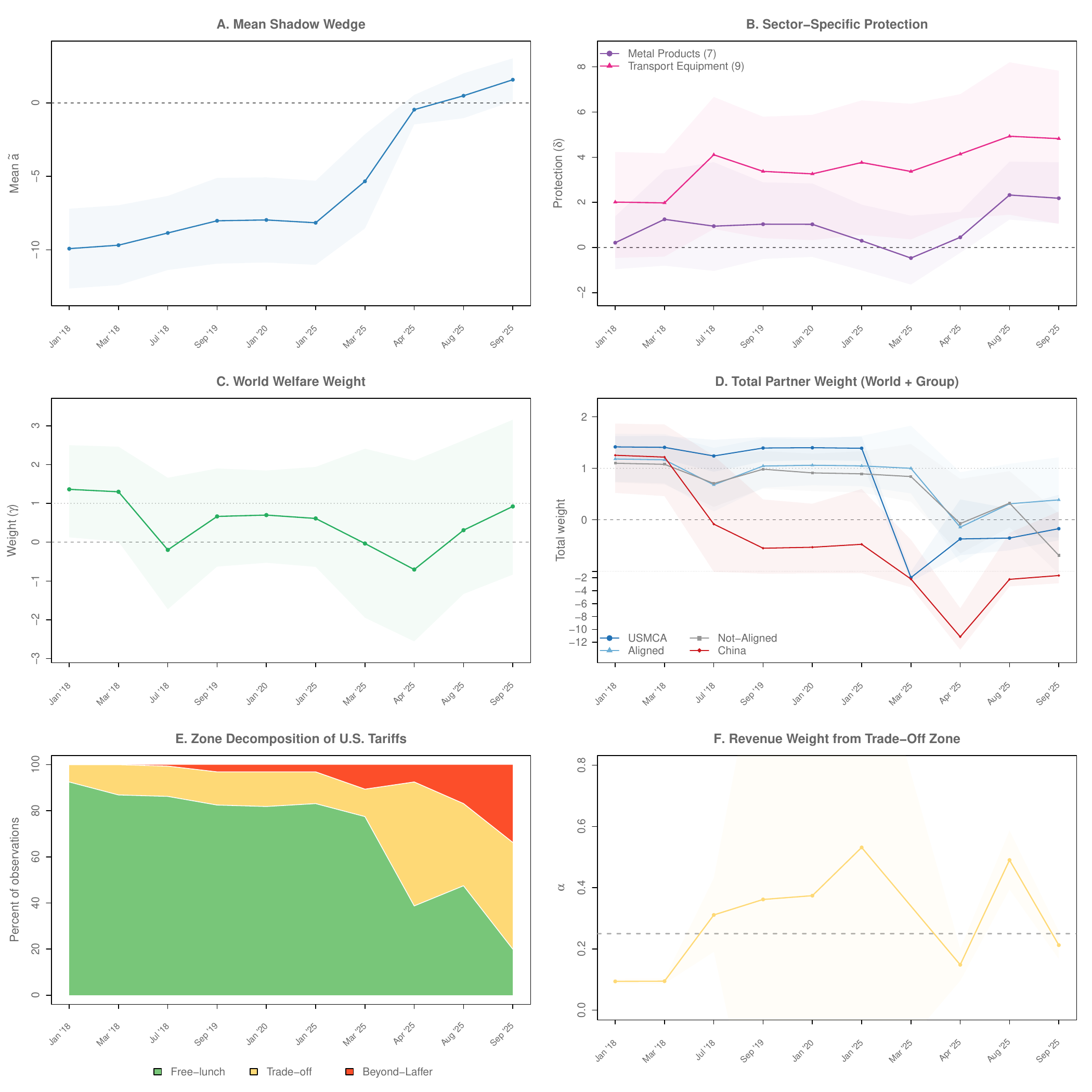}
\end{figure}

\subsubsection{Alternative Weights on Revenue}
\label{app-sec:inv_opt_alpha}
Figure \ref{fig:invopt_alpha_0} presents the first four panels of Figure \ref{fig:invopt_results} with the weight on tariff revenue eliminated by setting $\alpha=0$ (Panels E and F are unaffected by $\alpha$).  Figure \ref{fig:invopt_alpha_50} presents the analog with the weight on tariff revenue increased to $\alpha=0.50$.  Although the wedges and weights shift up and down accordingly as $\alpha$ adjusts, the overall pattern remains largely unchanged.

\begin{figure}[htbp]
    \centering
    \caption{U.S.\ Inverse-Optimum Policy Weights ($\alpha = 0$)}
    \label{fig:invopt_alpha_0}
    \includegraphics[width=\textwidth]{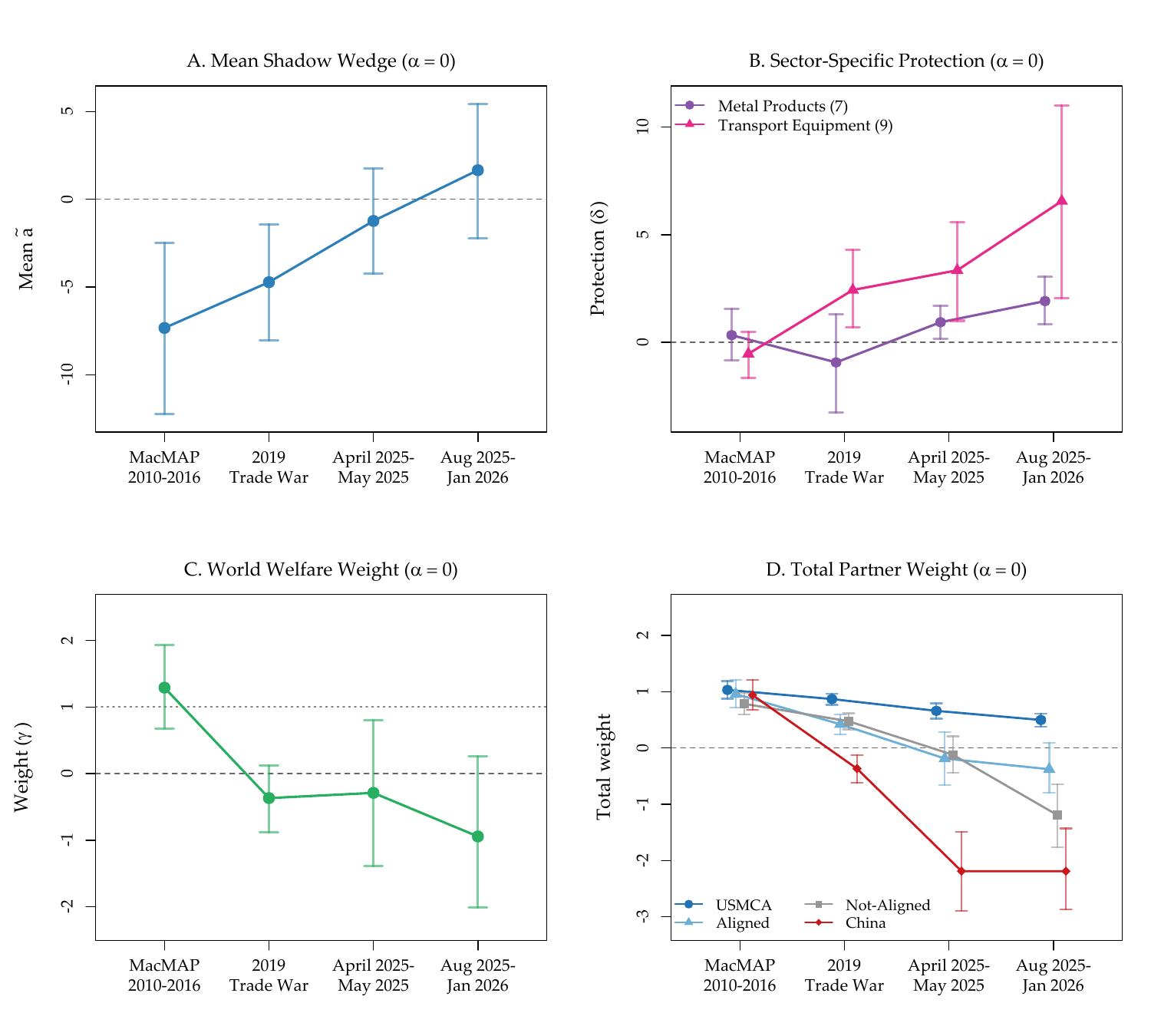}
\end{figure}

\begin{figure}[htbp]
    \centering
    \caption{U.S.\ Inverse-Optimum Policy Weights ($\alpha = 0.50$)}
    \label{fig:invopt_alpha_50}
    \includegraphics[width=\textwidth]{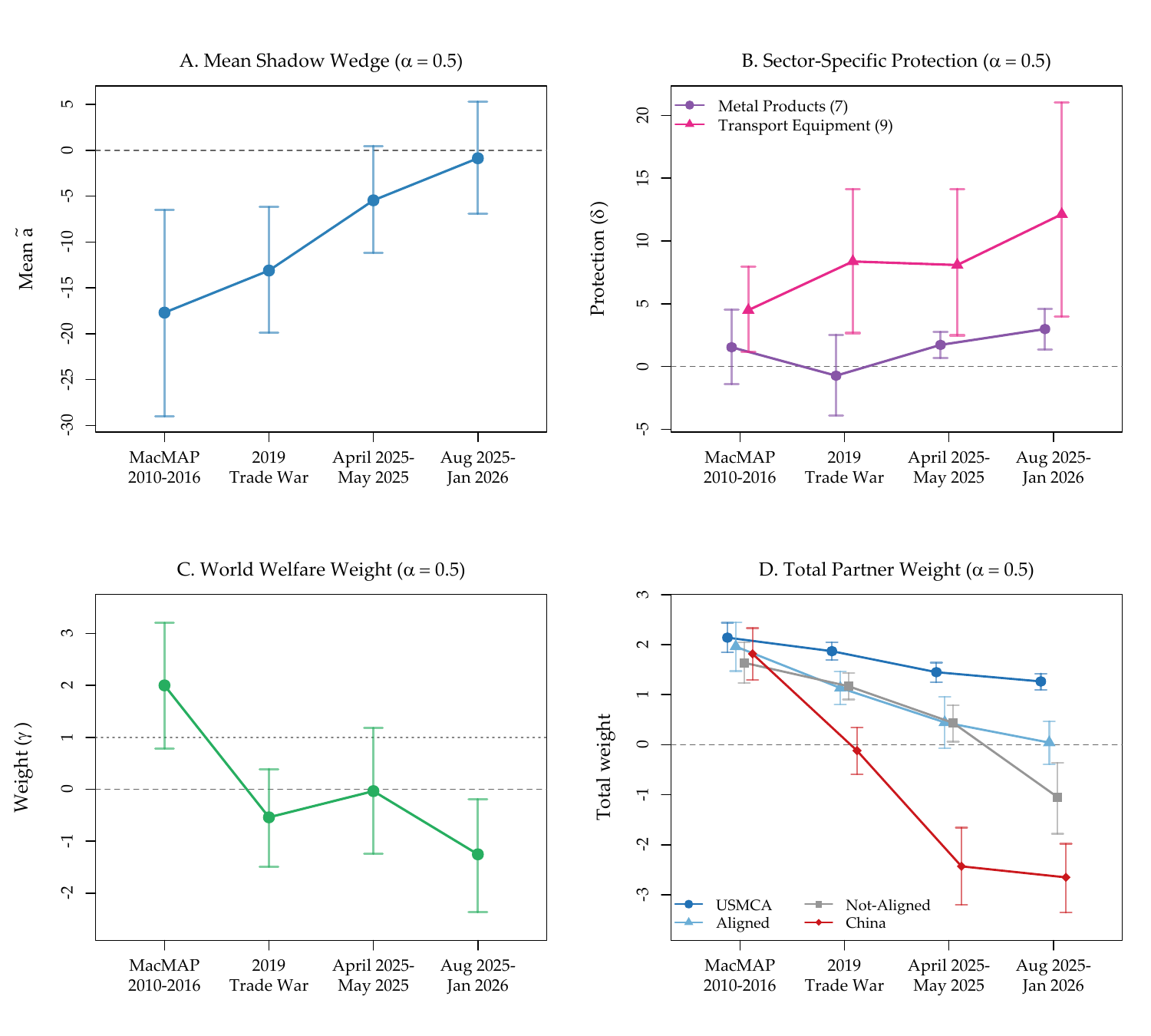}
\end{figure}

\subsection{January 1 Baseline Tariffs}
\label{app:additional_laffers_baselines}
Figure \ref{fig:usa_laffer_eu_chn} shows changes in U.S.\ welfare and tariff revenue from a trade war with the European Union and China across a range of uniform tariff rates and partner responses.  In panel (d) of that figure, the maximum of the Optimal Response curve yields a Nash Equilibrium tariff rate of approximately 9 percent with negative welfare changes for the United States.  This may initially appear to conflict with the results in \citet{pujorossbach2024},  in which we find the United States may experience small welfare gains from a Nash Equilibrium trade war with China using a similar model (in that paper, we solve for optimal tariff rates that vary by sector).  Figure \ref{fig:usa_laffer_eu_chn_Jan1Base} shows changes in U.S.\ welfare and revenue relative to January 1, 2025 baseline tariff rates.  Relative to baseline tariff rates, panel (d) in Figure \ref{fig:usa_laffer_eu_chn_Jan1Base} shows that the Nash Equilibrium tariff rate of 9 percent corresponds to a small welfare increase for the U.S., consistent with the results of our other paper.  These results are in part because baseline tariff rates on January 1, 2025 are not uniform.

\begin{figure}[htbp]
    \centering
    \caption{U.S.\ Laffer and Welfare Curves: European Union and China \\ \emph{\small{Changes Relative to January 1, 2025 Baseline Tariffs}}}
    \label{fig:usa_laffer_eu_chn_Jan1Base}
    \vspace{0.5cm}
    \includegraphics[width=\textwidth]{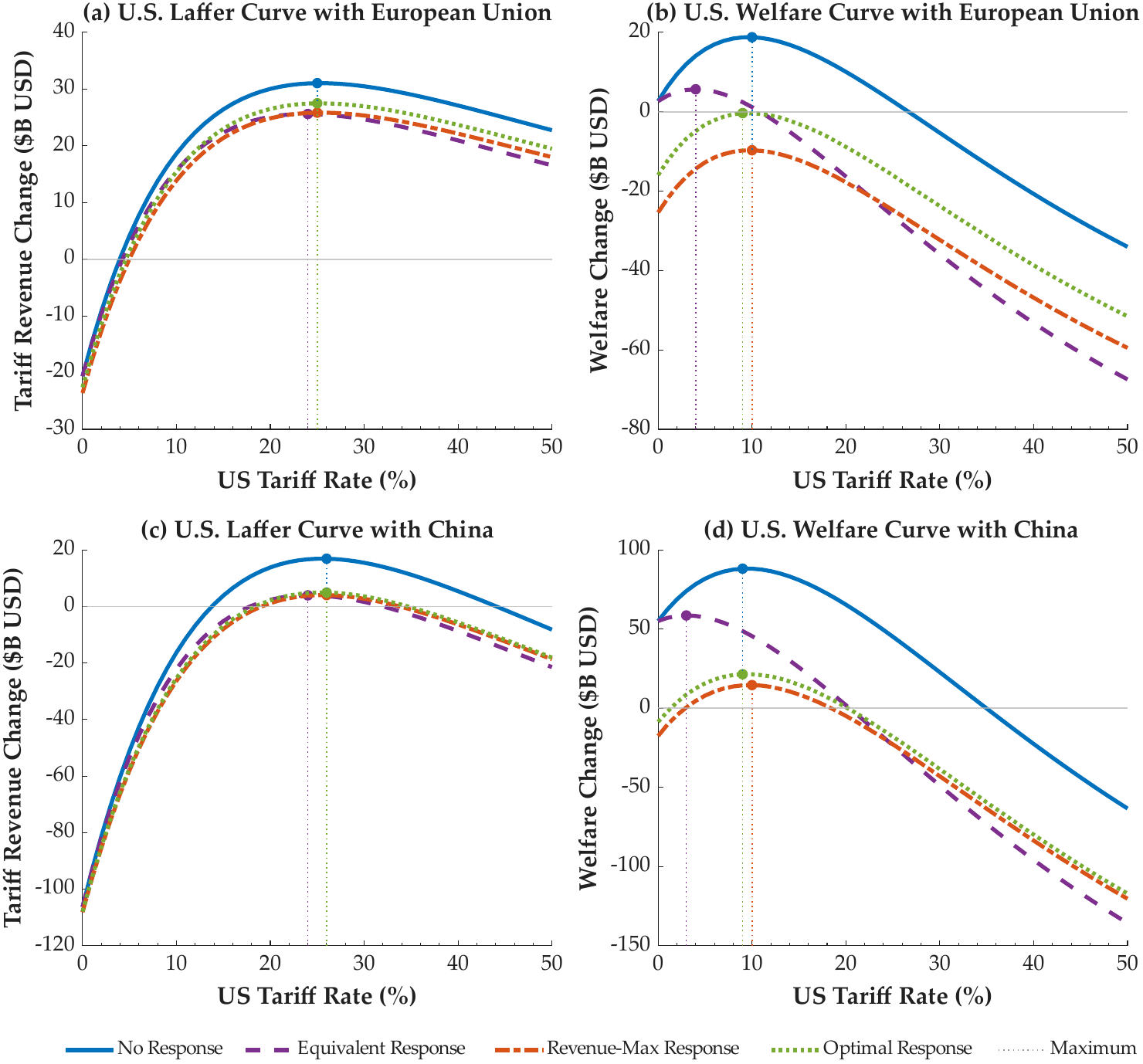}
\end{figure}

\subsection{U.S.\ Laffer Curves with Other Trading Partners}
\label{app:additional_laffers}
Figures \ref{fig:laffer_partners_1} and \ref{fig:laffer_partners_2} present the U.S.\ bilateral tariff revenue Laffer curves and U.S.\ welfare curves for trading partners other than the European Union and China, which are referenced in Section \ref{sec:bilateral_laffer}.  Table \ref{tbl:laffer_peaks} presents the Laffer peak tariff rate and welfare peak tariff rate for each trading partner, as well as the tariff revenue generated by these tariffs.

\begin{figure}[htbp]
    \centering
    \caption{U.S.\ Tariff Laffer Curves and Welfare Curves: Other Partners}
    \label{fig:laffer_partners_1}
    \includegraphics[width=\textwidth]{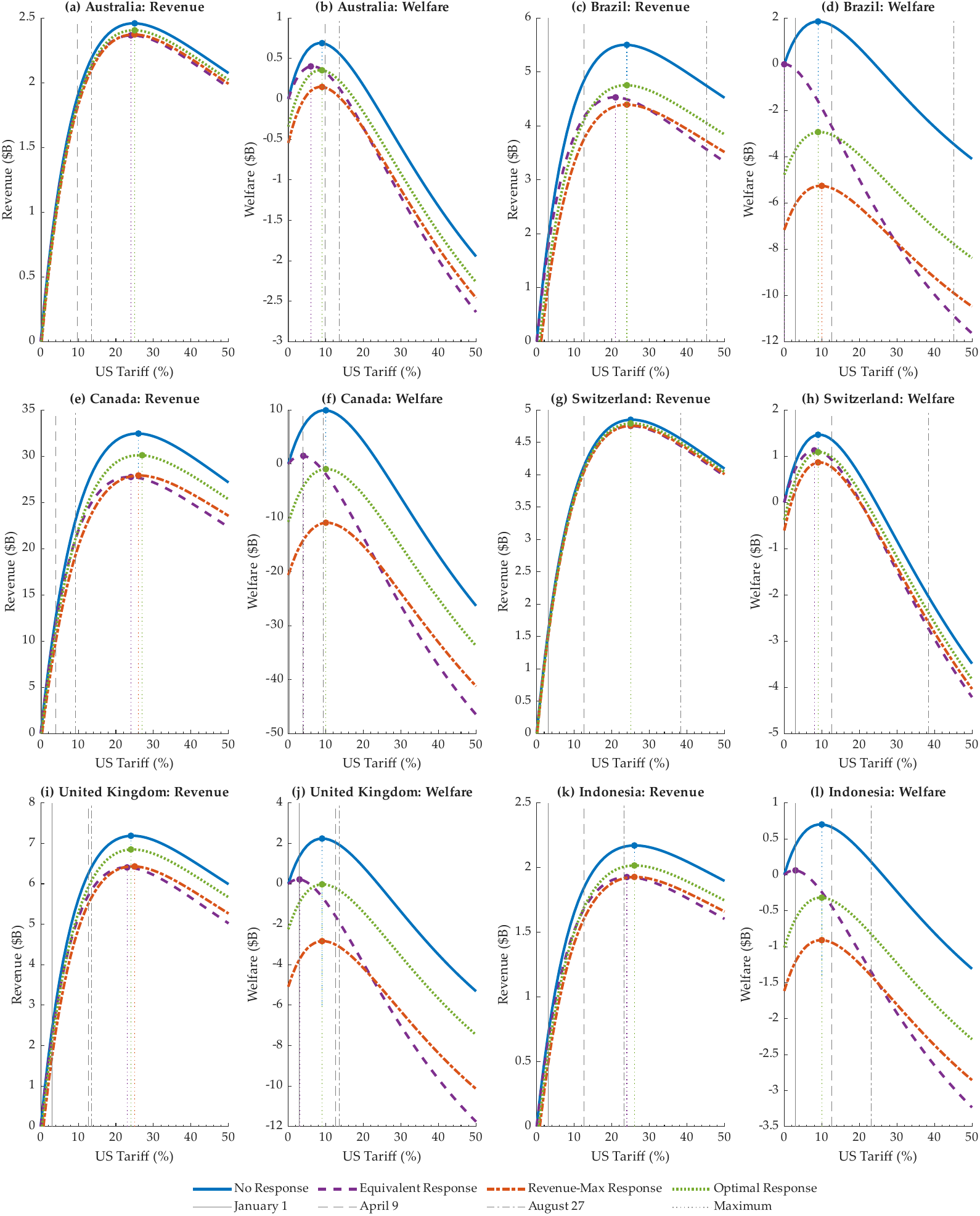}
\end{figure}

\begin{figure}[htbp]
    \centering
    \caption{U.S.\ Tariff Laffer Curves and Welfare Curves: Remaining Partners}
    \label{fig:laffer_partners_2}
    \includegraphics[width=\textwidth]{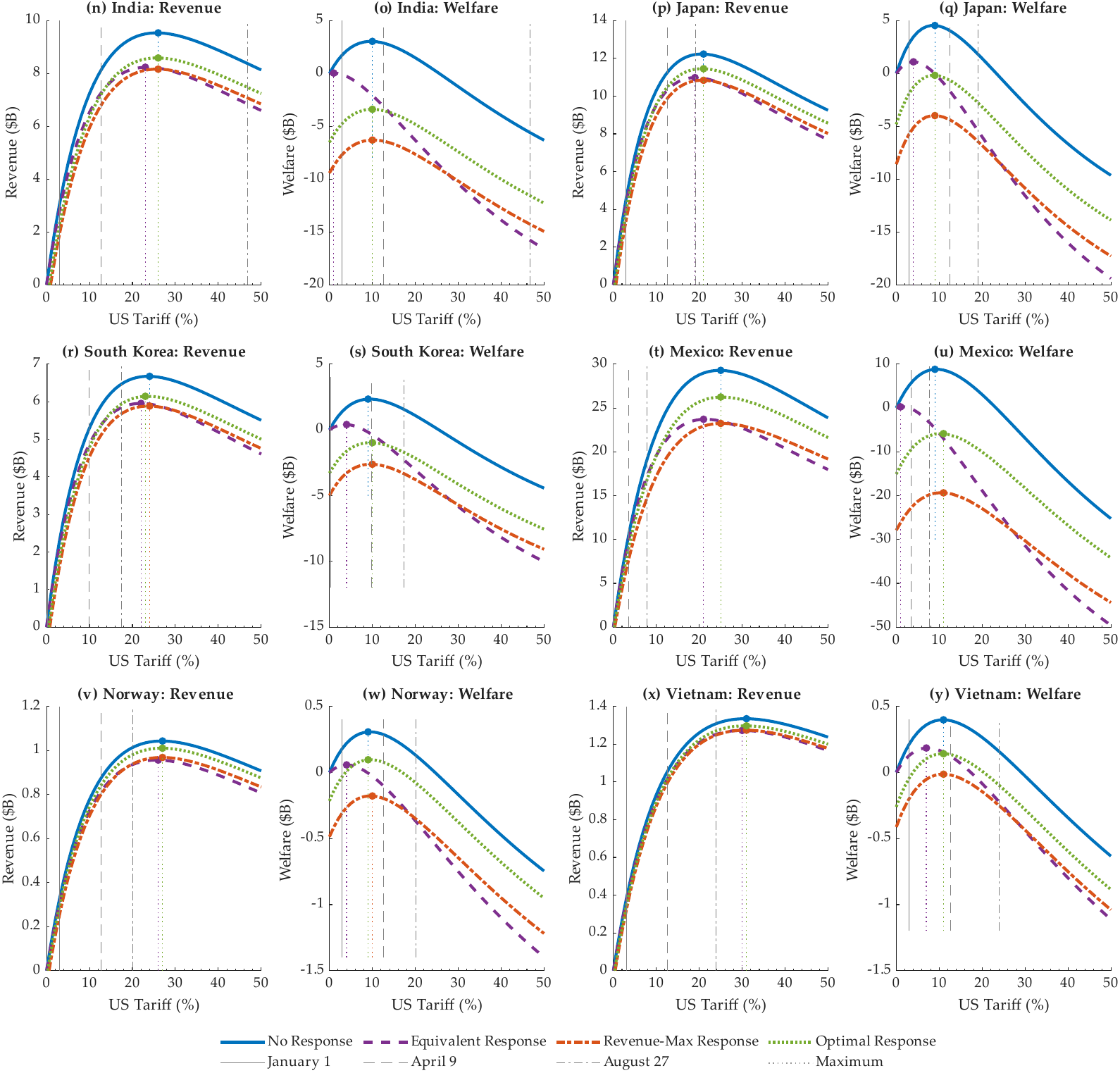}
\end{figure}

\begin{table}[h!]
\centering
\caption{Laffer Curve Peaks: No Retaliation vs Equivalent Retaliation}
\label{tbl:laffer_peaks}
\small
\begin{tabular}{lccccc|ccccc}
\toprule
& \multicolumn{5}{c}{\textbf{No Retaliation}} & \multicolumn{5}{c}{\textbf{Equivalent Retaliation}} \\
\cmidrule(lr){2-6} \cmidrule(lr){7-11}
& \textbf{Welfare} & \textbf{Laffer} & \multicolumn{3}{c}{\underline{\textbf{At US Laffer Peak (\$B)}}} & \textbf{Welfare} &  \textbf{Laffer}& \multicolumn{3}{c}{\underline{\textbf{At US Laffer Peak (\$B)}}} \\
 & \textbf{Peak} & \textbf{Peak} & \textbf{US} & \textbf{US} & \textbf{Ptnr} & \textbf{Peak} & \textbf{Laffer} & \textbf{US} & \textbf{US} & \textbf{Ptnr} \\
\textbf{Partner}& \textbf{$\tau$(\%)} & \textbf{$\tau$ (\%)} & \textbf{Rev} & \textbf{Wel.} & \textbf{Wel.} & \textbf{$\tau$(\%)} & \textbf{$\tau$(\%)} & \textbf{Rev.} & \textbf{Wel.} & \textbf{Wel.} \\
\midrule
AUS & 9 & 25 & 2.5 & -0.2 & -2.4 & 6 & 24 & 2.4 & -0.7 & -2.4 \\
BRA & 9 & 24 & 5.5 & 0.0 & -7.9 & 0 & 21 & 4.5 & -5.3 & -4.2 \\
CAN & 10 & 26 & 32.5 & -1.9 & -33.1 & 4 & 24 & 27.8 & -18.8 & -33.4 \\
CHE & 9 & 25 & 4.8 & -0.1 & -5.1 & 8 & 25 & 4.8 & -0.7 & -5.0 \\
CHN & 9 & 26 & 123.2 & -14.3 & -238.2 & 3 & 24 & 110.3 & -73.3 & -151.9 \\
EUU & 10 & 25 & 51.7 & -0.0 & -59.7 & 4 & 24 & 46.3 & -26.9 & -52.0 \\
GBR & 9 & 24 & 7.2 & -0.0 & -6.7 & 3 & 23 & 6.4 & -4.9 & -7.8 \\
IDN & 10 & 26 & 2.2 & 0.0 & -2.8 & 3 & 24 & 1.9 & -1.4 & -2.3 \\
IND & 10 & 26 & 9.5 & -0.1 & -12.8 & 1 & 23 & 8.2 & -7.6 & -8.2 \\
JPN & 9 & 21 & 12.2 & 1.0 & -14.2 & 4 & 19 & 11.0 & -5.4 & -12.4 \\
KOR & 9 & 24 & 6.7 & 0.3 & -9.1 & 4 & 22 & 6.0 & -3.6 & -6.8 \\
MEX & 9 & 25 & 29.3 & -1.8 & -27.5 & 1 & 21 & 23.7 & -20.2 & -24.1 \\
NOR & 9 & 27 & 1.0 & -0.1 & -1.4 & 4 & 26 & 1.0 & -0.6 & -1.6 \\
TUR & 10 & 27 & 2.6 & 0.0 & -3.1 & 6 & 26 & 2.4 & -0.9 & -3.2 \\
VNM & 11 & 31 & 1.3 & -0.1 & -1.6 & 7 & 30 & 1.3 & -0.4 & -1.5 \\
\bottomrule
\end{tabular}
\end{table}

\clearpage
\subsection{MFEI for Optimal Tariffs}
\label{sec-app:MFEI_opt_response}
Figure \ref{fig:MEB_opt_response} presents the Marginal Excess Burden and Marginal Fiscal Efficiency Index from section \ref{sec:MEB}. Figure \ref{fig:MEB_no_response_alpha50} presents the counterpart to Figure \ref{fig:MEB_no_response}, with $\alpha=0.5$.  Note that increasing $\alpha$ shifts the MFEI curve to the left within the Trade-Off zone, as revenue becomes more highly valued relative to welfare; however, the Free-Lunch and Beyond-Laffer zones are not affected.

\begin{figure}[htbp]
    \centering
    \caption{Marginal Welfare Cost Measures of U.S.\ Tariff Revenue \\(with Optimal Partner Response)}
    \label{fig:MEB_opt_response}
    \includegraphics[width=\textwidth]{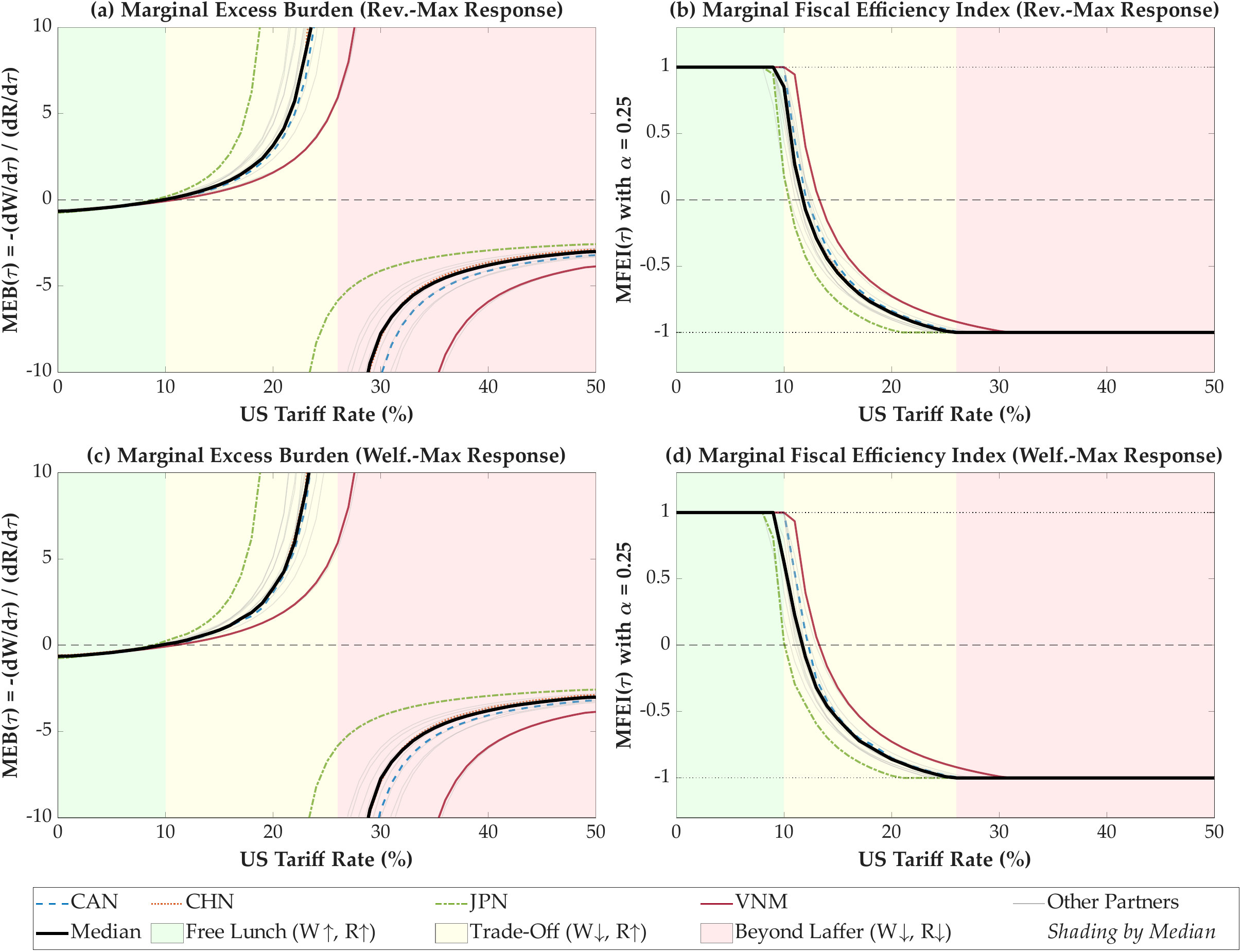}
\end{figure}

\begin{figure}[htbp]
    \centering
    \caption{Marginal Welfare Cost Measures of U.S.\ Tariff Revenue (with $\alpha = 0.50$)}
    \label{fig:MEB_no_response_alpha50}
    \includegraphics[width=\textwidth]{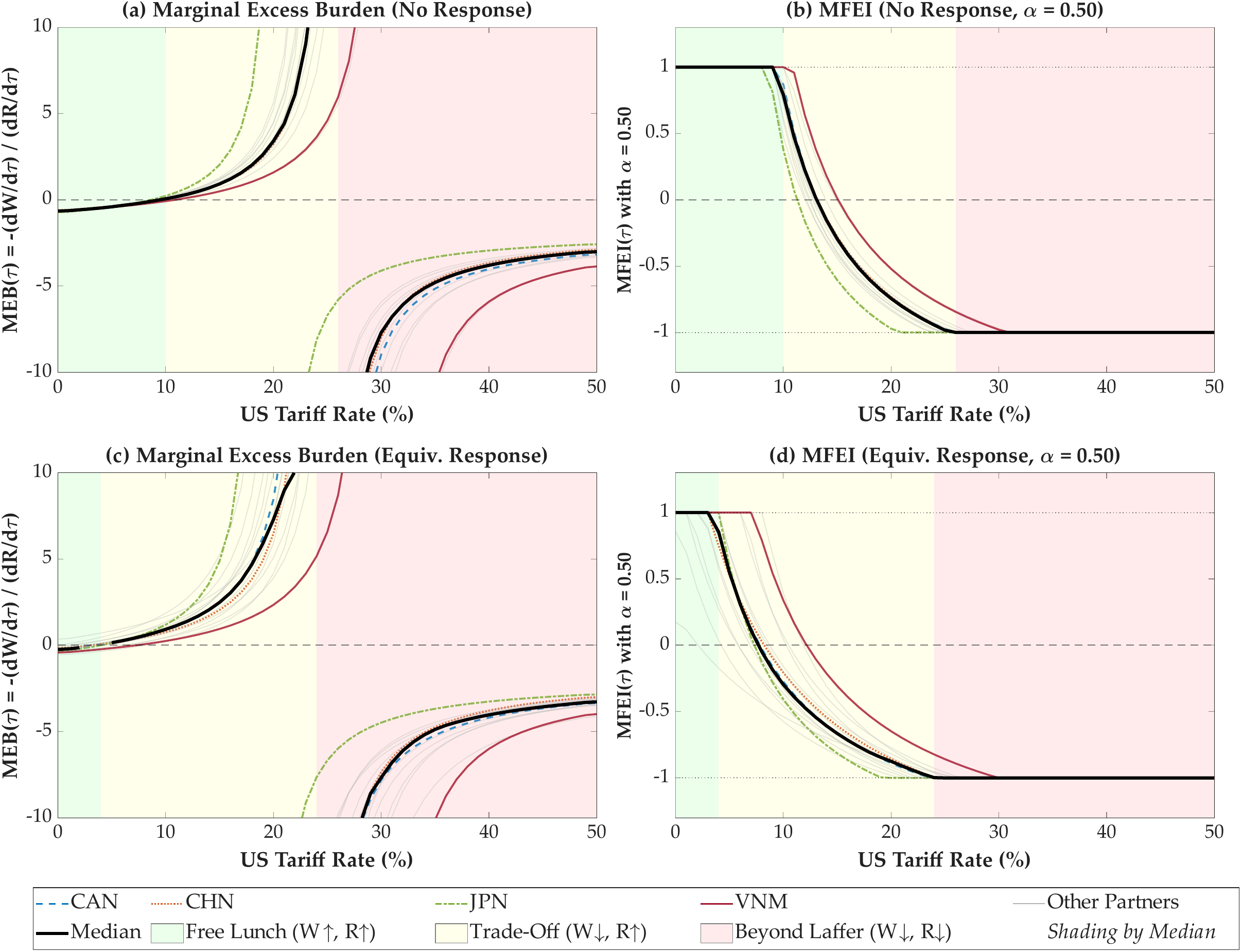}
\end{figure}

\subsection{Cumulative Laffer Curves with Optimal Partner Responses}
\label{sec-app:cumulative_opt_response}
We evaluate the Laffer curve for the U.S.\ in a trade war with multiple countries, assuming that partner countries respond with uniform tariffs on imports from the U.S.\ that maximize their own welfare or revenue.  For each U.S.\ tariff level, partners choose their retaliatory tariffs sequentially in a fixed order, with each partner maximizing its own welfare taking as given the U.S. tariff rate and the tariff rates of all partners that have already moved. This is a one-pass procedure, where earlier movers do not re-optimize after later partners choose, and the resulting tariff profile represents a sequential best-response path rather than a simultaneous-move Nash equilibrium of the multi-partner game.  We verified that a second pass has a negligible effect on optimal response rates for a trade war involving all U.S.\ trading partners, and that the cumulative Laffer curves are virtually unchanged.  In these exercises, partners are not constrained to place the same uniform tariff rate on the United States, and we assume that each partner takes into account only their own welfare or revenue when setting their tariff.

A first-order effect of adding partners to the Trade War is that they raise retaliatory tariffs even when the U.S.\ tariff remains at zero.  Figure~\ref{fig:cum_Laffer_4panel_adaptive} isolates the effect of U.S.\ tariff rate increases on a trade war involving the world, by showing changes relative to the counterfactual trade war with U.S.\ tariff rates at zero.  Figure~\ref{fig:cum_Laffer_4panel_adaptive_ftBase} displays cumulative Laffer and welfare curves when partners respond with welfare-maximizing tariffs and with revenue-maximizing tariffs, with all changes reported relative to a U.S.\ Free Trade Baseline with all partners; and Figure~\ref{fig:cum_Laffer_4panel_adaptive_Jan1Base} shows the same results relative to January 1, 2025 baseline tariffs.  Note that the effect of a marginal 1 percent increase in U.S.\ tariff rates is the same in each of these three figures.  The dotted line corresponding to the maximum for the Welfare-Maximizing Response is an approximation of the Nash Equilibrium.  These results show that although a global trade war involving all trading partners is worse for the U.S.\ than a free-trade equilibrium, in general, welfare-maximizing responses by individual partners do not appear to be a strong deterrent relative to January 1, 2025 baseline tariffs.

\begin{figure}[htbp]
    \centering
    \caption{U.S.\ Laffer and Welfare Curves: Multiple Trading Partners \\ \emph{\small{Changes Relative to Partners' Optimal Response with U.S.\ Tariff Rate of Zero}}}
    \label{fig:cum_Laffer_4panel_adaptive}
    \includegraphics[width=\textwidth]{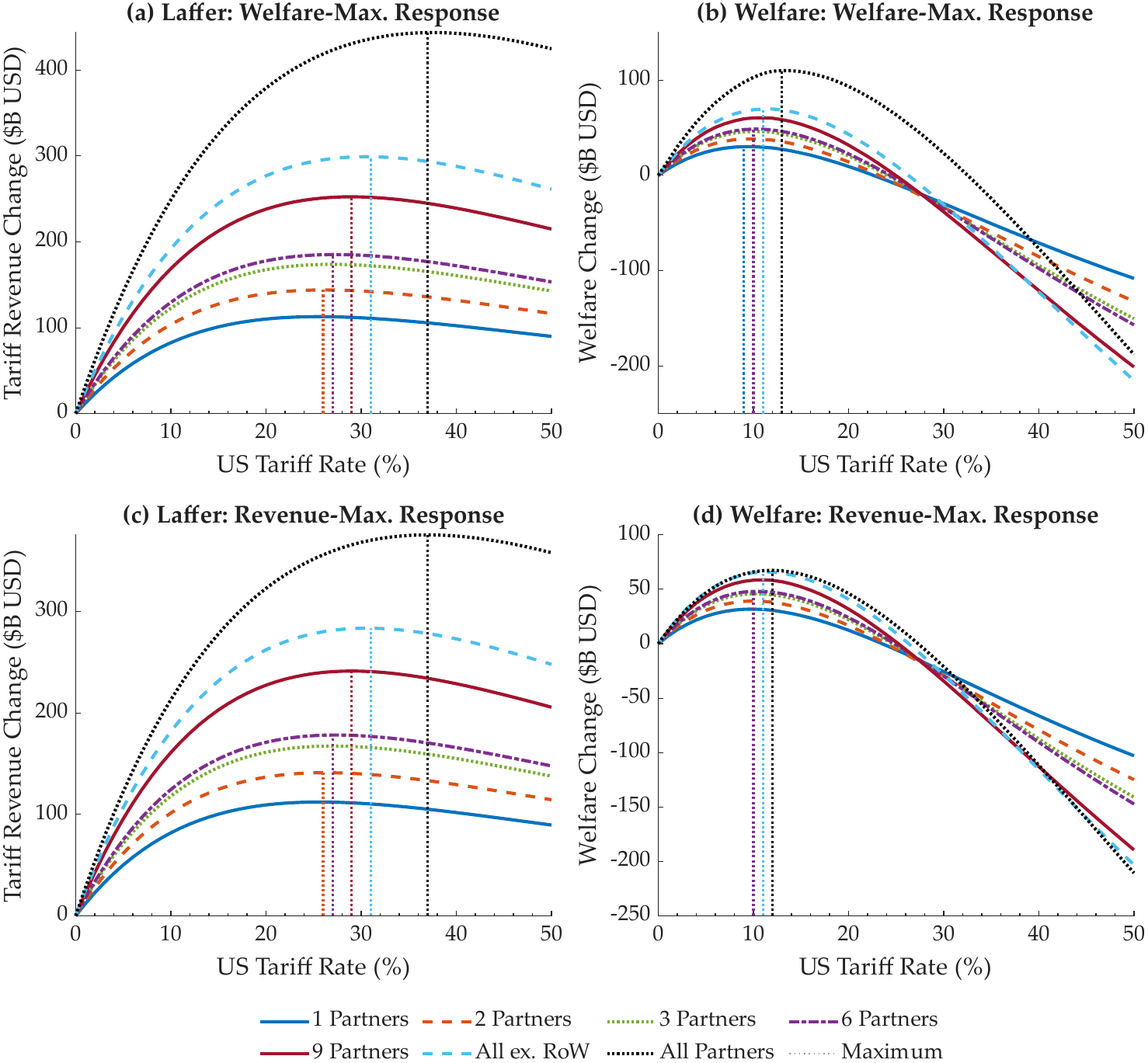}
    \vspace{0.5em}
    \begin{minipage}{\textwidth}
        \begin{singlespacing}
            \fontsize{8.5}{10}\selectfont
            \noindent\textit{Notes:} ``\# Partners'' indicates the number of partners on which the U.S.\ imposes tariffs and from which it potentially faces retaliation. Partner order: (1) CHN (2) CAN (3) MEX (4) VNM (5) TUR (6) KOR (7) IND (8) IDN (9) EUU (10) BRA (11) NOR (12) CHE (13) JPN (14) GBR (15) AUS (16) Rest of World.   Results robust to alternative orderings.
        \end{singlespacing}
    \end{minipage}
\end{figure}

\begin{figure}[htbp]
    \centering
    \caption{U.S.\ Laffer and Welfare Curves: Multiple Trading Partners \\ \emph{\small{Changes Relative to U.S.\ Free Trade Baseline with All Partners}}}
    \label{fig:cum_Laffer_4panel_adaptive_ftBase}
    \includegraphics[width=\textwidth]{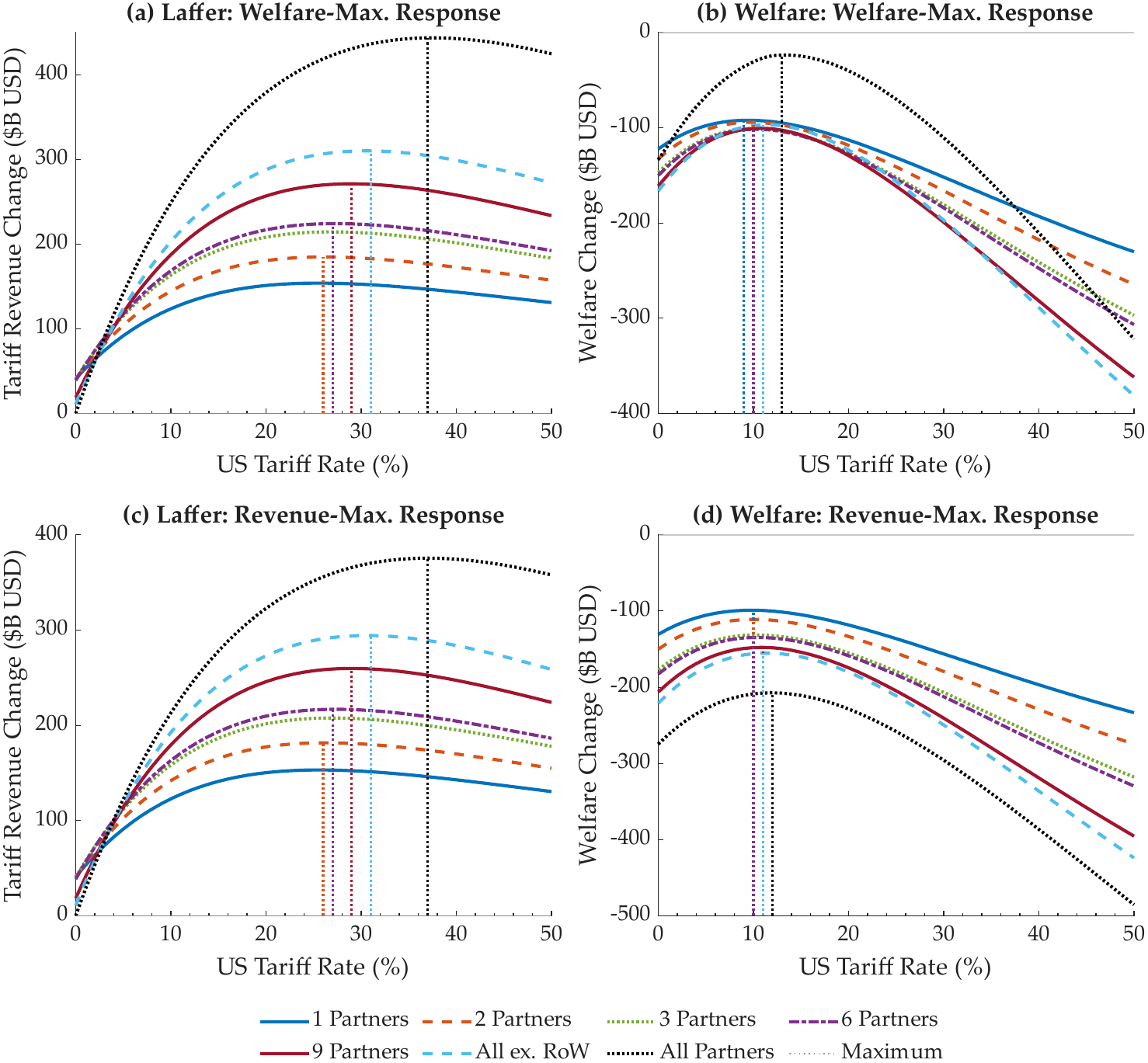}
    \vspace{0.5em}
    \begin{minipage}{\textwidth}
        \begin{singlespacing}
            \fontsize{8.5}{10}\selectfont
        \end{singlespacing}
    \end{minipage}
\end{figure}

\begin{figure}[htbp]
    \centering
    \caption{U.S.\ Laffer and Welfare Curves: Multiple Trading Partners \\ \emph{\small{Changes Relative to January 1, 2025 Baseline Tariff Rates}}}
    \label{fig:cum_Laffer_4panel_adaptive_Jan1Base}
    \includegraphics[width=\textwidth]{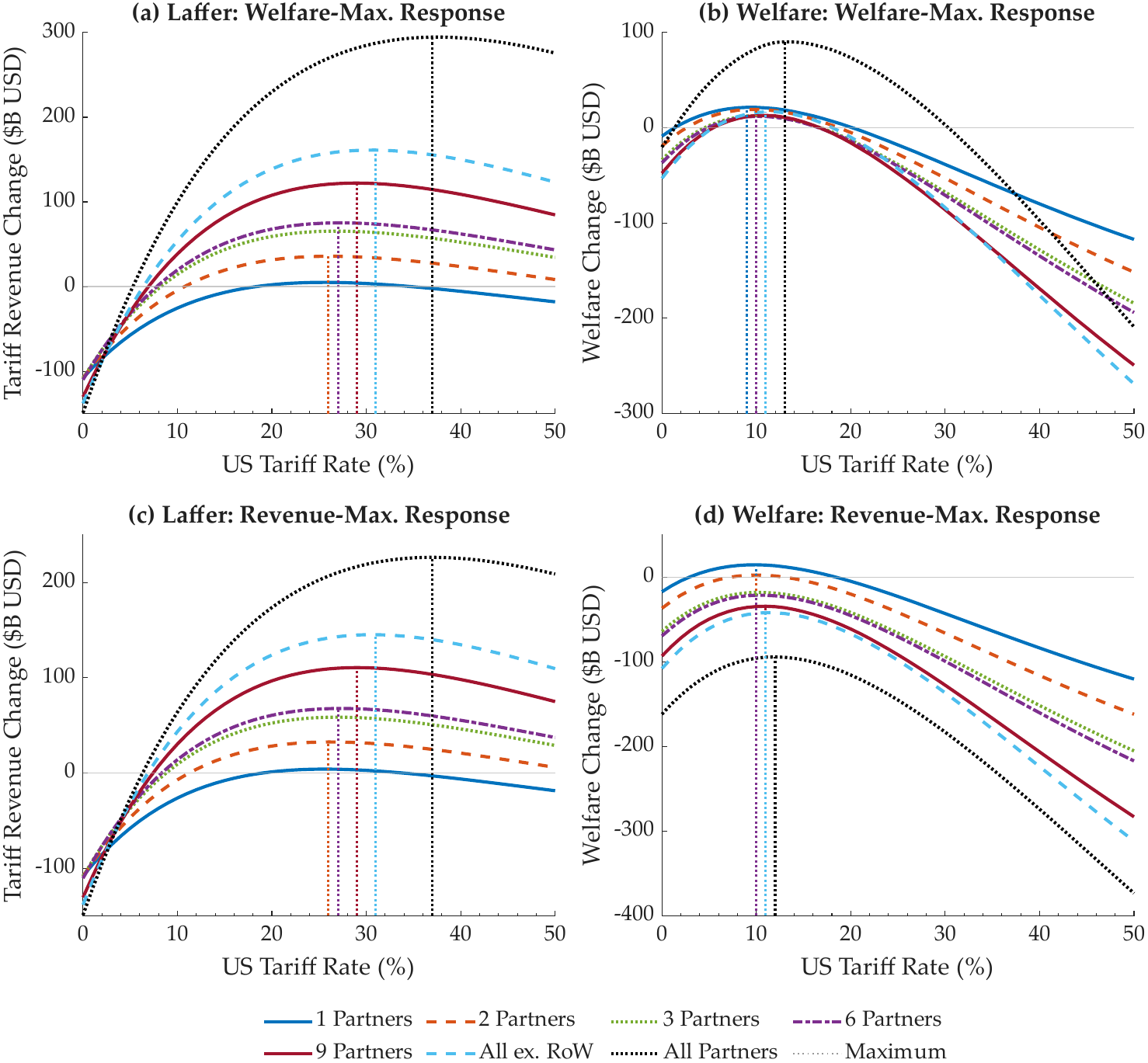}
    \vspace{0.5em}
    \begin{minipage}{\textwidth}
        \begin{singlespacing}
            \fontsize{8.5}{10}\selectfont
        \end{singlespacing}
    \end{minipage}
\end{figure}

\clearpage
\section{Calibration and Computational Details}
\label{app:calibration_details}

This appendix documents the conventions, parameter values, and computational details underlying the calibration and quantitative exercises.

\subsection{Country Aggregation}

The model contains $N=17$ regions. Eora26 reports individual countries; we aggregate as follows.  EU-27 member states (Austria, Belgium, Bulgaria, Croatia, Cyprus, Czech Republic, Denmark, Estonia, Finland, France, Germany, Greece, Hungary, Ireland, Italy, Latvia, Lithuania, Luxembourg, Malta, Netherlands, Poland, Portugal, Romania, Slovakia, Slovenia, Spain, Sweden) are summed into a single region.  We aggregate bilateral trade flows and input-output flows by summing values across countries within a region.  Table~\ref{tbl:regions} lists the 17 regions.

\begin{table}[htbp]
\centering
\caption{Model Regions}
\label{tbl:regions}
\begin{small}
\begin{tabular}{clcl}
\toprule
Index & Region & Index & Region \\
\midrule
1 & United States (USA) & 10 & Turkey (TUR) \\
2 & Canada (CAN) & 11 & Switzerland (CHE) \\
3 & Mexico (MEX) & 12 & Norway (NOR) \\
4 & China (CHN) & 13 & Brazil (BRA) \\
5 & European Union, EU-27 (EUU) & 14 & India (IND) \\
6 & United Kingdom (GBR) & 15 & Indonesia (IDN) \\
7 & Japan (JPN) & 16 & Vietnam (VNM) \\
8 & South Korea (KOR) & 17 & Rest of World (ROW) \\
9 & Australia (AUS) & & \\
\bottomrule
\end{tabular}
\end{small}
\end{table}

\subsection{Sector Aggregation}

We aggregate the 26 Eora sectors into $S = 15$ model sectors: 10 tariffable tradable-goods sectors and 5 non-tariffable, but still tradable, service sectors.  Eora sector~26 (Re-exports and Re-imports) is dropped.  Table~\ref{tbl:sectors} reports the mapping.

\begin{table}[htbp]
\centering
\caption{Sector Aggregation: Eora26 to Model Sectors}
\label{tbl:sectors}
\begin{small}
\begin{tabular}{clccc}
\toprule
$s$ & Name & Eora Codes & ISIC Rev.~3 & Tariffable \\
\midrule
1 & Agriculture \& Fishing & 1, 2 & 1, 2, 5 & Yes \\
2 & Mining \& Quarrying & 3 & 10--14 & Yes \\
3 & Food \& Beverages & 4 & 15, 16 & Yes \\
4 & Textiles \& Wearing Apparel & 5 & 17--19 & Yes \\
5 & Wood \& Paper & 6 & 20--22 & Yes \\
6 & \shortstack{Petroleum, Chemicals, \& Minerals} & 7 & 23--26 & Yes \\
7 & Metal Products & 8 & 27, 28 & Yes \\
8 & Electrical \& Machinery & 9 & 29--33 & Yes \\
9 & Transport Equipment & 10 & 34, 35 & Yes \\
10 & Other Manufacturing incl.\ Recycling & 11, 12 & 36, 37 & Yes \\
\midrule
11 & Utilities \& Construction & 13, 14 & 40, 41, 45 & No \\
12 & Trade \& Hospitality & 15--18 & 50--52, 55 & No \\
13 & Transport \& Communication & 19, 20 & 60--64 & No \\
14 & Finance \& Business Services & 21 & 65--67, 70--74 & No \\
15 & Public, Personal, \& Other Services & 22--25 & 75, 80, 85, 90--93, 95, 99 & No \\
\bottomrule
\end{tabular}
\end{small}
\end{table}

\subsection{Trade Elasticities}

Table~\ref{tbl:elasticities} reports the sector-specific trade elasticities.  All sectoral elasticities are from \citet{Chen2023Fragmentation}.  

\begin{table}[htbp]
\centering
\caption{Trade Elasticities by Sector}
\label{tbl:elasticities}
\begin{small}
\begin{tabular}{clr}
\toprule
$s$ & Name & $\theta^s$ \\
\midrule
1 & Agriculture and Fishing & 2.910 \\
2 & Mining and Quarrying & 3.410 \\
3 & Food and Beverages & 4.170 \\
4 & Textiles and Wearing Apparel & 4.710 \\
5 & Wood and Paper & 8.505 \\
6 & Petroleum, Chemicals, Non-Metallic Min. & 6.443 \\
7 & Metal Products & 5.805 \\
8 & Electrical and Machinery & 4.753 \\
9 & Transport Equipment & 8.955 \\
10 & Other Manufacturing incl.\ Recycling & 4.060 \\
11--15 & All service sectors & 8.350 \\
\bottomrule
\end{tabular}
\end{small}
\end{table}

\subsection{Tariff Construction}
Notation: $t_{ij}^s$ denotes the ad-valorem tariff rate (e.g., $t = 0.25$ for a 25\% tariff) and $\tau_{ij}^s = 1 + t_{ij}^s$ denotes the gross tariff factor.  Subscript $i$ denotes the destination and $j$ the source, so $\tau_{ij}^s$ is the tariff imposed by $i$ on imports from $j$ in sector $s$.

Baseline source-destination-sector tariff rates are assembled from three sources.  Tariff rates are collected for 6-digit Harmonized System (HS) product codes in the referenced revision for each partner.  
\begin{enumerate}
\item \textbf{MacMap baseline.} Bilateral ad-valorem rates from the CEPII MacMap-HS6 database.  We use the 2019 revision, which contains tariffs every three years from 2001--2019.  The concordance chain is: HS2007 \(\to\)\footnote{Product to Industry concordances are from WITS: \url{https://wits.worldbank.org/product_concordance.html}}   ISIC Rev.~3 (2-digit) \(\to\) Eora sector from Table \ref{tbl:sectors}. Within each source--destination--sector group, the simple average is taken across HS6 products.
\item \textbf{2018--19 trade war.} Tariff rate increases for U.S.\ imports and exports from \citet{Fajgelbaum2019}. The concordance
  chain for these data is: HS2017 \(\to\) HS2012 \(\to\) ISIC Rev.~3 (2-digit) \(\to\) Eora
  sector. Within each group, the simple average increase is calculated as $\Delta t_{ij}^{s,\text{RtP}}$.  Tariffs for 2019 with the Trade War are added to the 2019 baseline: $t_{ij}^{s} = t_{ij}^{s,\text{MacMap}} + \Delta t_{ij}^{s,\text{RtP}}$.
\item \textbf{WTO-IMF Tariff Tracker.} Where the Tracker provides a January~1, 2025 rate for a bilateral-sector pair, it replaces the MacMap$+$trade-war rate.  The concordance chain for the WTO data is: HS2022 \(\to\)
  HS2012 \(\to\) ISIC Rev.~3 (2-digit)
  \(\to\) Eora sector. For each policy date \(d\) between January 1, 2025 and January 1, 2026, tariffs through January~1, 2026 are applied chronologically. 
\end{enumerate}
Self-tariff rates, $i=j$, and service-sector tariff rates are set to zero ($\tau_{ij}^s=1$).

\subsection{Unit Conventions}
\label{app-sec:units}
All Eora values are in billions of current-year USD.  Therefore, our baseline calibration uses values in 2023 USD.  Dollar amounts reported in the paper are scaled to 2025 USD by multiplying by $\kappa = 1.106$, the mean ratio of quarterly current-price U.S.\ GDP in 2023 to quarterly current-price U.S.\ GDP in 2025 (FRED series GDP, only Q1--Q3 were available at time of these exercises).  Welfare changes for region~$j$ are calculated as $(\hat{W}_j - 1) \times \text{GDP}_j$, where $\text{GDP}_j$ is total baseline value added in the calibrated model, converted to 2025 USD using the same scaling.  Welfare changes for a collection of regions is computed as the sum of changes across each region in the collection.  This is equivalent to a GDP-weighted average change, multiplied by the total baseline GDP.

\subsection{Quantitative Exercises}
\label{app:exercises}

This section provides additional details regarding our quantitative exercise.  All exercises use the model described in Section~\ref{sec:model}, solved by fixed-point iteration on counterfactual wage changes with the U.S.\ wage as the num\'eraire. Revenue and welfare are in billions of 2025 USD for all exercises.

\subsubsection{Bilateral Laffer Curves}

The bilateral Laffer curve exercise is the basis for Figure~\ref{fig:usa_laffer_eu_chn} (EU and China), Figure~\ref{fig:MEB_no_response}, and the robustness exercises in Supplemental Appendices \ref{app:robustness_elasticities} and \ref{app:additional_laffers}.  

For each partner $j$, we impose a uniform U.S.\ ad-valorem tariff rate $t$ on imports from $j$ in all 10 tariffable sectors, varying $t$ from 0\% to 50\% in 1-percentage-point increments (51 grid points).  All other bilateral tariffs are held at their January~1, 2025 baseline values.  The reference equilibrium sets bilateral free trade between the U.S.\ and $j$ ($\tau_{\text{US},j}^s = \tau_{j,\text{US}}^s = 1$ for all tariffable sectors), with all other tariff wedges unchanged, to avoid differences in bilateral Laffer curves being driven primarily by differences in baseline U.S.\ tariff rates.  Revenue and welfare changes are measured relative to this bilateral free-trade reference, unless explicitly mentioned otherwise.

For each tariff rate $t$, four partner response scenarios are computed:
\begin{enumerate}
\item \textbf{No response}: partner $j$ does not change its tariffs on U.S.\ exports.  
\item \textbf{Equivalent response}: partner $j$ matches the U.S.\ tariff rate $t$ on its imports from the U.S.
\item \textbf{Revenue-maximizing response}: for each U.S.\ rate $t$, partner $j$ sets a uniform tariff on U.S.\ exports that maximizes $j$'s tariff revenue, searched over a grid of $t$ from 0\% to 50\% in 0.1\% increments.
\item \textbf{Welfare-maximizing response}: for each U.S.\ rate $t$, partner $j$ sets a uniform tariff on U.S.\ exports that maximizes $j$'s welfare, searched over a grid of $t$ from 0\% to 50\% in 0.1\% increments.
\end{enumerate}

The dotted vertical lines in each figure mark the uniform tariff rate that maximizes U.S.\ tariff revenue (Laffer peaks) and U.S.\ welfare under each response scenario, on the corresponding plot.  The welfare-maximizing Nash equilibrium under uniform tariffs therefore corresponds to the dotted line on the Welfare plot for the Welfare-maximizing response.

\subsubsection{Cumulative Laffer Curves}

Figure~\ref{fig:cum_Laffer_4panel} varies a single uniform U.S.\ tariff rate $t$ (0--50\%, 1-pp increments) against an expanding set of partners.  Partners are added cumulatively in the order: CHN, CAN, MEX, VNM, TUR, KOR, IND, IDN, EUU, BRA, NOR, CHE, JPN, GBR, AUS, ROW, determined by bilateral MEB at $t = 0$.  For ``$k$ Partners,'' the U.S.\ imposes rate $t$ on imports from the first $k$ partners in this ordering across all tariffable sectors; all other tariffs are held at January~1, 2025 values.  To avoid conflating changes due to U.S.\ tariff rates with changes as the scope of targeted partners change, we report all changes relative to a free trade baseline between the U.S.\ and all partner countries, including for scenarios in which the U.S.\ engages in a trade war for only a subset of countries.

\subsubsection{Marginal Welfare Cost Measures}

The MEB and MFEI curves in Figure~\ref{fig:MEB_no_response} (and Figures~\ref{fig:MEB_opt_response}--\ref{fig:MEB_no_response_alpha50}) are computed from the same exercise as for the bilateral Laffer curve.  Centered finite differences yield approximations for $dW/dt$ and $dR/dt$ at interior grid points.  The marginal excess burden is $\text{MEB}(t) = -(dW/dt)/(dR/dt)$, and the marginal fiscal efficiency index is
\begin{equation*}
\text{MFEI}(t) = \frac{dW/dt + \alpha \cdot dR/dt}{|dW/dt| + \alpha \cdot |dR/dt|},
\end{equation*}
with $\alpha = 0.25$.  Individual partner curves are plotted for all 15 partners (excluding ROW), along with a line representing the median value across partners at each tariff level.  Background shading reflects three zones as defined by the median curve: $dW/dt \geq 0$ and $dR/dt > 0$ (Free-Lunch); $dW/dt < 0$ and $dR/dt > 0$ (Trade-Off); $dW/dt < 0$ and $dR/dt \leq 0$ (Beyond-Laffer).  We suppress the lines connecting the extreme positive and extreme negative values for MEB around its asymptote at the Laffer Peak.

\subsubsection{Timeline of the 2025 U.S.\ Trade War}
\label{app-sec:2025_timeline}
The reference point for Table~\ref{tbl:us_tariff_timeline} is the 2016 MacMap baseline.  The model is calibrated to January~1, 2025 trade flows, with the 2016 equilibrium computed as a counterfactual by imposing 2016 MacMap tariffs.  All changes are reported relative to this 2016 equilibrium using the scaling described in Section~\ref{app-sec:units}.

Each row in the tariff timeline solves two equilibria for a given policy date $d$: one with only U.S.\ tariff changes (partner tariffs at 2016 levels), and one with all countries' tariffs at date $d$.  The ``U.S.\ Only'' and ``$+/-$ w/ Retal.'' columns report these respectively, with the retaliation column showing the incremental change.  These counterfactuals should not be interpreted as estimates of the actual effects on the U.S.\ economy of tariff rates at each date, but rather estimates of what we would have expected the long term annual impact of these tariff rates to be, had they remained fixed at these levels.

Trade-weighted average tariff rates on U.S.\ imports and exports use counterfactual (post-tariff equilibrium) trade shares as weights:
\begin{equation*}
\bar{t}_{\text{imp}} = \frac{\sum_{j \neq \text{US}} \sum_{s \in S_T} (\tau_{\text{US},j}^s - 1) \cdot \pi_{\text{US},j}^s X_{\text{US}}^s}{\sum_{j \neq \text{US}} \sum_{s \in S_T}  \pi_{\text{US},j}^s X_{\text{US}}^s},
\end{equation*}
where $S_T = \{1,\ldots,10\}$ denotes tariffable sectors, evaluated at the equilibrium with retaliation.  The export-side formula is analogous with subscripts reversed.

The bilateral decomposition panel solves, for each named partner $j$, an equilibrium where only tariffs between the U.S.\ and $j$ change from 2016 to the final date, with all other bilateral tariffs at the calibration baseline.  The ``Global Welfare Cost per \$ of U.S.\ Tariff Revenue'' column reports $(|\Delta W_{\text{US}}| + |\Delta W_{\text{RoW}}|) / \Delta R_{\text{US}}$, where $\Delta W_{\text{RoW}}$ includes welfare changes for all non-U.S.\ countries, not only the named partner.  ``Global'' includes the U.S.\ and is used distinctly from ``All'', which includes all U.S.\ partner countries, but excludes the U.S.

\subsection{Inverse-Optimum Specification Details}
\label{app:invopt_details}

\subsubsection{Baselines and Perturbations}

The perturbation exercise (Section~\ref{sec:inverse_optimum}) is run separately for 46 baselines: MacMap tariffs at 2001, 2004, 2007, 2010, 2013, 2016, and 2019 (with 2018 trade war), plus each of the 39 WTO tariff action dates.  At each baseline, and for each bilateral sector-level tariff $\tau_{\text{US},j}^s$, the tariff factor is perturbed up by $\Delta\tau = 0.001$ and the full general-equilibrium response is solved.  This yields one observation per (date $d$, partner $j$, sector $s$) triple, for $15 \times 10 = 150$ observations per baseline (15 partners, excluding ROW, $\times$ 10 tariffable sectors).

The shadow wedge for observation $(j, s, d)$ is
\begin{equation*}
\tilde{\omega}_{j,s,d} = -\frac{\Delta W_{\text{US}}}{\Delta\tau} - \alpha \cdot \frac{\Delta T_{\text{US}}}{\Delta\tau},
\end{equation*}
with $\alpha = 0.25$.  Each observation is classified into one of three MFEI zones: Free-Lunch ($\Delta W_{\text{US}}/\Delta\tau \geq 0$, $\Delta T_{\text{US}}/\Delta\tau > 0$), Trade-Off ($\Delta W_{\text{US}}/\Delta\tau < 0$, $\Delta T_{\text{US}}/\Delta\tau > 0$), or Beyond-Laffer ($\Delta W_{\text{US}}/\Delta\tau < 0$, $\Delta T_{\text{US}}/\Delta\tau \leq 0$).

\subsubsection{Geopolitical Groups}
\label{app:geopolitical_groups}

Partners are assigned to geopolitical groups $g(j)$ based on ideal point estimates for the United Nations 78th General Assembly, which ran from September 2023 through September 2024, from \citet{Baileyetal2017}.\footnote{\url{https://dataverse.harvard.edu/dataset.xhtml?persistentId=doi:10.7910/DVN/LEJUQZ}} We use an ideal point cutoff of $0.65$ (where higher scores indicate closer alignment with the United States) to separate Aligned and Not-Aligned regions, with China and USMCA (CAN, MEX) treated separately.
\begin{itemize}
\item China: CHN.
\item USMCA: CAN, MEX.
\item Aligned: EUU, GBR, JPN, KOR, AUS, CHE, NOR.
\item Not-Aligned: TUR, BRA, IND, IDN, VNM.
\end{itemize}

\subsubsection{Regime Periods}

The 46 baselines are assigned to regime periods $r(d)$ as shown in Table~\ref{tbl:regime_periods}.  Dates not mapped into a regime period are excluded.  Figure~\ref{fig:invopt_timeseries} reports estimates separately for each individual date rather than pooled into these periods.

\begin{table}[htbp]
\centering
\caption{Inverse-Optimum Date Regime Periods}
\label{tbl:regime_periods}
\begin{small}
\begin{tabular}{llr}
\toprule
Period & Baselines & Dates  \\
\midrule
2010-2016 & MacMap 2010, 2013, and 2016 & 3  \\
2019 Trade War & MacMap 2019 with U.S.-China trade war tariffs & 1  \\
Apr 2025--May 2025 & WTO dates Apr~5, 2025 -- May~13, 2025 & 5 \\
Aug 2025--Jan 2026 & WTO dates Aug~17, 2025 -- Jan~1, 2026 & 14 \\
\bottomrule
\end{tabular}
\end{small}
\end{table}

\end{document}